\documentclass{article}

\usepackage[nonatbib, preprint]{neurips_2021}

\usepackage[utf8]{inputenc}
\usepackage{amsmath}
\usepackage{amsfonts}
\usepackage{amsthm}
\usepackage{amssymb}
\usepackage[colorlinks=true,allcolors=blue]{hyperref}%
\usepackage{cleveref}
\usepackage{algorithm}
\usepackage{bbm}
\usepackage{ifmtarg}
\usepackage{comment}
\usepackage{multirow}
\usepackage[noend]{algpseudocode}
\usepackage{soul}
\usepackage{graphicx}
\usepackage{bm}
\usepackage[dvipsnames]{xcolor}
\usepackage{microtype}




\newtheorem*{theorem*}{Theorem}
\newtheorem{theorem}{Theorem}[section]

\newtheorem{lemma}[theorem]{Lemma}
\newtheorem{proposition}{Proposition}[section]

\newtheorem{definition}{Definition}[section]
\newtheorem{remark}{Remark}[section]
\newtheorem{observation}{Observation}[section]

\DeclareMathOperator*{\argmax}{arg\,max}

\newcommand{\etal}{et al.\ }
\newcommand{\ceil}[1]{\lceil #1 \rceil}
\newcommand{\floor}[1]{\lfloor #1 \rfloor}

\newcommand{\eps}{\varepsilon}
\newcommand{\E}{\mathbf{E}}
\newcommand{\Var}{\mathbf{Var}}
\newcommand{\poly}{\mathrm{poly}}

\newcommand{\bcL}{{\bar{\mathcal{L}}}}
\newcommand{\bZ}{{\bar{Z}}}

\newcommand{\cA}{\mathcal{A}}
\newcommand{\cI}{{\cal I}}

\newcommand{\cH}{{\mathcal{H}}}
\newcommand{\cL}{{\mathcal{L}}}
\newcommand{\cR}{{\cal R}}

\newcommand{\id}[1]{{\mathbbm{1}_{#1}}}

\def\b1{{\bf 1}}

\makeatletter
\newcommand{\toprule}{\hrule height.8pt depth0pt \kern2pt} 
\newcommand{\midrule}{\kern2pt\hrule\kern2pt} 
\newcommand{\bottomrule}{\kern2pt\hrule\relax}
\newcommand{\algcaption}[2][]{%
  \refstepcounter{algorithm}%
  \@ifmtarg{#1}
    {\addcontentsline{loa}{figure}{\protect\numberline{\thealgorithm}{\ignorespaces #2}}}
    {\addcontentsline{loa}{figure}{\protect\numberline{\thealgorithm}{\ignorespaces #1}}}%
  \toprule
  \textbf{\fname@algorithm~\thealgorithm}\ #2\par 
  \midrule
}
\makeatother

\newcommand{\given}{\;\vert\;}
\newcommand{\prob}[1]{\textnormal{Pr}\left(#1\right)}

\definecolor{amber}{rgb}{1.0, 0.3, 0.0}


\title{Cardinality constrained submodular maximization for random streams}

\author{%
  Paul Liu\\
  Department of Computer Science\\
  Stanford University\\
  \texttt{paul.liu@stanford.edu} \\
  \And
  Aviad Rubinstein\\
  Department of Computer Science\\
  Stanford University\\
  \texttt{aviad@stanford.edu} \\
  \And
  Jan Vondr\'{a}k\\
  Department of Mathematics\\
  Stanford University\\
  \texttt{jvondrak@stanford.edu} \\
  \And
  Junyao Zhao\\
  Department of Computer Science\\
  Stanford University\\
  \texttt{junyaoz@stanford.edu} 
}


\begin{document}

\maketitle

\begin{abstract}
We consider the problem of maximizing submodular functions in single-pass streaming and secretaries-with-shortlists models, both with random arrival order.
For cardinality constrained monotone functions, Agrawal, Shadravan, and Stein~\cite{SMC19} gave a single-pass $(1-1/e-\varepsilon)$-approximation algorithm using only linear memory, but their exponential dependence on $\varepsilon$ makes it impractical even for $\varepsilon=0.1$.
We simplify both the algorithm and the analysis, obtaining an exponential improvement in the $\varepsilon$-dependence (in particular, $O(k/\varepsilon)$ memory).
Extending these techniques, we also give a simple $(1/e-\varepsilon)$-approximation for non-monotone functions in $O(k/\varepsilon)$ memory. For the monotone case, we also give a corresponding unconditional hardness barrier of $1-1/e+\varepsilon$ for single-pass algorithms in randomly ordered streams, even assuming unlimited computation. 

Finally, we show that the algorithms are simple to implement and work well on real world datasets.


\end{abstract}

\section{Introduction}
\label{sec:intro}
Over the past few decades, submodularity has become recognized as a useful property occurring in a wide variety of discrete optimization problems. Submodular functions model the property of diminishing returns, whereby the gain in a utility function decreases as the set of items considered increases. This property occurs naturally in machine learning, information retrieval, and influence maximization, to name a few (see \cite{IKBA20} and the references within).

In many settings, the data is not available in a random-access model; either for external reasons (customers arriving online) or because of the massive amount of data. When the data becomes too big to store in memory, we look for {\em streaming algorithms} that pass (once) through the data, and efficiently decide for each element whether to discard it or keep it in a small buffer in memory. 

In this work, we consider algorithms that process elements arriving in a random order. Note that the classical greedy algorithm which iteratively adds the best element~\cite{NWF78} cannot be used here, and hence we must look for new algorithmic techniques. 
The motivation for considering random element arrival comes from the prevalence of submodularity in big-data applications, in which data is often logged in batches that can be modelled as random samples from an underlying distribution.\footnote{Note that the random order assumption generalizes the typical assumption of i.i.d. sampling stream elements from a known distribution, as the random order assumption applies equally well to unknown distributions.} The problem of streaming submodular maximization has recently received significant attention both for random arrival order and the more pessimistic worst-case order arrival~\cite{SMC19, AEMHS20, AF19, BMKK14, BV14, CGQ15, FKK18, FNSZ20, HKMY20, IV19, KMZLK19, MV19, NAJS18, HTW20, S20b}. 


\subsection{Submodular functions and the streaming model}
Let $f\,:\,2^E \rightarrow \mathbb{R}_+$ be a non-negative set function satisfying $f(S \cup \{e\})-f(S)\geq f(T \cup \{e\}) - f(T)$ for all $S \subseteq T \subseteq E \setminus \{e\}$. Such a function is called \emph{submodular}. For simplicity, we assume $f(\emptyset) = 0$.\footnote{Submodular functions $f$ with $f(\emptyset) > 0$ only improves the approximation ratio of our algorithms.} We use the shorthand $f(e|S) := f(S\cup\{e\})-f(S)$ to denote the \emph{marginal} of $e$ on top of $S$. When $f(S) \leq f(T)$ for all $S \subseteq T$, $f$ is called \emph{monotone}. We consider the following optimization problem: 
\[
OPT := \max_{|S| \leq k} f(S)
\]
where $k$ is a cardinality constraint for the solution size. 

Our focus on submodular maximization in the streaming setting. In this setting, an algorithm is given a single pass over a dataset in a streaming fashion, where the stream is a some permutation of the input dataset and each element is seen once. The stream is in \emph{random order} when the permutation is uniformly random. When there are no constraints on the stream order, we call the stream \emph{adversarial}. 

At each step of the stream, our algorithm is allowed to maintain a buffer of input elements. When a new element is streamed in, the algorithm can choose to add the element to its buffer. To be as general as possible, we assume an oracle model --- that is, we assume there is an oracle that returns the value of $f(S)$ for any set $S$. The decision of the algorithm to add an element is based only on queries to the oracle on subsets of the buffer elements. The algorithm may also choose to throw away buffered elements at any given time. The goal in the streaming model is to minimize memory use, and the complexity of the algorithm is the maximum number of input elements the algorithm stores in the buffer at any given time. 

For the oracle model, an important distinction is between \emph{weak oracle} access or and \emph{strong oracle} access. In the \emph{weak oracle} setting, the algorithm is only allowed to query sets of feasible elements (sets that have cardinality less than $k$). In the \emph{strong oracle} setting however, the algorithm is allowed to query any set of elements. All our results apply to both the weak and strong oracle models.

Our aim will be to develop algorithms that only make one pass over the data stream, using $\tilde{O}_\eps(k)$ memory, where $k$ is the maximum size of a solution set in $\cI$. 
We assume that $k$ is small relative to $n := |E|$, the size of the ground set. 

\subsection{Our contributions}
On the algorithmic side, we give $(1-1/e-\eps)$ and $(1/e-\eps)$ approximation for cardinality constrained monotone and non-monotone submodular maximization respectively, both using $O(k/\eps)$ memory. The monotone result has an exponential improvement in memory requirements compared to Agrawal et al.~\cite{SMC19} (in terms of the dependence on $\eps$), while the non-monotone result is the first to appear in the random-order streaming model, and improves upon the best known polynomial-time approximations under adversarial orders~\cite{AEMHS20}. The algorithms are extremely simple to implement, and perform well on real world data (see Section~\ref{sec:experiments}), even compared to \emph{offline} greedy algorithms.

On the hardness side, we prove that a $(1-1/e+\eps)$-approximation for monotone submodular maximization would require $\Omega(n)$ memory (even with unlimited queries and computational power). This improves the hardness bound of $7/8$ from~\cite{SMC19}. 

\subsection{Related work}
Prior work on this problem has focused on both the adversarial and random-order streaming setting. Algorithmic and hardness results further depend on whether the function $f$ is monotone or non-monotone, and whether $f$ has explicit structure (e.g. such as by presenting a set system for $f$ in the coverage case), or accessible only via oracle queries. \Cref{tab:survey} describes all the relevant results. 

\begin{table}
    \footnotesize
    \centering
    \begin{tabular}{l|l|l}
                              & adversarial order                                                                                                              & random order                                                                                      \\ \hline
\multirow{2}{*}{monotone}     & \begin{tabular}[c]{@{}l@{}}
$\geq 1/2-\eps$ ($O(k/\eps)$~\cite{KMZLK19}) \\
$\leq 1/2+\eps$ (W+S~\cite{NAJS18, FNSZ20})\\ $\leq 2/(2+\sqrt{2})$ (S~\cite{HKMY20})\\ \end{tabular} & \begin{tabular}[c]{@{}l@{}} $\geq 1/2 + c$, $c < 10^{-13}$ ($O(k \log k)$~\cite{NAJS18}) \\
$\geq 1-1/e-\eps$ ($O(k \exp(\poly(1/\eps)))$~\cite{SMC19}) \\
{\color{magenta}$\leq 1-1/e+\eps$ (S)}\\  {\color{magenta} $\geq 1-1/e-\eps$ ($O(k/\eps)$, this paper)}\end{tabular}    \\ \cline{1-3} 
                              
\multirow{2}{*}{non-mono.} & \begin{tabular}[c]{@{}l@{}}$\geq 1/2-\eps$ (unbounded comp, \\ $O(k/\eps^2)$~\cite{AEMHS20})\\ $\geq 0.2779$ ($O(k)$~\cite{AEMHS20})\\ $\geq 0.1715$ ($O(k)$~\cite{FKK18}) \\
$< 1/2+\eps$ (W~\cite{AEMHS20})\end{tabular}            & \begin{tabular}[c]{@{}l@{}} 
{\color{magenta}$\geq 1/e$ ($O(k / \eps)$, this paper)} \\
\end{tabular}
\\ \hline 
\end{tabular}
    \caption{A survey of results on monotone submodular streaming. Our contributions are highlighted in {\color{magenta}magenta}. W = weak oracle, S = strong oracle, complexities refer to memory use. Lower-bounds propagate downwards in the table. Note that our polynomial-time algorithms require only weak oracle access, while our hardness results hold even for strong oracle.
    \label{tab:survey}}
\end{table}


\paragraph{Algorithmic results.}
Submodular maximization in the streaming setting was first considered by Badanidiyuru et al. \cite{BMKK14} who gave a $(1/2-\epsilon)$-approximation in $O(k \log k/\epsilon)$ memory for monotone submodular functions under a cardinality constraint, using a thresholding idea with parallel enumeration of possible thresholds. This work led to a number of subsequent developments with the current best being a $(1/2-\epsilon)$-approximation in $O(k/\epsilon)$ memory~\cite{KMZLK19}. It turns out that the factor of $1/2$ is the best possible in the adversarial setting (with a weak oracle), but an improvement is possible in the random order model (the input is ordered by a uniformly random permutation). This was first shown by Norouzi-Fard et al.~\cite{NAJS18}, who proved that the $1/2$-hardness barrier for cardinality constraints can be broken, exhibiting a $(1/2+c)$-approximation in $O(k/\eps \log k)$ memory where $c < 10^{-13}$. 
In a breakthrough work, Agrawal, Shadravan, and Stein~\cite{SMC19} gave a $(1-1/e-\eps)$-approximation using $k \, 2^{\poly(1/\eps)}$ memory and running time. We note that this is arbitrarily close to the optimal factor of $1-1/e$, but the algorithm is not practical, due to its dependence on $\eps$ (even for $\epsilon=0.1$, the resulting constants are astronomical).

\paragraph{Lower bounds.}
A few lower bounds are known for monotone functions in the {\em adversarial order model}:
with a weak oracle, any $(1/2+\epsilon)$-approximation would require $\Omega(n/k)$ memory \cite{NAJS18}. Under a strong oracle, a lower bound of $\Omega(n/k^3)$ memory for any $(1/2+\epsilon)$-approximation algorithm was shown in a recent paper by Feldman et al.~\cite{FNSZ20}.
Another recent lower bound was proved by McGregor and Vu~\cite{MV19}: a $(1-1/e+\eps)$-approximation for coverage functions requires $\Omega(n/k^2)$ memory (this lower bound holds for explicitly given inputs, via communication complexity; we note that this is incomparable to the computational $(1-1/e+\eps)$-hardness of maximum coverage \cite{Feige98}). For non-monotone functions, Alaluf et al.~\cite{AEMHS20} proved an $\Omega(n)$ memory lower bound for the adversarial order model with unbounded computation. 

In the random-order model, Agrawal \etal~\cite{SMC19} show that beating $7/8$ (for monotone submodular functions) requires $\Omega(n)$ memory. In contrast, we show that same construction as McGregor and Vu~\cite{MV19} also applies to randomly ordered streams: $(1-1/e+\eps)$ for coverage functions requires $\Omega(n/k^2)$ memory even in the random-order model.

\paragraph{Submodular maximization in related models.}
A closely related model is the secretary with shortlists model~\cite{SMC19}, where an algorithm is allowed to store a shortlist of more than $k$ items (where $k$ is the cardinality constraint). Unlike the streaming model however, once an element goes into the shortlist, it cannot be removed. Then, after seeing the entire stream, the algorithm chooses a subset of size $k$ from the shortlist and returns that to the user. We note that the algorithms developed in this paper apply almost without modification to the shortlists model.


\subsection{Overview of our techniques}
\paragraph{Main algorithmic techniques.}
The primary impetus for our algorithmic work was an effort to avoid the extensive enumeration involved in the algorithm of Agrawal et al.~\cite{SMC19} which leads to memory requirements \emph{exponential} in $\mbox{poly}(1/\eps)$. 

To make things concrete, let us consider the input divided into disjoint {\em windows} of consecutive elements. The windows containing actual optimal elements play a special role --- let's call them {\em active windows} --- these are the windows where we make quantifiable progress. When the stream is randomly ordered, we would ideally like to have each new element sampled independently and uniformly from the input. This leads to the intuition that the optimal elements are evenly spread out through all the windows. This cannot be literally true, since conditioned on the history of the stream, some elements have already appeared and cannot appear again. However, a key idea of Agrawal et al.~\cite{SMC19} allows us to circumvent this by reinserting the elements that we have already seen and that played a role in the selection process. What
needs to be proved is that elements that were not selected can still appear in the future, conditioned on the history of the algorithm; that turns out to be true, provided that our algorithm operates in a certain greedy-like manner.

To ensure progress was made regardless of the positioning of the optimal elements, previous work made use of exponentially large enumerations to essentially guess which windows the optimal elements arrive in. Where we depart from previous work is the way we build our solution. The idea is to use an evolving family of solutions which are updated in parallel, so that we obtain a quantifiable gain regardless of where the optimal elements arrived. Specifically, we grow $k$ solutions in parallel, where solution $L_i$ has cardinality $i$. In each window, we attempt to extend a collection of solutions $L_i$ (for varying $i$) by a new element $e$; if $e$ is beneficial on average to every $L_i$ in the collection, we replace each $L_{i+1}$ with the new solution $L_i \cup \{e\}$. Regardless of which windows happen to be active, we will show that the average gain over our evolving collection of solutions is analogous to the greedy algorithm.  This is the basis of the analysis that leads to a factor of $1-1/e$.

In addition to our candidate solutions $L_i$, we maintain a pool of elements $H$ that our algorithm has ever included in some candidate solution. We then use $H$ to reintroduce elements artificially back into the input; this makes it possible to assume that every input element still appears in a future window with the same probability, which is key to the probabilistic analysis leading to $1-1/e$. 
\paragraph{Non-monotone functions.}
Our algorithm for non-monotone submodular functions is similar, with the caveat that here we also have to be careful about not including any element in the solution with large probability. This is an important aspect of the randomized greedy algorithm for (offline) non-monotone submodular maximization \cite{BFNS14} which randomly includes in each step one of the top $k$ elements in terms of marginal values. We achieve a similar property by choosing the top element from the current window and a random subset of the pool of prior elements $H$. 

\paragraph{Hardness results.}
Our hardness instances have the following general structure: there is a special subset of $k$ {\em good} elements, and the remaining $n-k$ elements are {\em bad}. The good elements are indistinguishable from each other, and ditto for the bad elements. In the monotone case, any $k$ bad elements are a $(1-1/e)$-factor worse  than the optimal solution ($k$ good elements). Suppose furthermore that for parameter $r>0$, as long as we never query the function on a subset with $\geq r$ good elements, the good elements are indistinguishable from bad elements. 
The only way to collect $\ge r$ good elements in the memory buffer is by chance -- until we've collected the required number of good elements, they are indistinguishable from bad elements, so the subset in the memory buffer is random. 
The classic work of Nemhauser and Wolsey~\cite{NW78Bound} constructs a pathological monotone submodular function with $r=\Omega(k)$, which we use to prove that without memory $\Omega(n)$ the algorithm cannot beat $1-1/e$.
McGregor and Vu~\cite{MV19} construct a simple example of a coverage function with $r=2$, which we use for our $\Omega(n/k^2)$ bound. For an exponential-size ground set, we extend their construction to $r=3$ which translates to the improved lower bound of $\Omega(n/k^{3/2})$. 


\section{A \texorpdfstring{$(1-1/e-\eps)$}{(1-1/e-eps)}-approximation in \texorpdfstring{$O(k/\eps)$}{O(k/eps)} memory}
\label{sec:main-analysis}
In this section, we develop a simple algorithm for optimizing a monotone submodular function with respect to a cardinality constraint. For the sake of exposition, we focus on the intuition behind the results and relegate full proofs to the appendix. 

Our algorithm begins by randomly partitioning the stream $E$ into $\alpha k$ contiguous {\it windows} of expected size $n/(\alpha k)$, where $\alpha>1$ is a parameter controlling the memory dependence and approximation ratio. This is done by generating a random partition according to \Cref{alg:partition}. As the algorithm progresses, it maintains $k$ partial solutions, the $\ell$-th of which contains exactly $\ell$ elements. Within each window we process all the elements independently, and choose one candidate element $e$ to extend the partial solutions by. We then add $e$ to a collection of partial solutions $L_\ell$ at the end of the window. The range of partial solution sizes $\ell$ that we use roughly tracks the number of optimal elements we are expected to have seen so far in the stream.  

Intuitively, our algorithm is guaranteed to make progress on windows that contain an element from the optimal solution $O$. Let us loosely call such windows {\em active} (a precise definition will be given later). Of course, the algorithm never knows which windows are active. However, the key idea of our analysis is that we are able to track the progress that our algorithm makes on active windows. Since the input stream is uniformly random, intuitively we expect to see $i/\alpha$ optimal elements after processing $i$ windows. With high probability, the true number of optimal elements seen will be in the range $\cR := [i/\alpha - \tilde{O}(\sqrt{k}), i/\alpha + \tilde{O}(\sqrt{k})]$. By focusing on the average improvement over levels in $\cR$, we can show that each level $\ell$ in this range gains $\frac{1}{k}(f(O)-f(L_{\ell-1}))$ in expectation, whenever a (random) optimal element arrives.

For the analysis to work, ideally we would like each arriving optimal element to be selected uniformly among all optimal elements. This is not true conditioned on the history of decisions made by the algorithm. However, we can remedy this by re-inserting elements that we have selected before and subsampling the elements in the current window, with certain probabilities. A key lemma (Lemma~\ref{lem:at-least-prob}) shows why this works, since the elements we have never included might still appear, given the history of the algorithm. Our basic algorithm is described in \Cref{alg:basic}, with the window partitioning procedure described in \Cref{alg:partition}. 


\begin{algorithm}[ht]
\caption{Partitioning stream $E$ into $m$ windows. \label{alg:partition}}
\begin{algorithmic}[1]
\State{Draw $|E|$ integers uniformly from ${1,\ldots, m}$}
\State{$n_i \gets$ \# of integers equal to $i$}
\State{$t_i \gets \sum_{1\leq j<i} n_i$}
\For{$i = 1,\ldots,m$}
    \State{$w_i \gets $ elements $t_i$ to $t_i + n_i$ in $E$}
\EndFor
\State \Return $\{w_1, w_2, \ldots, w_m\}$
\end{algorithmic}
\end{algorithm}
\begin{algorithm}[ht]
\caption{{\sc MonotoneStream}$(f, E, k, \alpha)$ \label{alg:basic}}
\begin{algorithmic}[1]
\State{Partition $E$ into windows $w_i$ for $i=1,\ldots,\alpha k$ with \Cref{alg:partition}.}
\State{$L_{\ell}^0 \gets \emptyset$ for $\ell = 0,1,\ldots,k$}
\State{$H \gets \emptyset$}
\For{$i=1,\ldots, \alpha k$}
    \State{$C_i \gets w_i$}
    \State{$L_\ell^i \gets L_\ell^{i-1}$ for $\ell = 0,1,\ldots, k$}
    \State{Sample each $e \in H$ with probability $\frac{1}{\alpha k}$ and add to a set $R_i$} \label{algline:add-from-hist-end}
    \State{$C_i \gets C_i \cup R_i$}
    \State{$z_l, z_h \gets \max\{0, \floor{i/\alpha} - 20 \alpha \sqrt{k \log k}\}, \min\{k, \ceil{i/\alpha} + 20 \alpha \sqrt{k \log k}\}$} \label{algline:zl-zh}
    \State{$e^{\star} \gets \mathrm{argmax}_{e \in C_i} \sum_{\ell = z_l}^{z_h} f(e | L_{\ell}^{i-1})$} \label{algline:argmax}
    \If{$\sum_{\ell = z_l}^{z_h} f(L_{\ell}^{i-1} \cup \{e^{\star}\}) > \sum_{\ell = z_l}^{z_h} f(L_{\ell+1}^{i-1})$ \label{algline:improve-avg}}
        \State{$H = H \cup \{e^{\star}\}$}
        \State{$L_{\ell+1}^i \gets L_{\ell}^{i-1} \cup \{e^{\star}\}$ for all $\ell \in [z_l, z_h]$ \label{algline:assign-range}}
        \For{$\ell=1,2\ldots,k$}
            \If{$f(L_{\ell}^i) \geq f(L_{\ell+1}^i)$ \label{algline:smooth-levels-begin}}
                \State{$L_{\ell+1}^i \gets L_{\ell}^i + \argmax_{e\in L_{\ell+1}^i} f(e|L_{\ell})$ \label{algline:smooth-levels-end}}
            \EndIf
        \EndFor
    \EndIf
\EndFor
\State \Return $\argmax_{1 \leq \ell \leq k} f(L_\ell^{\alpha k})$
\end{algorithmic}
\end{algorithm}

In its most basic implementation, \Cref{alg:basic} requires $O(\alpha k + k^2)$ memory (to store $H$ and $L_{\ell}^i$'s for $\ell=0,\ldots, k$). 
However, there are several optimizations we can make. \Cref{alg:basic} can be implemented in a way that the $L_\ell^i$'s are not directly stored at all. To avoid storing the $L_{\ell}^i$'s, we can augment $H$ to contain not just $e^\star$, but also the index of the window it was added in. The index of the window tells us the range of levels that $e^\star$ was inserted into, so all of the $L_\ell^i$'s can be reconstructed from $H$ as $H$ contains a history of all the insertions. Thus the memory use of \Cref{alg:basic} is the size of $H$ at the end of the stream. Since there are $\alpha k$ windows and each window introduces at most $1$ element to $H$, we have the following observation:
\begin{observation}
\Cref{alg:basic} uses at most $O(\alpha k)$ space and $O(n\alpha \sqrt{k \log k})$ time.
\end{observation}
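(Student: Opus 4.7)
The plan is to verify the two bounds separately, leveraging the key idea already sketched in the paragraph preceding the observation: annotate each entry of $H$ with the index of the window in which it was inserted, so that every $L_\ell^i$ becomes a deterministic function of $H$ alone and need not be stored.

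For space, I would first bound $|H|$. The only insertion into $H$ happens at line 12, which fires at most once per outer-loop iteration (since $e^\star$ is a single element), so after $\alpha k$ windows $|H| \le \alpha k$. With the window-index annotation, the range assignment at line 13 (which depends only on $z_l, z_h$, hence on the window index from line 9) and the smoothing loop at lines 14--17 (a deterministic postprocessing of the previous $L_\ell^i$'s) can both be replayed from $H$. So the total number of stream elements stored at any time is $O(\alpha k)$, which is what streaming memory counts.

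For time, the dominant operation is the argmax in line 10. For each $e \in C_i$, evaluating $\sum_{\ell=z_l}^{z_h} f(e \mid L_\ell^{i-1})$ requires $O(z_h - z_l + 1) = O(\alpha \sqrt{k \log k})$ oracle calls by the definition of $z_l, z_h$. Summing across windows, the total element count is $\sum_i |C_i| = \sum_i |w_i| + \sum_i |R_i|$; the first sum is exactly $n$ since the $w_i$'s partition the stream, and by linearity of expectation $\E\bigl[\sum_i |R_i|\bigr] = \sum_i |H_{i-1}|/(\alpha k) \le \sum_{i=1}^{\alpha k} i/(\alpha k) = O(\alpha k)$, which is absorbed into $n$ in the regime $n \gg \alpha k$. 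The per-window bookkeeping (drawing $R_i$, the comparison at line 11, and the $O(k)$ smoothing pass) contributes only $O(\alpha k^2)$ in total, which is lower order. Multiplying gives the claimed $O(n \alpha \sqrt{k \log k})$.

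The one point requiring care is executing the argmax efficiently: a naive implementation that reconstructs each $L_\ell^{i-1}$ from $H$ inside the inner loop over $C_i$ would pay an extra factor of $|H|$ per element. Precomputing pointers to the relevant $L_\ell^{i-1}$'s for $\ell \in [z_l, z_h]$ once at the start of each window, and reusing them across all $e \in C_i$, preserves both the per-element oracle cost and the $O(\alpha k)$ bound on stored stream elements. This amortization is the only non-immediate step; the rest follows directly from inspecting the pseudocode.
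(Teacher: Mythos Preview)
Your proposal is correct and follows the same approach as the paper. In fact, the paper does not give a separate proof of this observation at all: the space bound is justified entirely by the paragraph immediately preceding the statement (annotate $H$ with window indices so the $L_\ell^i$'s can be reconstructed, then $|H|\le \alpha k$ since at most one element is added per window), and no explicit time argument is given. Your analysis of the time bound via the width $z_h-z_l+1=O(\alpha\sqrt{k\log k})$ of the argmax sum, together with $\sum_i |C_i| = n + O(\alpha k)$ in expectation, is the natural fleshing-out and is more detailed than anything in the paper. One minor quibble: the smoothing loop at lines~14--17 costs $O(k^2)$ oracle calls per window (the inner argmax at line~16 ranges over $|L_{\ell+1}^i|=\ell+1$ elements), not $O(k)$, giving $O(\alpha k^3)$ total; this is still lower order under the paper's standing assumption that $k$ is small relative to $n$, so the conclusion is unaffected.
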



When $E$ is streamed in random order, our partitioning procedure (\Cref{alg:partition}) has a much simpler interpretation. (A similar lemma can be found in the appendix of Agrawal \etal~\cite{SMC19}.)

\begin{lemma}
\label{lem:uniform-partition}
Suppose $E$ is streamed according to a permutation chosen at random and we partition $E$ by \Cref{alg:partition} into $m$ windows. This is equivalent to assigning each $e \in E$ to one of $m$ different buckets uniformly and independently at random.
\end{lemma}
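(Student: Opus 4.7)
The plan is to prove the equivalence via a symmetry argument, using that a uniform random permutation "undoes" any deterministic reordering such as sorting. First, I would reformulate Algorithm~\ref{alg:partition} abstractly: let $X_1, \ldots, X_n$ (with $n=|E|$) be the $|E|$ iid uniform draws from $\{1,\ldots,m\}$, and let $Y_1 \le Y_2 \le \cdots \le Y_n$ denote their sorted version. Since $n_i$ counts how many $X_j$ equal $i$ and $t_i$ is the cumulative count up to $i-1$, window $i$ gets stream positions $t_i+1,\ldots,t_i+n_i$, which is precisely the set of positions $j$ with $Y_j = i$. Thus position $j$ in the stream is assigned to window $Y_j$.

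Next, let $\pi$ be the uniformly random permutation of $E$ driving the stream, so $\pi(j)$ is the element appearing at position $j$. Define $Z_e$ to be the window to which element $e$ is assigned. By the previous paragraph,
\[
Z_e \;=\; Y_{\pi^{-1}(e)} \qquad \text{for every } e \in E.
\]
Crucially, $\pi$ is uniformly random and independent of the $X_j$'s (and hence of the $Y_j$'s), and $\pi^{-1}$ is itself uniformly distributed over permutations of $E$.

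The final and key step is to argue that $(Z_e)_{e\in E}$ is iid uniform on $\{1,\ldots,m\}$. Since $(Y_1,\ldots,Y_n)$ is a deterministic function of $(X_1,\ldots,X_n)$ (sorting), and then $\pi^{-1}$ reorders it uniformly at random, the composition "sort then apply $\pi^{-1}$" induces the same distribution on sequences as "apply some uniform random permutation $\sigma$ directly to $(X_1,\ldots,X_n)$"; the sort is absorbed because permuting a sorted sequence by a uniform permutation is distributionally identical to permuting the original sequence by a uniform permutation. But $(X_1,\ldots,X_n)$ is iid and thus exchangeable, so $\sigma(X_1,\ldots,X_n) \stackrel{d}{=} (X_1,\ldots,X_n)$. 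Therefore $(Z_e)_{e \in E}$ has the same distribution as iid uniform draws on $\{1,\ldots,m\}$, which is exactly the claimed equivalence.

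I do not expect a major obstacle; the one subtle point is justifying the "sort-then-permute equals permute" identity for a multiset with repeats, which is handled cleanly by noting that a uniform random permutation of any fixed multiset yields a uniformly random arrangement of that multiset, and that the joint law of (multiset composition, uniform arrangement) is the same whether we start from the iid sequence or from its sort.
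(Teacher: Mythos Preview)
Your proof is correct and follows essentially the same idea as the paper's: both arguments boil down to observing that the multiset of window labels has the right (multinomial) distribution, and that conditioned on this multiset the uniform random permutation makes every assignment of elements to windows equally likely. The paper states this conditioning directly in two sentences, whereas you phrase it via the ``sort-then-uniformly-permute equals uniformly-permute'' identity for exchangeable sequences; these are the same argument in different clothing.
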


The algorithm's performance and behavior depends on the ordering of $E$. Let us define the \emph{history} of the algorithm up to the $i$-th window, denoted $\cH_i$, to be the sequence of all solutions produced up to that point. (Note that this history is only used in the analysis.) More precisely, we define the history as follows.

\begin{definition}
\label{def:history}
Let $H_i$ denote the state of the set $H$ maintained by the algorithm, before processing window $i$.
We define $\cH_i$ to be the set of all triples $(e,\ell,j)$ such that element $e \in H_i$ was added to solution $L_{\ell}$ in window $j$. In other words, $\cH_{i}$ contains all of the changes that the algorithm made to its state while processing the first $i$ windows. For convenience, sometimes we treat $\cH_i$ as a set of elements and say that $e \in \cH_i$ if $e \in H_i$. 
\end{definition}

The history $\cH_i$ describes the entire memory state of the algorithm up to the end of window $i-1$.
In the following, we analyze the performance of the algorithm in the $i$-th window conditioned the history $\cH_i$. Note that different random permutations of the input may produce this history, and we average over all of them in the analysis. 

The next key lemma captures the intuition that elements not selected by the algorithm so far could still appear in the future, and bounds the probability with which this happens.

\begin{lemma}
\label{lem:at-least-prob}
Fix a history $\cH_{i-1}$.
For any element $e \in E\setminus \cH_{i-1}$, and any $i \leq i^\prime \leq m = \alpha k$, we have 
$\prob{e\in w_i^\prime \mid \cH_{i-1}} \geq 1/(\alpha k).$
\end{lemma}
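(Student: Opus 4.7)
The plan is a Bayes-plus-coupling argument. By \Cref{lem:uniform-partition} each element's window is uniform on $\{1,\ldots,\alpha k\}$ and independent of the others, so unconditionally $\Pr[e \in w_{i'}] = 1/(\alpha k)$ and Bayes gives
\[
\Pr[e \in w_{i'} \mid \cH_{i-1}] \;=\; \frac{\Pr[\cH_{i-1} \mid e \in w_{i'}]}{\alpha k \cdot \Pr[\cH_{i-1}]}.
\]
For any $i' \geq i$, conditioning on $e \in w_{i'}$ does not change the joint distribution of the other elements' windows or of the sampling coin flips, and $e$ itself is absent from windows $1,\ldots,i-1$; hence $\Pr[\cH_{i-1} \mid e \in w_{i'}]$ takes a common value $p$ for all $i' \geq i$. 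It therefore suffices to prove that $\Pr[\cH_{i-1} \mid e \in w_j] \leq p$ for every $j<i$, since then $\Pr[\cH_{i-1}] = \frac{1}{\alpha k}\sum_{i''}\Pr[\cH_{i-1}\mid e \in w_{i''}] \leq p$, yielding the desired $\Pr[e \in w_{i'} \mid \cH_{i-1}] \geq 1/(\alpha k)$.

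To prove this comparison I would couple the randomness: fix the other elements' window assignments and all sampling coin flips used in line~7 (the coin flips attached to $e$ are irrelevant, since $e \notin H_j$ throughout the first $i-1$ windows). I then claim that whenever this shared randomness together with $X_e = j$ reproduces $\cH_{i-1}$, the same shared randomness with $X_e = i'$ (for any $i' \geq i$) also reproduces $\cH_{i-1}$. Windows $1,\ldots,j-1$ are identical under both executions because $e$ is absent from them either way, so I only need to analyze window $j$ and propagate by induction.

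In window $j$ with $X_e = j$ we have $e \in C_j$; the hypothesis $e \notin \cH_{i-1}$ forces one of two cases. (A) The argmax on line~10 is some $e^\star \neq e$; then deleting $e$ from $C_j$ -- which is exactly what happens when $X_e = i'$ -- leaves the argmax, and hence the entire window-$j$ update, unchanged. (B) Element $e$ is itself the argmax but the improvement check on line~11 fails. Let $e''$ be the second-best candidate in $C_j\setminus\{e\}$; since $e$ was the argmax, $\sum_\ell f(e'' \mid L_\ell^{j-1}) \leq \sum_\ell f(e \mid L_\ell^{j-1})$, and adding $\sum_\ell f(L_\ell^{j-1})$ to both sides, $\sum_\ell f(L_\ell^{j-1}\cup\{e''\}) \leq \sum_\ell f(L_\ell^{j-1}\cup\{e\}) \leq \sum_\ell f(L_{\ell+1}^{j-1})$, so the improvement check also fails for $e''$; neither $H$ nor any $L_\ell$ changes under either value of $X_e$. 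In both cases the state after window $j$ agrees, and induction carries the coupling through window $i-1$.

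The main obstacle is case~(B): arguing that if $e$ wins the argmax but fails the improvement check, then the next-best candidate also fails it. This step is where the specific form of lines~10--11 matters -- both are comparisons of the same functional $\sum_\ell f(L_\ell^{j-1}\cup\{\cdot\})$ -- so the ordering delivered by the argmax transfers directly into the improvement inequality. Once case~(B) is handled, the Bayes calculation above immediately yields $\Pr[e \in w_{i'}\mid \cH_{i-1}] \geq (p/(\alpha k))/p = 1/(\alpha k)$, completing the argument.
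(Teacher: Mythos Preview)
Your proof is correct and follows essentially the same route as the paper's: both hinge on showing that if $e \notin \cH_{i-1}$, then removing $e$ from its window leaves the algorithm's behavior through window $i-1$ unchanged, via exactly the two cases (A) and (B) you spell out. You package the conclusion through Bayes' rule and a one-directional coupling inequality, while the paper counts $\cH_{i-1}$-compatible partitions and sets up a bijection over a set $J_e$ of admissible windows, but the core combinatorial step is identical.
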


Next, we define a set of \emph{active} windows. In each active window, the algorithm is expected to make significant improvements to its candidate solution. The active windows will only be used in the analysis of the algorithm, and need not be computed in any way.

\begin{definition}
\label{def:active-window}
Let $O$ be the optimal solution. For window $i$, let $p_e^i$ be the probability that $e \in w_i$ given $\cH_{i-1}$. Define its active set $A_i$ to be the union of $R_i$ and the set obtained by sampling each $e \in w_i$ with probability $1/(\alpha k p_e^i)$. We call $w_i$ an \textbf{active window} if $|O \cap A_i| \geq 1$ and we call $O \cap A_i$ the \textbf{active optimal elements} of window $i$.
\end{definition}
Note that the construction of active sets in \Cref{def:active-window} is valid as \Cref{lem:at-least-prob} guarantees $1/(\alpha k p_e^i)\leq 1$. More importantly, the active window $A_i$ subsamples the optimal elements so that each element appears in $A_i$ with probability exactly $1/(\alpha k)$ regardless of the history $\cH_{i-1}$. This allows us to tightly bound the number of active windows in the input, as we show in the next lemma.

\begin{lemma}
\label{lem:active-est}
Suppose we have streamed up to the $\alpha\beta$-th window of the input for some $\beta > 0$. Then expected number of active windows seen so far satisfies
$$ \bZ_{\alpha\beta} := \text{expected number of active windows} = \beta -\Theta(\beta/\alpha).$$
Furthermore, the actual number of windows concentrates around $\bZ_{\alpha\beta}$ to within $\pm O\left(\sqrt{\beta \log\frac1\delta}\right)$ with probability $1-\delta$.
\end{lemma}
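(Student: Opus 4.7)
The plan is to reduce the lemma to a per-window analysis and then aggregate by linearity of expectation for the first claim, and by a martingale Bernstein-type inequality for the concentration. Fix any history $\cH_{i-1}$. The construction in Definition~\ref{def:active-window} is designed so that each $o \in O$ appears in $A_i$ with probability exactly $1/(\alpha k)$ conditional on $\cH_{i-1}$: if $o \in H_{i-1}$ this is immediate from the independent sampling on line~\ref{algline:add-from-hist-end}, and otherwise it follows from pairing the subsampling probability $1/(\alpha k p_o^i)$ with $p_o^i = \Pr[o \in w_i \mid \cH_{i-1}]$, which Lemma~\ref{lem:at-least-prob} certifies is $\geq 1/(\alpha k)$ so the construction is valid.

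Next I would establish the per-window bound $\Pr[w_i \text{ active} \mid \cH_{i-1}] = 1/\alpha - \Theta(1/\alpha^2)$. The upper bound $\leq 1/\alpha$ is the union bound over the $k$ optimal elements. For the matching lower bound I would apply the second-moment inequality $\Pr[Y_i \geq 1] \geq \E[Y_i] - \tfrac12 \E[Y_i(Y_i-1)]$ with $Y_i := |O \cap A_i|$; the key pairwise estimate $\Pr[o, o' \in A_i \mid \cH_{i-1}] = O(1/(\alpha k)^2)$ is immediate when at least one of $o, o'$ lies in $H_{i-1}$ (by the independence of $R_i$ from everything else given the history), and otherwise follows by combining the independent subsampling with Lemma~\ref{lem:uniform-partition}, which makes window assignments mutually independent and allows $\Pr[o, o' \in w_i \mid \cH_{i-1}]$ to be bounded by a constant factor times $p_o^i p_{o'}^i$. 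Taking expectations and summing over $i = 1, \ldots, \alpha\beta$ then yields $\bZ_{\alpha\beta} = \alpha\beta\bigl(1/\alpha - \Theta(1/\alpha^2)\bigr) = \beta - \Theta(\beta/\alpha)$.

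For the concentration statement let $X_i$ be the indicator that window $i$ is active. I would invoke Freedman's inequality on the martingale $M_t := \sum_{i \leq t}(X_i - \E[X_i \mid \cH_{i-1}])$: each increment is bounded in $[-1, 1]$ and has conditional variance at most $\E[X_i \mid \cH_{i-1}] \leq 1/\alpha$, so the total quadratic variation after $t = \alpha\beta$ steps is at most $\beta$, giving $|M_{\alpha\beta}| = O(\sqrt{\beta \log(1/\delta)})$ with probability $\geq 1-\delta$. The deterministic drift $\sum_i \E[X_i \mid \cH_{i-1}] - \bZ_{\alpha\beta}$ is $O(\beta/\alpha)$ by the per-window bound, and is dominated by $\sqrt{\beta \log(1/\delta)}$ in the regime of interest, yielding the stated deviation. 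The main technical obstacle is the pairwise estimate $\Pr[o, o' \in w_i \mid \cH_{i-1}] = O(p_o^i p_{o'}^i)$, since conditioning on a complicated function of the entire prefix could in principle induce nontrivial correlation between $\{o \in w_i\}$ and $\{o' \in w_i\}$; the argument leans on the independent-bucket interpretation of Lemma~\ref{lem:uniform-partition} together with the history-dependent normalizations $1/(\alpha k p_o^i)$ that were built into Definition~\ref{def:active-window} precisely to cancel the marginal corrections.
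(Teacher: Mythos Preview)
Your proposal is correct and reaches the lemma by a somewhat different route than the paper. The paper asserts outright that the events $\{o \in A_i\}_{o \in O}$ are mutually independent (each with probability $1/(\alpha k)$), yielding the exact per-window active probability $1-(1-1/(\alpha k))^k$; it then argues that the indicators $\id{\cA_i}$ are negatively dependent across windows and applies a Hoeffding-type bound directly. Your approach is more cautious: you replace the full-independence assertion with a second-moment lower bound needing only the pairwise estimate, and you replace negative-dependence-plus-Hoeffding with a Freedman martingale on the natural filtration. The paper's route is shorter but leaves its independence claim only lightly justified; your route makes the conditioning explicit but incurs a residual drift $\sum_i \E[X_i\mid\cH_{i-1}]-\bZ_{\alpha\beta}=O(\beta/\alpha)$ from the slack between your upper and lower per-window bounds, which strictly speaking can exceed the lemma's stated $O(\sqrt{\beta\log(1/\delta)})$ deviation when $\alpha$ is small. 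In fact the obstacle you flag dissolves cleanly: the bijection behind Lemma~\ref{lem:at-least-prob} extends to any pair $o,o'\notin\cH_{i-1}$ (one can reassign each independently without disturbing the history), giving $\Pr[o,o'\in w_i\mid\cH_{i-1}]=p_o^i\,p_{o'}^i$ exactly and hence $\Pr[o,o'\in A_i\mid\cH_{i-1}]=1/(\alpha k)^2$; this makes $\E[X_i\mid\cH_{i-1}]$ a deterministic constant and eliminates your drift term.
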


Next we analyze the expected gain in the solution after processing each active window. Let $\cA_{i+1}$ the event that window $w_{i+1}$ is active.
\begin{lemma}
\label{lem:uniform-gain}
Let $\bcL_i = \{z_l^i, z_l^i+1, \ldots, z_h^i\}$ where $z_l^i$ and $z_h^i$ are the values of $z_l$ and $z_h$ defined in \Cref{alg:basic} on window $i$. Conditioned on a history $\cH_i$ and window $i+1$ being active,  
\begin{gather}\label{eq:uniform-gain} \sum_{\ell \in \bcL_{i+1}} \E[f(L^{i+1}_{\ell+1}) - f(L^{i}_{\ell}) \mid \cH_{i} , {\cA}_{i+1} ]
\geq \frac{1}{k} \sum_{\ell \in \bcL_{i+1}} (f(O) - \E [f(L^{i}_{\ell}) \mid \cH_i ]).
\end{gather}
\end{lemma}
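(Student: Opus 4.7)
The plan is to reduce the inequality to a single comparison between $g(e^{\star})$ and a uniformly random optimal element, where $g(e) := \sum_{\ell \in \bcL_{i+1}} f(e \mid L_\ell^i)$ and $e^{\star}$ is the element chosen from $C_{i+1}$ in window $i+1$. Conditioned on $\cH_i$, each $L_\ell^i$ is fixed, so $g$ is a deterministic function of $e$.

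First I would lower-bound the LHS by $\E[g(e^{\star}) \mid \cH_i, \cA_{i+1}]$ via a case analysis on the check at line~\ref{algline:improve-avg}: in the passing case the new $L^{i+1}_{\ell+1}$ is a superset of $L^i_\ell \cup \{e^{\star}\}$ for every $\ell \in \bcL_{i+1}$ (the smoothing at lines~\ref{algline:smooth-levels-begin}--\ref{algline:smooth-levels-end} can only increase $f$-values); in the failing case the defining inequality $\sum_\ell f(L^i_{\ell+1}) \geq \sum_\ell f(L^i_\ell \cup \{e^{\star}\})$, together with $L^{i+1}_{\ell+1} = L^i_{\ell+1}$, gives the same bound. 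Next I would upper-bound the RHS by $\tfrac{1}{k} \sum_{o \in O} g(o)$ using monotonicity and submodularity: $\sum_o f(o \mid L_\ell^i) \geq f(O \cup L_\ell^i) - f(L_\ell^i) \geq f(O) - f(L_\ell^i)$, summed over $\ell \in \bcL_{i+1}$. It therefore suffices to prove
\begin{equation*}
\E[g(e^{\star}) \mid \cH_i, \cA_{i+1}] \;\geq\; \frac{1}{k} \sum_{o \in O} g(o).
\end{equation*}

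For this inequality, I would use that $e^{\star} = \argmax_{e \in C_{i+1}} g(e)$ and $A_{i+1} \subseteq C_{i+1}$, so on the event $\cA_{i+1}$ one has $g(e^{\star}) \geq g(o)$ for every $o \in T := O \cap A_{i+1}$. Letting $o^{\star}$ be a uniformly random element of $T$, the goal becomes showing that the conditional distribution of $o^{\star}$ given $\cH_i$ and $\cA_{i+1}$ is uniform on $O$, which would yield $\E[g(o^{\star}) \mid \cH_i, \cA_{i+1}] = \tfrac{1}{k} \sum_o g(o)$ and close the argument.

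The hard part will be this uniformity statement. \Cref{lem:at-least-prob} combined with the normalizing subsample rate $1/(\alpha k p_e^{i+1})$ in \Cref{def:active-window} is already chosen so that every $o \in O$ has marginal probability exactly $1/(\alpha k)$ of lying in $A_{i+1}$, regardless of whether it was previously selected into $H_i$. What I would need to upgrade this to is exchangeability of the indicator family $\{\id{o \in A_{i+1}}\}_{o \in O}$ given $\cH_i$: under exchangeability, conditioning on $|T|=t$ makes $T$ a uniformly random $t$-subset of $O$, so $o^{\star}$ is uniform on $O$ regardless of the value of $|T|$, hence also conditional on $\cA_{i+1} = \{|T|\geq 1\}$. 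Establishing exchangeability reduces to the symmetry of the underlying randomness---the $R_{i+1}$-sampling acts symmetrically on $H_i$, the independent window assignment of \Cref{lem:uniform-partition} acts symmetrically on $E \setminus \cH_i$, and the subsample rate equalizes the two mechanisms' marginals---so that the two populations $O \cap H_i$ and $O \setminus H_i$ can be coupled into a single exchangeable joint law. Nailing down this coupling is the main technical step, and without it a naive averaging loses an extra factor of $k$.
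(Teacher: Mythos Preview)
Your reduction is correct and parallels the paper up through the bridging inequality $\E[g(e^{\star}) \mid \cH_i, \cA_{i+1}] \geq \tfrac{1}{k}\sum_{o\in O} g(o)$; your case split on line~\ref{algline:improve-avg} is in fact more explicit than the paper, which writes the first inequality in one line. At the bridging step the two arguments diverge. The paper orders $O$ so that $g(o_1)\ge\cdots\ge g(o_k)$, computes $\Pr[\min\{j: o_j\in A_{i+1}\}=j \mid \cA_{i+1}] = p(1-p)^{j-1}/(1-(1-p)^k)$ with $p=1/(\alpha k)$, and then applies Chebyshev's sum inequality to the two decreasing sequences $\alpha_j = p(1-p)^{j-1}/(1-(1-p)^k)$ and $\beta_j=g(o_j)$. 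Your device of drawing $o^{\star}$ uniformly from $T=O\cap A_{i+1}$ and arguing it is uniform on $O$ is an equally valid and arguably more direct route to the same conclusion.

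Regarding what you flag as the hard part: the paper does not treat exchangeability as a separate technical hurdle. It asserts outright that, conditioned on $\cH_i$, the events $\{o\in A_{i+1}\}_{o\in O}$ are mutually \emph{independent} Bernoulli$(1/(\alpha k))$---strictly stronger than what you need, and precisely what the Chebyshev computation already relies on. The justification is that the bijection behind \Cref{lem:at-least-prob} factorizes over non-selected elements (whether a given $o\notin H_i$ can sit in a past window $j$ without altering the history depends only on the $L_\ell^{j-1}$'s and the threshold there, both fixed by $\cH_i$), so by \Cref{lem:uniform-partition} the conditional law of $\{P(o)\}_{o\notin H_i}$ remains a product; the $R_{i+1}$-coins for $o\in H_i$ are fresh; and the active-set subsampling coins are independent by construction. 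There is no delicate coupling to build: once you see this product structure, independence (hence exchangeability) is immediate, and both your argument and the paper's go through.
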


Under an ideal partition, each element of $O$ appears in a different window, with one optimal element appearing roughly once every $\alpha$ windows. Thus after $k$ active windows, we expect to obtain a $1-(1-1/k)^k\approx (1-1/e)$-approximation (as in the standard greedy analysis).

\begin{theorem}
\label{lem:greedy-analysis}
\label{thm:k^2-streaming}
The expected value of the best solution found by the algorithm is at least $$\left(1 - \frac{1}{e} - O\left(\frac{1}{\alpha}  +  \alpha \sqrt{\frac{\log k}{k}} \right)\right) OPT.$$
Setting $\alpha = \Theta(1/\eps)$, we have a $\left(1-1/e-\eps-o(1)\right)$-approximation using $O(k / \eps)$ memory.
\end{theorem}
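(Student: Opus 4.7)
The plan is to emulate the classical greedy analysis via a potential-function argument adapted to the parallel-levels structure of Algorithm~\ref{alg:basic}. Concretely, I would track the average value
\[
\Phi_i \;:=\; \tfrac{1}{|\bcL_i|}\sum_{\ell \in \bcL_i} f(L^i_\ell)
\]
of the solutions in the currently relevant level range, show that each active window improves $\Phi_i$ at the $(1/k)$-rate familiar from the greedy proof, and then unroll the recurrence over the $\approx k$ active windows guaranteed by Lemma~\ref{lem:active-est}. Since the smoothing step (lines~15--17) enforces that $f(L^i_\ell)$ is non-decreasing in $\ell$, we have $\max_\ell f(L^{\alpha k}_\ell) \ge \Phi_{\alpha k}$, so a lower bound on $\E[\Phi_{\alpha k}]$ suffices.

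The per-window recurrence is the heart of the argument. Conditioned on history $\cH_i$ and window $i{+}1$ being active, Lemma~\ref{lem:uniform-gain} applied to the range $\bcL_{i+1}$ yields, after rearranging and dividing by $|\bcL_{i+1}|$,
\[
OPT - \E[\Phi_{i+1} \mid \cH_i, \cA_{i+1}] \;\leq\; \bigl(1-\tfrac{1}{k}\bigr)(OPT - \Phi_i) \,+\, \eta_{i+1},
\]
where $\eta_{i+1}$ is a boundary correction arising from (a)~the one-index shift between $\sum_{\ell \in \bcL_{i+1}} f(L^{i+1}_{\ell+1})$ (what Lemma~\ref{lem:uniform-gain} controls) and $\sum_{\ell \in \bcL_{i+1}} f(L^{i+1}_\ell)$ (what $\Phi_{i+1}$ requires), and (b)~the at-most-two-index drift between $\bcL_i$ and $\bcL_{i+1}$ as the center $i/\alpha$ advances by $1/\alpha$ per window. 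On non-active windows, $\Phi_i$ is non-decreasing: the line-11 test only accepts strictly improving replacements, and the smoothing step is likewise monotone.

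To finish, I would apply concentration. By Lemma~\ref{lem:active-est}, with probability $1 - o(1)$ the number of active windows satisfies $Z_{\alpha k} \ge k - \Theta(k/\alpha) - O(\sqrt{k\log k})$. Iterating the recurrence through the random sequence of active windows and using $(1-1/k)^t \le e^{-t/k}$,
\[
OPT - \E[\Phi_{\alpha k}] \;\leq\; OPT\,(1-1/k)^{Z_{\alpha k}} \,+\, \textstyle\sum_i \eta_i \;\leq\; \tfrac{OPT}{e}\bigl(1 + O(\tfrac{1}{\alpha}) + O(\sqrt{\tfrac{\log k}{k}})\bigr) + \textstyle\sum_i \eta_i.
\]
If one can establish $\sum_i \eta_i \le O(\alpha\sqrt{\log k / k})\cdot OPT$, the claimed $(1-1/e - O(1/\alpha + \alpha\sqrt{\log k/k}))$-bound follows. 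The $O(\alpha k)$ memory bound was already argued in the observation after Algorithm~\ref{alg:basic}, so setting $\alpha = \Theta(1/\eps)$ gives $O(k/\eps)$ memory and a $(1-1/e-\eps-o(1))$-approximation.

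The main technical obstacle is controlling $\sum_i \eta_i$. A crude per-window bound $|\eta_{i+1}| \le O(OPT/|\bcL_{i+1}|)$, summed over $\alpha k$ windows, is too large. The fix is to telescope the boundary contributions along the stream: because $\bcL_i$ drifts by at most one index per window and $f(L^i_\ell)$ is $\ell$-monotone, the boundary term removed when the range shifts left on window $i$ reappears (up to a small net effect) on a later window, leaving only the cumulative endpoint discrepancy, which is controlled by the window width $|\bcL_i| = \Theta(\alpha\sqrt{k\log k})$. This is the subtle step that distinguishes our parallel-levels approach from a single-level greedy analysis and deserves the most careful accounting in the full proof.
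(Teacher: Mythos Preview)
Your overall strategy matches the paper's: track an average over a sliding range of levels, extract a greedy-style $(1-1/k)$ contraction from Lemma~\ref{lem:uniform-gain} on each active window, and use Lemma~\ref{lem:active-est} for concentration. The gap is precisely where you flag it --- controlling $\sum_i \eta_i$ --- and your telescoping sketch does not close it.

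The difficulty with tracking $\Phi_i$ over the \emph{deterministic} range $\bcL_i$ is a rate mismatch: the LHS of Lemma~\ref{lem:uniform-gain} involves $\sum_{\ell\in\bcL_{i+1}} f(L^{i+1}_{\ell+1})$, i.e.\ the level index shifts by exactly $1$ per \emph{active} window, whereas $\bcL_i$ drifts by roughly $1/\alpha$ per \emph{every} window. On average these rates agree (there are $\approx k$ active windows among $\alpha k$ total), but the per-active-window discrepancy in your $\eta_{i+1}$ is a top-minus-bottom term $\bigl(f(L^{i+1}_{z_h+1}) - f(L^{i+1}_{z_l})\bigr)/|\bcL_{i+1}|$, which can be $\Theta(OPT/|\bcL_i|)$. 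Summed over $\approx k$ active windows this is $\Theta\bigl(\sqrt{k/\log k}\cdot OPT/\alpha\bigr)$, far too large. Your telescoping claim would require these top-minus-bottom gaps to cancel across windows, but each active window produces a fresh such gap and there is no natural pairing; the ``boundary term removed when the range shifts'' on a generic window is not the same object as the shift-by-one gap created on an active window.

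The paper's fix is the missing idea: track the average over a \emph{random} range $\cL_i := [Z_i \pm (\alpha\sigma+\sigma)]$ centered at the actual active-window count $Z_i$, rather than over $\bcL_i$. Conditioned on $\cA_{i+1}$ one has $Z_{i+1}=Z_i+1$, hence $\cL_{i+1}=\cL_i+1$ exactly, so
\[
\sum_{\ell\in\cL_{i+1}} f(L^{i+1}_\ell) \;=\; \sum_{\ell\in\cL_i} f(L^{i+1}_{\ell+1}),
\]
eliminating the shift error entirely (and on $\neg\cA_{i+1}$, $\cL_{i+1}=\cL_i$). Lemma~\ref{lem:uniform-gain} is stated for $\bcL_{i+1}$, but by the concentration event $\bcL_{i+1}\subseteq\cL_i$ w.h.p., and the leftover $|\cL_i\setminus\bcL_{i+1}|=2\sigma$ levels are handled by $f(L^{i+1}_{\ell+1})\ge f(L^i_\ell)$; they then only enter the $\tfrac{1}{k}$-weighted RHS of Lemma~\ref{lem:uniform-gain}, contributing $O\bigl(\sigma/(k|\cL_i|)\bigr)\cdot OPT = O(1/(\alpha k))\cdot OPT$ per active window. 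After weighting by $\Pr[\cA_{i+1}]\approx 1/\alpha$ and summing over $\alpha k$ windows, the total error is $O(OPT/\alpha)$, as needed. Replacing your $\Phi_i$ by $\tfrac{1}{|\cL_i|}\sum_{\ell\in\cL_i} f(L^i_\ell)$ and carrying through your recurrence with this change makes the argument go through.
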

\section{A \texorpdfstring{$(1/e-\eps)$}{(1/e-eps)}-approximation for non-monotone submodular maximization}

In this section, we show that the basic algorithm described in \Cref{alg:basic} can be altered to give a $1/e$-approximation to the cardinality constrained non-monotone case (\Cref{alg:basic-nonmono-short}).

\begin{algorithm}[ht]
\caption{{\sc NonMonotoneStream}$(f, E, k, \alpha)$ \label{alg:basic-nonmono-short}
}
\begin{algorithmic}[1]
\State{Partition $E$ into windows $w_i$ for $i=1,\ldots,\alpha k$ with \Cref{alg:partition}.}
\State{$L_\ell^0 \gets \emptyset$ for $i=0,\ldots, k$}
\State{$H \gets \emptyset$}
\State{$x_{e}^i \gets \mathrm{Unif}(0, 1)$ for $e \in E, i \in [\alpha k]$}
\For{$i=1,\ldots, \alpha k$}
    \State{$C_i \gets \emptyset$}
    \color{amber}
    \For{$e \in \text{window }w_i$}
        \For{$j=1,\ldots,i$}
            \State{Reconstruct $L_{\ell}^{j-1}$ for all $\ell$ from $\cH_{i-1}$}
            \State{$A_e^j = \{r \,|\, 1 \leq r < j,\,\sum_{\ell = z_l^r}^{z_h^r}f(e|L_{\ell}^{r-1}) < f_r\text{ or }x_e^r > q_e^r\}$ \text{(see line 13 for $f_r$)}}
            \State{$q_e^j=\frac{\alpha k-j+|A_e^j|+1}{\alpha k}$}
        \EndFor
        
        \State{\algorithmicif\ $x_e^i \leq q_e^i$ \algorithmicthen\ $C_i\gets C_i \cup \{e\}$}
    \EndFor
    
    \State{$f_i \gets \sum_{\ell = z_l^i}^{z_h^i} f(e^\star | L_{\ell}^{i-1})$}
    \color{black}
    \State{Perform lines~\ref{algline:add-from-hist-end} to \ref{algline:smooth-levels-end} of \Cref{alg:basic}.}
\EndFor
\State \Return $\argmax_\ell f(L_\ell^{\alpha k})$
\end{algorithmic}
\end{algorithm}

\Cref{alg:basic-nonmono-short} uses the same kind of multi-level scheme as \Cref{alg:basic}. However, \Cref{alg:basic-nonmono-short} further sub-samples the elements of the input so that the probability of including any element is \emph{exactly} $1/(\alpha k)$ lines 7--13 (coloured in {\color{amber} orange}). The sub-sampling allows us to bound the maximum probability that an element of the input is included in the solution. In particular, the sub-sampling is done by having the algorithm compute (on the fly) the conditional probability that an element $e$ could have been selected had it appeared in the past. This gives us the ability to compute an appropriate sub-sampling probability to ensure that $e$ does not appear in $H$ with too high a probability. In terms of the proof, the sub-sampling allows us to perform a similar analysis to the {\sc RandomGreedy} algorithm of Buchbinder et al.~\cite{BFNS14}.\footnote{A difference here is that instead of analysing a random element of the top-$k$ marginals, we analyse the optimal set directly.} 


Since many of the main ideas are the same, we relegate the details of the analysis to the appendix.

\paragraph{Implementation of \Cref{alg:basic-nonmono-short}}
For clarity of exposition, we compute $x_e^i$ up front in line 4. However, we can compute them on the fly in practice since each element only uses its value of $x_e^i$ once (lines 10 and 12). This avoids an $O(n\alpha k)$ memory cost associated with storing each $x_e^i$.
Finally, we assume that there are no ties when computing the best candidiate element in each window. Ties can be handled by any arbitrary but consistent tie-breaking procedure. Any additional information used to break the ties (for example an ordering on the elements $e$) must be stored alongside $f_i$ for the computation of $A_e^j$ (line 10).   

\begin{theorem}
\label{thm:nonmono-alg}
\Cref{alg:basic-nonmono-short} obtains a $(1/e-\eps)$-approximation for maximizing a non-monotone function $f$ with respect to a cardinality constraint in $O(k/\eps)$ memory.
\end{theorem}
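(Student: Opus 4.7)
The proof plan mirrors the analysis of Section~\ref{sec:main-analysis}, with two substantive modifications: the sub-sampling of window elements (the orange block of Algorithm~\ref{alg:basic-nonmono-short}) enables a RandomGreedy-style analysis à la Buchbinder et al.~\cite{BFNS14}, and the monotone progress bound must be replaced by its non-monotone counterpart. Memory is $O(\alpha k)$ by the same reconstruction argument as Algorithm~\ref{alg:basic}: at most one element enters $H$ per window, and each $L_\ell^i$ can be rebuilt from $\cH_i$ on the fly. Lemma~\ref{lem:uniform-partition} (random partition), Lemma~\ref{lem:at-least-prob} (future-appearance with probability $\geq 1/(\alpha k)$), Lemma~\ref{lem:active-est} (concentration of active window count), and Definition~\ref{def:active-window} (active set) all carry over unchanged, because they depend only on the random partition of $E$ and on the re-insertion of $H$, neither of which is altered in the orange block.

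\textbf{Sub-sampling invariant.} The crux of the non-monotone analysis is to prove, by induction on $j$, that for every $e \in E \setminus \cH_{j-1}$, conditioned on the history $\cH_{j-1}$, the probability that $e$ is selected as $e^\star$ and added to $H$ in window $j$ is \emph{exactly} $1/(\alpha k)$. The quantities $A_e^j$ and $q_e^j$ are engineered precisely for this purpose: $A_e^j$ records the earlier windows in which $e$ was ``harmless'' (its marginal was below the winning threshold $f_r$, or it was rejected by $x_e^r > q_e^r$), and the affine choice of $q_e^j$ then cancels the dependency on the history to produce a flat marginal of $1/(\alpha k)$. Lemma~\ref{lem:at-least-prob} is exactly what guarantees $q_e^j \in [0,1]$ so that the sub-sampling is well-defined; the inductive step additionally uses that $x_e^j \sim \mathrm{Unif}(0,1)$ is independent of $\cH_{j-1}$.

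\textbf{From the invariant to $1/e$.} Iterating the invariant across the $\alpha k$ windows yields $\Pr[e \in L_\ell^i] \leq 1 - (1 - 1/(\alpha k))^i$ for every $e$ and every level $\ell$. The standard ``bounded-probability subset'' inequality for submodular functions (Lemma~2.2 of Buchbinder et al.~\cite{BFNS14}) then gives $\E[f(O \cup L_\ell^i)] \geq (1 - 1/(\alpha k))^i f(O)$. The non-monotone analog of Lemma~\ref{lem:uniform-gain} is identical in form but with $f(O)$ on the right-hand side replaced by $\E[f(O \cup L_\ell^i) \mid \cH_i]$, which still holds because the underlying submodular argument only uses $f(L \cup \{o\}) - f(L) \leq f(L \cup O) - f(L)$. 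Combining these two ingredients produces the same linear recurrence that appears in the proof of Theorem~\ref{thm:k^2-streaming}, but with driving term $(1-1/(\alpha k))^i f(O)$ instead of $f(O)$. Solving the recurrence after $\alpha k$ active windows, and absorbing the $\pm O(\sqrt{k \log k})$ slack from Lemma~\ref{lem:active-est} and the range $[z_l, z_h]$, yields $\E[\max_\ell f(L_\ell^{\alpha k})] \geq ( (1-1/(\alpha k))^{\alpha k} - O(1/\alpha + \alpha \sqrt{\log k / k}) ) f(O) \geq (1/e - O(1/\alpha + \alpha \sqrt{\log k /k})) f(O)$. Setting $\alpha = \Theta(1/\eps)$ gives the claim.

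\textbf{Main obstacle.} The technical heart is the inductive verification of the sub-sampling invariant. The event ``$e$ was not chosen in past window $r$'' is determined jointly by the configuration of the other elements falling in window $r$ (which can win the argmax and change $f_r$) and by the private coin $x_e^r$; the formula for $q_e^j$ must disentangle these contributions using only information recoverable from $\cH_{j-1}$, while still producing the exact marginal $1/(\alpha k)$. Once the invariant is in place, the remainder of the argument is a fairly routine merger of the monotone streaming analysis of Section~\ref{sec:main-analysis} with the RandomGreedy analysis of Buchbinder et al.~\cite{BFNS14}.
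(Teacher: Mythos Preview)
Your proposal is correct and follows the paper's approach closely. Two small corrections worth noting. First, the sub-sampling invariant the paper actually establishes is $\Pr[e \in C_j \mid \cH_{j-1}] = 1/(\alpha k)$ (membership in the \emph{candidate set}), not that $e$ is chosen as $e^\star$ with this exact probability; the latter is false in general, but since being picked requires $e \in C_j$, you still obtain $\Pr[e \in \cH_j \mid e \notin \cH_{j-1}] \le 1/(\alpha k)$, which is all Lemma~2.2 of~\cite{BFNS14} needs---consistent with the $\le$ you use downstream. Second, in the non-monotone variant the level-smoothing step simply copies $L_\ell^i$ into $L_{\ell+1}^i$ (rather than augmenting by an element), and the nesting inequality $f(L_\ell^i) \le f(L_{\ell+1}^{i+1})$ survives only for $\ell \notin [z_l^{i+1}, z_h^{i+1}]$; the paper isolates this as a separate lemma and applies it precisely on the $\cL_i \setminus \bcL_{i+1}$ portion of the telescoping sum, so your sketch should flag this wrinkle rather than assume the full monotone nesting carries over.
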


We remark that \Cref{alg:basic-nonmono-short} also achieves a guarantee of $1-1/e-\eps$ for the monotone case, as \Cref{lem:uniform-gain} and \Cref{lem:greedy-analysis} both still apply to \Cref{alg:basic-nonmono-short} when $f$ is monotone. The main difference between the two is the sub-sampling (lines 7--13), which increases the running time of the algorithm.
\section{\texorpdfstring{$1-1/e$}{1-1/e} hardness for monotone submodular maximization}
The proofs of the following propositions and lemmas may be found in the appendix.
\label{sec:hardness}
\begin{proposition}\label{prop:lb}
Fix subsets $G,B$ of elements (denoting ``good'' and ``bad'') such that $|G|=k$ and $|B| = n-k$; let $r \in [0,k]$ be some parameter.  Let $m$ denote the size of the memory buffer, and let $p$ denote the probability that a random subset of size $m$ contains at least $r-1$ good elements. 
Let $f: G \cup B \rightarrow \mathbb{R}$ be a function that satisfies the following symmetries:
\begin{itemize}
\item $f$ is symmetric over good (resp. bad) elements, namely there exists $\hat{f}$ such that $f(S) = \hat{f}(|S \cap G|,|S \cap B|).$
\item For any set $S$ with $\le r-1$ good elements, $f$ does not distinguish between good and bad elements, namely for $g \le r-1$, $\hat{f}(g,b) = \hat{f}(0,b+g).$
\end{itemize}

Then any algorithm has expected value at most
\begin{gather}\label{eq:lb-prop} ALG \le (1-pk)\hat{f}(0,k) + pk\cdot OPT.
\end{gather}
\end{proposition}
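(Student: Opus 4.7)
The plan is a coupling-and-exchangeability argument in two stages. Let $g^\star := |\mathrm{ALG} \cap G|$ be the number of good elements in the algorithm's final output. Using the first symmetry of $\hat f$ we may write
\[
\E[f(\mathrm{ALG})] = \sum_{g=0}^k \Pr[g^\star = g]\,\hat f(g,k-g).
\]
By the second symmetry, $\hat f(g,k-g) = \hat f(0,k)$ for all $g \le r-1$; for $g \ge r$ we use the trivial bound $\hat f(g,k-g) \le OPT$. Combining these,
\[
\E[f(\mathrm{ALG})] \le \Pr[g^\star \le r-1]\,\hat f(0,k) + \Pr[g^\star \ge r]\,OPT.
\]
So the entire task reduces to proving $\Pr[g^\star \ge r] \le pk$.

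To bound that probability, I would couple the algorithm's execution on the true instance with an execution on an ``all-bad'' instance in which $f$ is replaced by $f'(S) := \hat f(0,|S|)$. Both executions share the random stream permutation $\pi$ and the algorithm's random coins. By the second symmetry, any query on a set $S$ with $|S \cap G| \le r-1$ returns the same value in both instances, so the two trajectories agree step-for-step up to the first time $\tau$ (if ever) at which the algorithm queries a set with at least $r$ good elements. Crucially, before $\tau$ the algorithm's choices --- which stream positions to retain in the memory buffer, which subsets of the buffer to query, and which elements to commit to the final output --- are a deterministic function of its coins and of the \emph{positions} in the stream, independent of which positions happen to carry good elements. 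Exchangeability of $\pi$ then implies that at each fixed step prior to $\tau$, the identities of the elements in the buffer form a uniformly random $m$-subset of $G \cup B$, so the probability that the buffer contains $\ge r-1$ good elements at that step is at most $p$.

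Finally, I would close the argument by a union bound over the (at most) $k$ ``promotion'' steps, i.e.\ the moments at which an element is committed to the final output. For the event $\{g^\star \ge r\}$ to occur, at least one promotion step must introduce a good element into the output, which requires that the buffer at that step contain a good element (in particular, at least $r-1$ good elements up to that promotion's contribution). By the exchangeability just established, each such event has probability at most $p$, and summing over the $k$ promotions gives $\Pr[g^\star \ge r] \le pk$ as desired. The main subtlety will be in formalizing this last union bound: one has to pin down the $k$ promotion moments precisely, verify that conditioning on ``no distinguishing query made so far'' preserves the uniform distribution of identities over the buffer positions, and handle the adaptive dependence between successive promotion events carefully so that the per-step probability $p$ can be applied at each of the $k$ moments independently of the history.
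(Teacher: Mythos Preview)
The paper actually does not include a proof of this proposition: it is stated in Section~\ref{sec:hardness}, restated in the appendix, but no \texttt{proof} environment follows. The only guidance is the informal sketch in Section~1.4 (``The only way to collect $\ge r$ good elements in the memory buffer is by chance \ldots''). Your overall architecture---reduce to bounding $\Pr[g^\star\ge r]$ via the two symmetries, then argue indistinguishability by coupling with the all-bad instance $f'(S)=\hat f(0,|S|)$---matches that sketch and is the natural route.

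The gap is in your final union bound. You take the union over the $k$ ``promotion steps'' and claim that $g^\star\ge r$ forces, at some promotion, the buffer to hold $\ge r-1$ good elements. That implication is false: the algorithm can promote $r$ good elements one at a time while the buffer never simultaneously holds more than one good element (imagine each promoted element being evicted before the next good one enters). So the event you union-bound over does not contain $\{g^\star\ge r\}$. The correct $k$-fold union bound is indexed by the $k$ \emph{good elements}, not the $k$ promotions: the buffer can only cross from $<r$ to $\ge r$ good elements at a time step when a good element is added, and there are at most $k$ such steps. At the first crossing, the buffer just before contains exactly $r-1$ good elements; since no distinguishing query has yet been made, the coupling is intact and the buffer positions are those of the all-bad run, so by exchangeability this event has probability at most $p$. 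Summing over the $k$ good arrivals gives $\Pr[\text{buffer ever holds }\ge r\text{ good}]\le pk$. You then need one more line you omitted: the algorithm's output is a subset of elements that have resided in the buffer, so if the buffer never holds $\ge r$ good elements then neither does the output, whence $f(\text{output})=\hat f(0,k)$ on that event. With these two fixes the argument closes; the adaptive-conditioning worry you flag is handled automatically because the coupling is valid precisely up to the first crossing.
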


We now consider a few different $f$'s that satisfy the desiderata of Proposition~\ref{prop:lb}.

\begin{lemma}[monotone submodular function~\cite{NW78Bound}]\label{lem:lb-submodular}
There exists a monotone submodular $f$ that satisfies the desiderata of Proposition~\ref{prop:lb} for $r = 2\eps k$, and such that:
\begin{itemize}
\item $\hat{f}(0,k) = 1-1/e + O(\eps)$.
\item $OPT = f(G) = 1$.
\end{itemize}
\end{lemma}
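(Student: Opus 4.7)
My plan is to construct $\hat f$ as a parameterized version of the classical Nemhauser--Wolsey~\cite{NW78Bound} tight example for the greedy algorithm. First I would define a concave non-decreasing base function $\psi:\{0,1,\ldots,k\}\to[0,1]$ with $\psi(0)=0$ and $\psi(k)=1-1/e+O(\varepsilon)$. A natural choice is to make $\psi$ piecewise linear with slope $1/k$ on $[0,r-1]$ and a smaller slope $c=\Theta((1-1/e)/k)$ on $[r-1,k]$; this makes $\psi$ concave and hits the target value at $k$.

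On the symmetric regime $g\le r-1$ I would simply declare $\hat f(g,b)=\psi(g+b)$. This immediately gives the required symmetry $\hat f(g,b)=\hat f(0,g+b)$ and implies $\hat f(0,k)=\psi(k)=1-1/e+O(\varepsilon)$. On the non-symmetric regime $g\ge r$ I would extend $\hat f$ so that each additional good element can contribute up to $1/k$ of value, calibrated so that along the axis $b=0$ we get $\hat f(g,0)=g/k$, and in particular $\hat f(k,0)=1=OPT=f(G)$. In the interior $\{g\ge r,\,b\ge 1\}$ the values are filled in by a weighted-coverage interpolation consistent with the boundary values along the $g=r-1$ and $b=0$ slices (this is exactly where the Nemhauser--Wolsey construction does the heavy lifting).

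Having defined $\hat f$, I would verify the four desiderata in turn: symmetry (built in for $g\le r-1$); $\hat f(0,k)=1-1/e+O(\varepsilon)$ (read off from $\psi$); $\hat f(k,0)=1$ (from the $b=0$ extension); and monotone submodularity (by checking that the discrete marginals $\Delta_g(g,b)=\hat f(g+1,b)-\hat f(g,b)$ and $\Delta_b(g,b)=\hat f(g,b+1)-\hat f(g,b)$ are non-negative and non-increasing in both $g$ and $b$, which is equivalent to submodularity for a function depending only on the pair $(g,b)$).

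The main obstacle will be verifying submodularity across the interface $g=r-1$. In the symmetric regime the good-element marginal equals $\psi'(g+b)$, which drops to $c<1/k$ once $g+b\ge r$; but the non-symmetric extension needs good-element marginals close to $1/k$ along $b=0$ in order to reach $\hat f(k,0)=1$. Reconciling these through a submodular interpolation in the interior---so that bad-element marginals gradually taper to zero, good-element marginals smoothly decrease in $b$, and no marginal ever increases as we add an element---is the crux of Nemhauser--Wolsey's construction, and is the step I would need to spell out most carefully (or alternatively cite directly from~\cite{NW78Bound}).
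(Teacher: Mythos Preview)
The paper does not actually prove this lemma: it is stated with an attribution to Nemhauser and Wolsey~\cite{NW78Bound} and is treated as a black-box citation throughout (the appendix restates Proposition~\ref{prop:lb} and proves only Lemma~\ref{lem:lb-exp} and Theorem~\ref{lem:lb-all}; Lemmas~\ref{lem:lb-submodular} and~\ref{lem:lb-poly} receive no proof). So there is no ``paper's own proof'' to compare against---the paper's approach \emph{is} your fallback option of citing~\cite{NW78Bound} directly.

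Your sketch goes beyond what the paper does, and it is headed in the right direction: a concave base function $\psi$ with initial slope $1/k$ that flattens after $r-1$ steps, an extension along the $b=0$ axis with $\hat f(g,0)=g/k$, and a submodular interpolation in between. You also correctly flag the only real difficulty, namely keeping the discrete marginals $\Delta_g,\Delta_b$ non-increasing across the interface $g=r-1$. A quick sanity check shows the constraints at the corner $(r,1)$ are jointly satisfiable precisely because $c\approx(1-1/e)/k>1/(2k)$, so there is no obstruction in principle; but you have not exhibited the full interpolation or verified submodularity globally, and you acknowledge as much. If you intend to write out a self-contained proof rather than cite~\cite{NW78Bound}, that interior construction is the step that needs to be made explicit. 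Otherwise, simply citing the reference---as the paper does---is sufficient.
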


\begin{lemma}[polynomial-universe coverage function~\cite{MV19}]\label{lem:lb-poly}
There exists a (monotone submodular) coverage function $f$ over a polynomial universe $U$ that satisfies the desiderata of Proposition~\ref{prop:lb} for $r = 2$, and such that:
\begin{itemize}
\item $\hat{f}(0,k) = (1-1/e + o(1))|U|$.
\item $OPT = f(G) = |U|$.
\end{itemize}
\end{lemma}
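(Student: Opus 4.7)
I would construct an explicit coverage function $f$ on a universe $U$ of polynomial size as follows. Take $U = [k] \times [N]$ for a suitable $N = \poly(n)$, define the good elements via the row partition $U_{g_i} = \{i\} \times [N]$ (which gives $|U_{g_i}| = N$ and $f(G) = kN = |U|$), and identify each bad element $b \in B$ with a function $\phi_b : [N] \to [k]$, setting $U_b = \{(\phi_b(j), j) : j \in [N]\}$ so that $|U_b| = N$. The family $\{\phi_b\}_{b \in B}$ will be chosen so that for any $s \leq k$ distinct bad elements $b_1,\ldots,b_s$ and any $j \in [N]$, the $s$-tuple $(\phi_{b_1}(j), \ldots, \phi_{b_s}(j))$ is distributed uniformly over $[k]^s$ as $j$ varies; this is realized via a $k$-wise uniform family (or an $\eps$-close variant in parameter ranges where exact $k$-uniformity would blow up $N$).

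The verification of the symmetries proceeds by counting intersections. For any $s$ bad elements, $|U_{b_1} \cap \ldots \cap U_{b_s}| = |\{j : \phi_{b_1}(j) = \ldots = \phi_{b_s}(j)\}|$, and by the uniformity property of $\{\phi_b\}$ this count is exactly $N/k^{s-1}$, a constant depending only on $s$. Similarly, for one good element $g_i$ and $s-1$ bad elements, $|U_{g_i} \cap U_{b_1} \cap \ldots \cap U_{b_{s-1}}| = |\{j : \phi_{b_1}(j) = i = \phi_{b_2}(j) = \ldots\}| = N/k^{s-1}$ by balancedness of each $\phi_b$ and uniformity of $\{\phi_b\}$. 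Feeding these numbers into inclusion--exclusion shows that $f(S) = \hat{f}(|S \cap G|, |S \cap B|)$ is well-defined (symmetry over $G$ and $B$) and that $\hat{f}(1,b) = \hat{f}(0, b+1)$ (the $r=2$ indistinguishability condition), since the all-bad and one-good-plus-rest-bad intersection counts coincide at every order $s$.

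The approximation value then comes from the classical Feige-type computation: $\hat{f}(0,k) = |U| \cdot (1 - (1-1/k)^k) = (1 - 1/e + o(1))|U|$, because at each column $j$ the values $\{\phi_{b_l}(j)\}_{l=1}^k$ are uniform in $[k]^k$, so the expected number of uncovered points $(i,j)$ is $k \cdot (1-1/k)^k$ in each column, giving coverage $N \cdot k(1 - (1-1/k)^k)$ overall.

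The main obstacle is the trade-off between exact $k$-wise uniformity and universe size: an exact strength-$k$ orthogonal array (via Reed--Solomon interpolation over $\mathbb{F}_q$ with $q \ge k$) requires $|U| \geq k \cdot q^k$, which is polynomial only when $k = O(1)$. For larger $k$ one replaces exact uniformity by an $\eps$-almost $k$-wise uniform family (e.g.\ from $\eps$-biased spaces) of size $\poly(n, k, 1/\eps)$, yielding intersection sizes $N/k^{s-1}(1 \pm \eps)$; the resulting $O(\eps k)$ slack in the symmetry condition is absorbed in the $o(1)$ of the claim and requires only a minor robust variant of \Cref{prop:lb}, whose proof already averages over histories.
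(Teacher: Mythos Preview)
The paper does not prove this lemma itself; it is cited from~\cite{MV19}. Your construction is natural, but there is a real gap between what it delivers and what Proposition~\ref{prop:lb} demands.

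The first bullet of Proposition~\ref{prop:lb} asks that $f(S)=\hat f(|S\cap G|,|S\cap B|)$ for \emph{every} $S$, so in particular $\hat f(0,s)$ must be well-defined for all $s\le n-k$. In your scheme this means the family $\{\phi_b\}$ must be $(n-k)$-wise uniform, not merely $k$-wise, which already forces $N\ge k^{n-k}$. Even relaxing to ``symmetry on every set the algorithm can query'' does not help: the relevant parameter is the memory $m$, not the cardinality bound $k$, since the lemma is invoked in Theorem~\ref{lem:lb-all} precisely to rule out buffers of size $m=\Theta(n/k^2)$. You would thus need $m$-wise uniformity, and standard orthogonal-array lower bounds then give $N\ge k^{\Theta(m)}$, again super-polynomial once $k=\omega(1)$. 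Your fallback to $\eps$-almost $k$-wise independence addresses neither point: the required strength is $m$ rather than $k$, and in the exact-oracle model a single off-by-one value of $f$ is detectable by the algorithm, so ``absorb the slack into a robust variant of Proposition~\ref{prop:lb}'' is a substantial new claim, not a minor tweak.

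The argument in~\cite{MV19} sidesteps this obstacle by not exhibiting one deterministic instance with exact symmetries at all; it works over a \emph{distribution} of instances (i.i.d.\ random bad sets, packaged as a communication-complexity reduction) under which a single good set and a single bad set are identically distributed. That distributional $r=2$ indistinguishability is what the present paper is informally folding into the hypotheses of Proposition~\ref{prop:lb}, and it is compatible with a universe of size $\poly(n)$. If you want to stay within your explicit-construction framework, the honest route is to state and prove the Yao/distributional analogue of Proposition~\ref{prop:lb} and then instantiate it with independently uniform $\phi_b$'s, rather than chasing derandomized families whose strength is intrinsically tied to the universe size.
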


\begin{lemma}[exponential-universe coverage function ({\bf new construction})]\label{lem:lb-exp}
There exists a (monotone submodular) coverage function $f$ over an exponential universe $U$ that satisfies the desiderata of Proposition~\ref{prop:lb} for $r = 3$, and such that:
\begin{itemize}
\item $\hat{f}(0,k) = (1-1/e + o(1))|U|$.
\item $OPT = f(G) = |U|$.
\end{itemize}
\end{lemma}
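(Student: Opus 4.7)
My plan is to extend the construction in \Cref{lem:lb-poly} by enlarging the universe to exponential size and letting good elements overlap in a shared core, so that every pair of elements (good--good, good--bad, bad--bad) intersects in exactly $|U|/k^2$; the $r=3$ symmetry requires this matching at the pair level, but permits the triple intersection of three good elements to be anomalously large, which is what will keep $f(G)$ close to $|U|$. Concretely, partition $U$ into a ``core'' $U_{\text{all}}$ of size $|U|/k^2$, private parts $U_1^{(1)},\dots,U_1^{(k)}$ each of size $|U|(k-1)/k^2$, and a residual $U_0$ of size $|U|(k-1)/k^2$, and set $S_{g_j}:=U_{\text{all}}\cup U_1^{(j)}$. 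Then $|S_{g_j}| = |U|/k$, the pairwise intersection $S_{g_i}\cap S_{g_j}=U_{\text{all}}$ has size $|U|/k^2$, and $f(G) = |U|-|U_0| = (1-o(1))|U|$, which is enough for \Cref{prop:lb} since only the ratio $\hat f(0,k)/OPT$ enters the bound.

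For bad elements I would use a symmetric combinatorial structure, realizing each of the four parts as copies of $\binom{[M]}{M/k}$ for exponentially large $M$, and declaring $u\in S_{b_i}$ iff $i\in u$. This yields $|S_{b_i}\cap U_{\text{part}}|=|U_{\text{part}}|/k$ for every part, hence $|S_{b_i}|=|U|/k$, and the whole function is invariant under the $S_M\times S_k$ action (permuting the bad labels on $[M]$ and the good labels on the private parts), so $f(S)$ depends only on $(|S\cap G|,|S\cap B|)$ as required for $\hat f$ to be well-defined. Using the hypergeometric ratio $\binom{M-t}{M/k-t}/\binom{M}{M/k}\to 1/k^t$ together with the structural equality $S_{g_i}\cap S_{g_j}=U_{\text{all}}$, every $t$-wise intersection involving at most two good elements works out to $|U|/k^t$, which establishes the $r=3$ symmetry of \Cref{prop:lb}. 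Crucially, the three-fold intersection of good elements equals $|U_{\text{all}}|=|U|/k^2$ rather than the ``random-like'' $|U|/k^3$; this is the slack that lets $f(G)=(1-o(1))|U|$ and is permitted because $r=3$ demands symmetry only for sets with at most $r-1=2$ goods. For $\hat f(0,k)$, each $u\in U$ lies in a given $S_{b_i}$ with probability $1/k$, so the standard $1-(1-1/k)^k$ computation applied part-by-part gives $\hat f(0,k)=(1-1/e+o(1))|U|$.

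The main obstacle is that the hypergeometric ratios match $1/k^t$ only up to $O(k/M)$ relative error, so the symmetry obtained from the explicit $\binom{[M]}{M/k}$ structure is approximate rather than exact. I would resolve this by taking $M$ super-polynomial in $k$ so the slack is absorbed into the $o(1)$ of the final bound, or equivalently by replacing the explicit structure with a random construction of bad elements (choosing each $S_{b_i}\cap U_{\text{part}}$ uniformly among subsets of the right size) and invoking \Cref{prop:lb} in expectation over $f$; the required concentration over all $\binom{n-k}{k}$ subsets of bads is straightforward because $|U|$ is exponential and the variance of each $t$-wise intersection is polynomially small in $|U|$.
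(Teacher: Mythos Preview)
Your high-level picture is right and matches the paper's: good sets share a common ``core'' while otherwise being disjoint, so pairwise good--good intersections look like random pairs, but triple good--good--good intersections stay large; bad sets behave like random $1/k$-fractions. However, your realization via copies of $\binom{[M]}{M/k}$ only gives \emph{approximate} equality of the relevant intersection sizes, and this is a genuine gap. Concretely, with your setup one checks that $\hat f(1,1)\neq\hat f(0,2)$: the bad--bad pairwise intersection in each copy is $\binom{M-2}{M/k-2}$ while the good--bad intersection contributes $\binom{M-1}{M/k-1}$ per covered copy, and the ratio $\frac{M-1}{M/k-1}$ equals $k$ only when $k=1$. The discrepancy is $O(k^2/M)$ in relative terms, but Proposition~\ref{prop:lb} is stated for \emph{exact} indistinguishability and the lower bound is against algorithms with unbounded queries and computation; such an algorithm can read off pairwise intersection sizes from $f$ and use the tiny systematic bias to flag a single good element in its buffer, collapsing you back to the $r=2$ regime. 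Your proposed patches (take $M$ super-polynomial, or randomize the bads and argue in expectation) would each require a nontrivial additional argument that the asymmetry is information-theoretically undetectable with the given memory; neither is supplied, and the randomized route in particular has to cope with the fact that good--good intersections are deterministic while bad--bad intersections fluctuate. (As a side note, your text refers to ``four parts'' but your partition has $k+2$ parts of unequal sizes, so some scaling by multiple copies is implicit but not spelled out.)

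The paper sidesteps all of this with a short exact construction on the hypercube $U=[k]^n$. Bad sets are coordinate slices $B_i=\{x:x_i=1\}$ for $i>k$; good sets use two coordinates, $G_j=\{x:x_1=j,\,x_2\neq k\}\cup\{x:x_1=k,\,x_2=k\}$. One line of counting shows that any intersection of $b$ bads and $g\le 2$ goods has size exactly $k^{n-(b+g)}$, so $\hat f(g,b)=\hat f(0,g+b)$ holds \emph{exactly} for $g\le 2$, while $|\bigcup_j G_j|=(1-1/k+1/k^2)k^n$. This gives the desiderata of Proposition~\ref{prop:lb} with $r=3$ without any approximation or randomization. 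The moral: your ``core + private parts'' intuition is exactly what is going on, but to satisfy the exact symmetry required by Proposition~\ref{prop:lb} you want the bad sets to come from a product structure (coordinate slices), not a hypergeometric one.
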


Our main hardness result follows from the lemmas above:
\begin{theorem}
\label{lem:lb-all}
Any $(1-1/e+\eps)$-approximation algorithm in the random order strong oracle model must use the following memory:
\begin{itemize}
\item $\Omega(n)$ for a general monotone submodular function.
\item $\Omega(n/k^2)$ for a coverage function over a polynomial universe.
\item $\Omega(n/k^{3/2})$ for a coverage function over an exponential universe.
\end{itemize}
\end{theorem}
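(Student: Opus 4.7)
The plan is to derive Theorem~\ref{lem:lb-all} by instantiating Proposition~\ref{prop:lb} with each of the three submodular functions constructed in Lemmas~\ref{lem:lb-submodular}, \ref{lem:lb-poly}, and \ref{lem:lb-exp}, and then translating the resulting constraint on $p$ into a lower bound on the memory $m$. First I would substitute $\hat{f}(0,k) = (1 - 1/e + o(1))\,OPT$ into \eqref{eq:lb-prop} to obtain
$$ALG/OPT \;\le\; (1 - 1/e) + pk/e + o(1),$$
so that ruling out a $(1 - 1/e + \eps)$-approximation reduces in each regime to forcing $pk = O(\eps)$. Since $p$ is the probability that a uniformly random size-$m$ subset of the $n$ elements contains at least $r-1$ of the $k$ good elements, this will translate into a memory lower bound parameterized by $r$.

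Next I would bound $p$ by a one-line union bound: summing over the $\binom{m}{r-1}$ candidate $(r-1)$-subsets of the buffer, and using that any fixed $(r-1)$-tuple is entirely good with probability at most $(k/n)^{r-1}$, yields $p \le \bigl(emk/((r-1)n)\bigr)^{r-1}$. Plugging in $r = 2\eps k$ from Lemma~\ref{lem:lb-submodular} gives $p \le (em/(2\eps n))^{2\eps k}$, which is $o(1/k)$ whenever $m = O(\eps n)$, forcing $m = \Omega(n)$ for constant $\eps$. Plugging in $r = 2$ from Lemma~\ref{lem:lb-poly} gives $p \le mk/n$, so $pk \le mk^2/n = O(\eps)$ forces $m = \Omega(\eps n/k^2)$. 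Plugging in $r = 3$ from Lemma~\ref{lem:lb-exp} gives $p \le e^2 m^2 k^2/(4n^2)$, so $pk = O(m^2 k^3/n^2) = O(\eps)$ forces $m = \Omega(\sqrt{\eps}\,n/k^{3/2})$. Each case then immediately delivers the corresponding clause of the theorem.

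The genuinely delicate step is the proof of Proposition~\ref{prop:lb} itself rather than its application: one must argue that the adaptively-maintained memory buffer of a random-order strong-oracle algorithm is distributed as a uniformly random size-$m$ subset, at least up to the first time that enough good elements coincide inside a query. The natural way to see this is to couple the algorithm's execution with a run on the all-bad instance obtained by relabeling the good elements; by the symmetry hypothesis the two runs are identically distributed up to the hitting time at which some query contains $r$ good elements, and a union bound over the $k$ possible identities of the next optimal element to appear supplies the extra factor of $k$ in the $pk$ term. Once Proposition~\ref{prop:lb} is in hand, the three plug-in calculations above finish the theorem. The only lingering subtlety is verifying that the crude union-bound estimate for $p$ is sharp enough in the $r = 2\eps k$ regime so that polynomial factors in $k$ do not erode the conclusion; if necessary, a direct Chernoff tail bound on the underlying hypergeometric random variable handles this without changing any of the derived memory thresholds.
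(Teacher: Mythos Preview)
Your proposal is correct and follows essentially the same route as the paper: instantiate Proposition~\ref{prop:lb} with each of Lemmas~\ref{lem:lb-submodular}--\ref{lem:lb-exp} and then bound $p$ (the paper uses a Chernoff bound, Markov's inequality, and a pairwise union bound for the three cases respectively, which are equivalent in strength to your uniform union-bound calculation). Your extra paragraph sketching the coupling argument for Proposition~\ref{prop:lb} goes beyond what the paper does here---the paper simply applies the proposition as a black box in the proof of this theorem.
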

\section{Experimental results}\label{sec:experiments}
In the following section, we give experimental results for our monotone streaming algorithm. Due to space limitations, the experiments for the non-monotone algorithm can be found in the appendix. 
Our main goal is to show that our algorithm performs well in a practical setting and is simple to implement. In fact, we show that our algorithm is on par with \emph{offline} algorithms in performance, and returns competitive solutions across a variety of datasets. All experiments were performed on a 2.7 GHz dual-core Intel i7 CPU with 16 GB of RAM. 

We compare the approximation ratios obtained by our algorithm with three benchmarks: 
\begin{itemize}
    \item The {\em offline} {\sc LazyGreedy} algorithm~\cite{M78}, which is both theoretically optimal and obtains the same solution as greedy (in faster time). Note that we don't expect to outperform it with a streaming algorithm; but as we hoped, our algorithm comes close.
    \item The  {\sc SieveStreaming} algorithm of Badanidiyuru et al.~\cite{BMKK14}, which is the first algorithm to appear for adversarial streaming submodular optimization.
    \item The {\sc Salsa} algorithm of Norouzi-Fard et al.~\cite{NAJS18}, which is the first ``beyond $1/2$" approximation algorithm for random-order streams. This algorithm runs several varients of {\sc SieveStreaming} in parallel with thresholds that change as the algorithm progresses through the stream.
    \footnote{{\sc SieveStreaming} is also known as threshold greedy in the literature~\cite{BV14}. Note that the later {\sc SieveStreaming++} algorithm of~\cite{KMZLK19} is more efficient, but for approximation ratio  \textsc{SieveStreaming} is a stronger benchmark.}
    for {\em adversarial} order streaming. As we would expect, our algorithm performs better on random arrival streams.
\end{itemize}

Note that in terms of memory use, our algorithm is strictly more efficient. The analysis in previous sections show that the memory is $O(k/\eps)$ (with a small constant), versus $O(k \log k / \eps)$ for both {\sc SieveStreaming} and {\sc Salsa}. Thus in the experiments below, we focus on the approximation ratio obtained by our algorithm.


\begin{figure}[ht]
    \centering
    \includegraphics[width=0.32\textwidth]{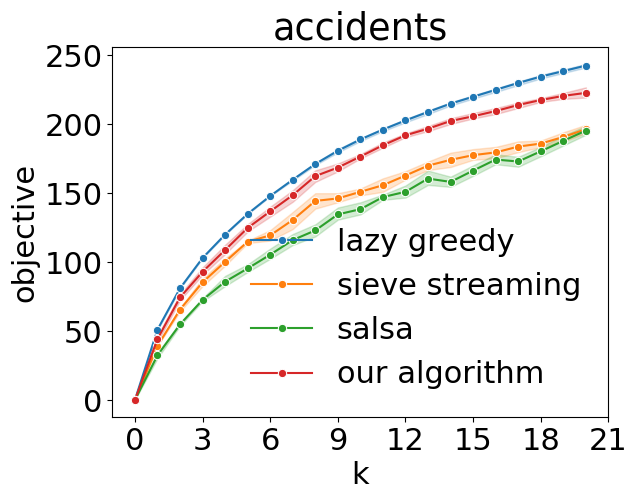}
    \includegraphics[width=0.32\textwidth]{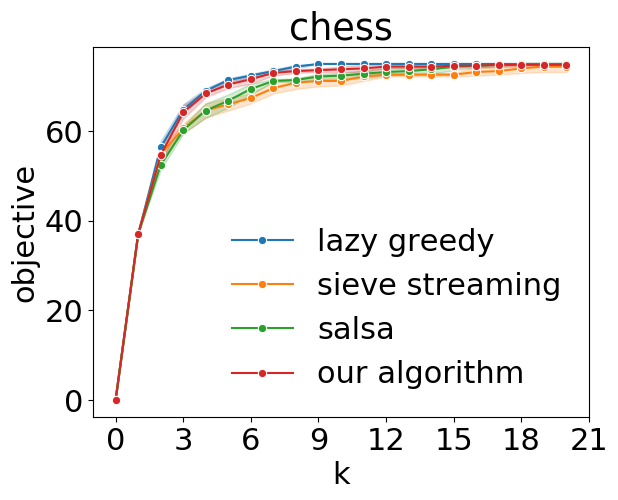}
    \includegraphics[width=0.32\textwidth]{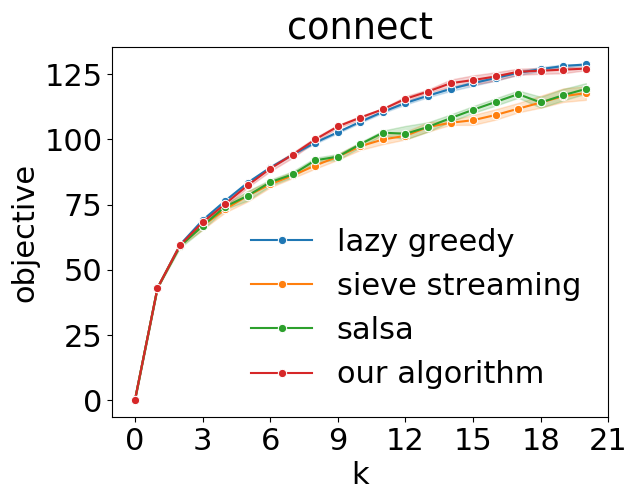}
    \includegraphics[width=0.32\textwidth]{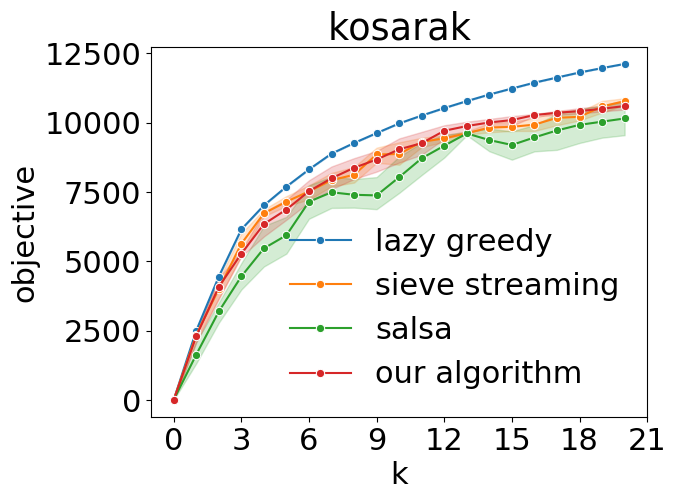}
    \includegraphics[width=0.32\textwidth]{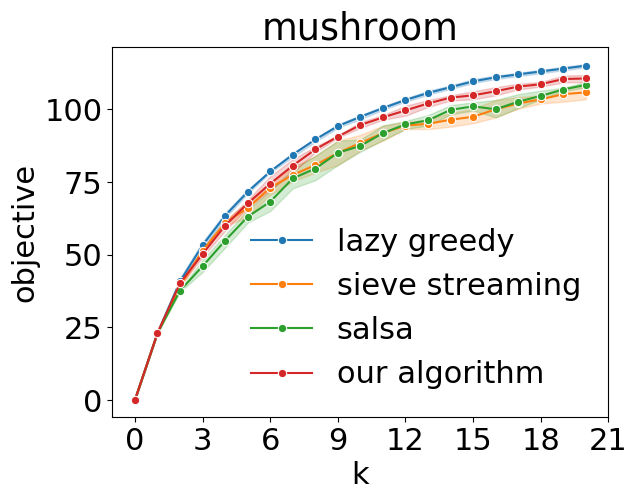}
    \includegraphics[width=0.32\textwidth]{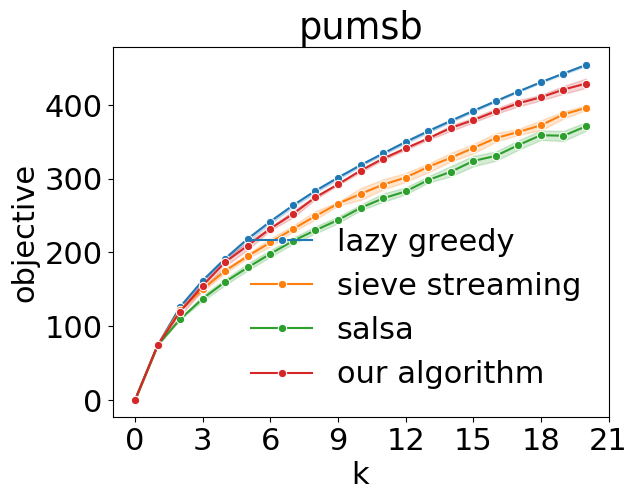}
    \includegraphics[width=0.32\textwidth]{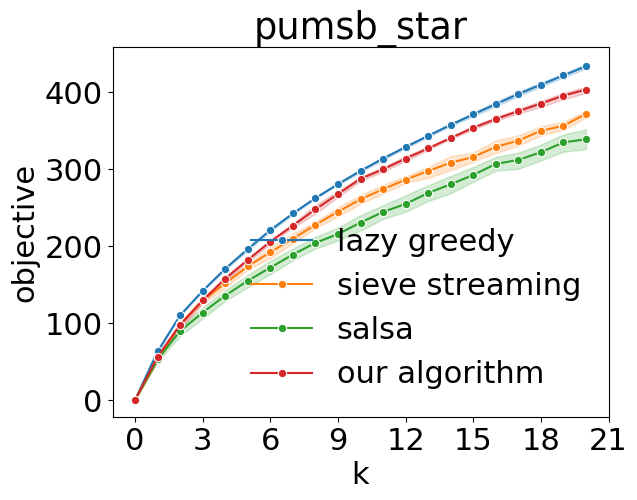}
    \includegraphics[width=0.32\textwidth]{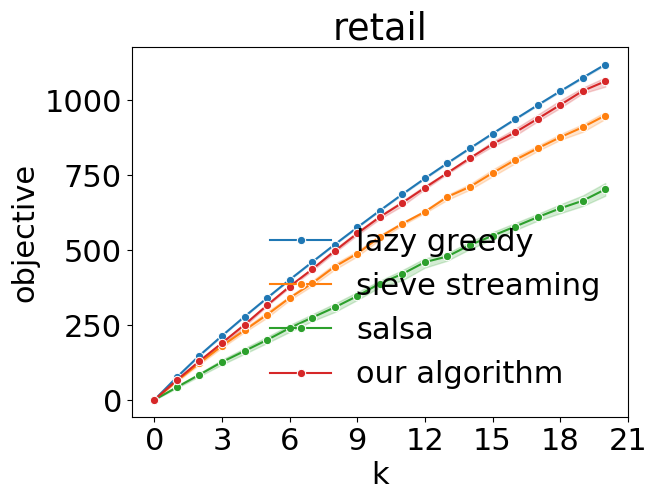}
    \includegraphics[width=0.32\textwidth]{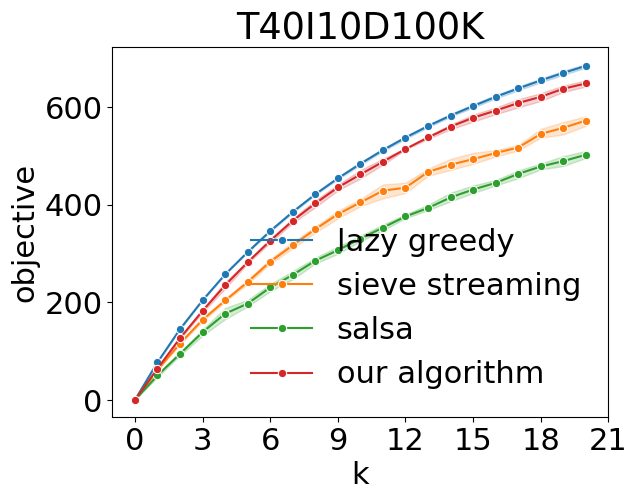}
    \caption{Performance of {\color{blue} lazy greedy}, {\color{orange} sieve streaming}, and {\color{teal} our algorithm} on each data set (averaged across 10 runs, shaded regions represent variance across different random orderings). 
    }
    \label{fig:experiments}
\end{figure}

\paragraph{Datasets}
Our datasets are drawn from set coverage instances from the 2003 and 2004 workshops on Frequent Itemset Mining Implementations~\cite{FIMI03} and the Steiner triple instances of Beasley~\cite{B87}. For each data set we run the three algorithms for cardinality constraints varying from $k=1$ to $20$. The results of the algorithms are averaged across 10 random stream orderings. \Cref{tab:data} describes the data sources. \Cref{fig:experiments} shows the performance of the three algorithms on each data set. All code can be found at \url{https://github.com/where-is-paul/submodular-streaming} and all datasets can be found at \url{https://tinyurl.com/neurips-21}.

\begin{table}[ht]
\centering
\begin{tabular}{l|l|l|l}
{\bf dataset}     & {\bf source}                                 & {\bf \# of sets} & {\bf universe size} \\ \hline
accidents   & (anonymized) traffic accident data     & 340183         & 468           \\
chess       & UCI ML Repository                      & 3196           & 75            \\
connect     & UCI ML Repository                      & 67557          & 129           \\
kosarak     & (anonymized) click-stream data         & 990002         & 41270         \\
mushroom    & UCI ML Repository                      & 8124           & 119           \\
pumsb       & census data for population and housing & 49046          & 7116          \\
pumsb\_star  & census data for population and housing & 49046          & 7116          \\
retail      & (anonymized) retail market basket data & 88162          & 16469         \\
T40I10D100K & generator from IBM Quest research      & 100000         & 999          
\end{tabular}
\caption{Description of data sources for monotone experiments.}
\label{tab:data}
\end{table}
\section{Conclusion and Future Work}
In this work, we have presented memory-optimal algorithms for the problem maximizing submodular functions with respect to cardinality constraints in the random order streaming model. Our algorithms achieve an optimal approximation factor of $1-1/e$ for the monotone submodular case, and an approximation factor of $1/e$ for the non-monotone case. In addition to theoretical guarantees, we show that the algorithm outperforms existing state-of-the-art on a variety of datasets. 

We close with a few open questions that would make for interesting future work. Although our algorithm is memory-optimal, it is not runtime-optimal. In particular, the {\sc SieveStreaming}~\cite{BMKK14} and {\sc Salsa}~\cite{NAJS18} algorithms both run in time $O(n \log k / \eps)$, whereas our algorithm runs in time $O(n \sqrt{k\log k} / \eps)$. The non-monotone variant of our algorithm runs even slower, as it needs to perform sub-sampling operations that take at least $O(\alpha^3 k^{5/2})$ per stream element in its current form. Improving this runtime would greatly improve the practicality of our algorithm for extremely large cardinality constraints. Finally, there has been recent interest in examining streaming algorithms for streams under ``adversarial injection"~\cite{GKRS20}. In such streams, the optimal elements of the stream are randomly ordered, while adversarial elements can be injected between the optimal elements with no constraints. Despite the seemingly large power of the adversary, the $1/2$ approximation barrier can still be broken in this model. It would be interesting to see if the work in this paper can be extended to such a setting.

\begin{ack}
The authors are indebted to Mohammad Shadravan and Morteza Monemizadeh for their insightful discussions that no doubt improved this work. The first author is supported by a VMWare Fellowship and the Natural Sciences and Engineering Research Council of Canada. The second and fourth authors are supported by NSF CCF-1954927, and the second author is additionally supported by a David and Lucile Packard Fellowship.
\end{ack}

\bibliographystyle{alpha}
\bibliography{main} 

\newcommand{\etalchar}[1]{$^{#1}$}
\begin{thebibliography}{HKMY20}

\bibitem[AEF{\etalchar{+}}20]{AEMHS20}
Naor Alaluf, Alina Ene, Moran Feldman, Huy~L. Nguyen, and Andrew Suh.
\newblock Optimal {Streaming} {Algorithms} for {Submodular} {Maximization} with
  {Cardinality} {Constraints}.
\newblock {\em arXiv:1911.12959 [cs]}, February 2020.
\newblock arXiv: 1911.12959.

\bibitem[AF19]{AF19}
Naor Alaluf and Moran Feldman.
\newblock Making a {Sieve} {Random}: {Improved} {Semi}-{Streaming} {Algorithm}
  for {Submodular} {Maximization} under a {Cardinality} {Constraint}.
\newblock {\em arXiv:1906.11237 [cs]}, June 2019.
\newblock arXiv: 1906.11237.

\bibitem[ASS19]{SMC19}
Shipra Agrawal, Mohammad Shadravan, and Cliff Stein.
\newblock Submodular {Secretary} {Problem} with {Shortlists}.
\newblock In {\em 10th Innovations in Theoretical Computer Science Conference,
  {ITCS} 2019, January 10-12, 2019, San Diego, California, {USA}}, volume 124
  of {\em LIPIcs}, pages 1:1--1:19. Schloss Dagstuhl - Leibniz-Zentrum
  f{\"{u}}r Informatik, 2019.

\bibitem[Bea87]{B87}
J.E. Beasley.
\newblock An algorithm for set covering problem.
\newblock {\em European Journal of Operational Research}, 31(1):85 -- 93, 1987.

\bibitem[BFNS14]{BFNS14}
Niv Buchbinder, Moran Feldman, Joseph Naor, and Roy Schwartz.
\newblock Submodular {Maximization} with {Cardinality} {Constraints}.
\newblock In Chandra Chekuri, editor, {\em Proceedings of the Twenty-Fifth
  Annual {ACM-SIAM} Symposium on Discrete Algorithms, {SODA} 2014, Portland,
  Oregon, USA, January 5-7, 2014}, pages 1433--1452. {SIAM}, 2014.

\bibitem[BMKK14]{BMKK14}
Ashwinkumar Badanidiyuru, Baharan Mirzasoleiman, Amin Karbasi, and Andreas
  Krause.
\newblock Streaming {Submodular} {Maximization}: {Massive} {Data}
  {Summarization} on the {Fly}.
\newblock In Sofus~A. Macskassy, Claudia Perlich, Jure Leskovec, Wei Wang, and
  Rayid Ghani, editors, {\em The 20th {ACM} {SIGKDD} International Conference
  on Knowledge Discovery and Data Mining, {KDD} '14, New York, NY, {USA} -
  August 24 - 27, 2014}, pages 671--680. {ACM}, 2014.

\bibitem[BV14]{BV14}
Ashwinkumar Badanidiyuru and Jan Vondr{\'{a}}k.
\newblock Fast {Algorithms} for {Maximizing} {Submodular} {Functions}.
\newblock In Chandra Chekuri, editor, {\em Proceedings of the Twenty-Fifth
  Annual {ACM-SIAM} Symposium on Discrete Algorithms, {SODA} 2014, Portland,
  Oregon, USA, January 5-7, 2014}, pages 1497--1514. {SIAM}, 2014.

\bibitem[CGQ15]{CGQ15}
Chandra Chekuri, Shalmoli Gupta, and Kent Quanrud.
\newblock Streaming algorithms for submodular function maximization.
\newblock In Magn{\'{u}}s~M. Halld{\'{o}}rsson, Kazuo Iwama, Naoki Kobayashi,
  and Bettina Speckmann, editors, {\em Automata, Languages, and Programming -
  42nd International Colloquium, {ICALP} 2015, Kyoto, Japan, July 6-10, 2015,
  Proceedings, Part {I}}, volume 9134 of {\em Lecture Notes in Computer
  Science}, pages 318--330. Springer, 2015.

\bibitem[Doe20]{D20}
Benjamin Doerr.
\newblock Probabilistic {Tools} for the {Analysis} of {Randomized}
  {Optimization} {Heuristics}.
\newblock {\em arXiv:1801.06733 [cs, math]}, pages 1--87, 2020.
\newblock arXiv: 1801.06733.

\bibitem[Fei98]{Feige98}
Uriel Feige.
\newblock A {Threshold} of ln \emph{n} for {Approximating} {Set} {Cover}.
\newblock {\em J. {ACM}}, 45(4):634--652, 1998.

\bibitem[FKK18]{FKK18}
Moran Feldman, Amin Karbasi, and Ehsan Kazemi.
\newblock Do less, get more: Streaming submodular maximization with
  subsampling.
\newblock In Samy Bengio, Hanna~M. Wallach, Hugo Larochelle, Kristen Grauman,
  Nicol{\`{o}} Cesa{-}Bianchi, and Roman Garnett, editors, {\em Advances in
  Neural Information Processing Systems 31: Annual Conference on Neural
  Information Processing Systems 2018, NeurIPS 2018, 3-8 December 2018,
  Montr{\'{e}}al, Canada}, pages 730--740, 2018.

\bibitem[FNSZ20]{FNSZ20}
Moran Feldman, Ashkan Norouzi{-}Fard, Ola Svensson, and Rico Zenklusen.
\newblock The {One-way} {Communication} {Complexity} of {Submodular}
  {Maximization} with {Applications} to {Streaming} and {Robustness}.
\newblock In Konstantin Makarychev, Yury Makarychev, Madhur Tulsiani, Gautam
  Kamath, and Julia Chuzhoy, editors, {\em Proccedings of the 52nd Annual {ACM}
  {SIGACT} Symposium on Theory of Computing, {STOC} 2020, Chicago, IL, USA,
  June 22-26, 2020}, pages 1363--1374. {ACM}, 2020.

\bibitem[GKRS20]{GKRS20}
Paritosh Garg, Sagar Kale, Lars Rohwedder, and Ola Svensson.
\newblock Robust algorithms under adversarial injections.
\newblock In Artur Czumaj, Anuj Dawar, and Emanuela Merelli, editors, {\em 47th
  International Colloquium on Automata, Languages, and Programming, {ICALP}
  2020, July 8-11, 2020, Saarbr{\"{u}}cken, Germany (Virtual Conference)},
  volume 168 of {\em LIPIcs}, pages 56:1--56:15. Schloss Dagstuhl -
  Leibniz-Zentrum f{\"{u}}r Informatik, 2020.

\bibitem[HKMY20]{HKMY20}
Chien-Chung Huang, Naonori Kakimura, Simon Mauras, and Yuichi Yoshida.
\newblock Approximability of {Monotone} {Submodular} {Function} {Maximization}
  under {Cardinality} and {Matroid} {Constraints} in the {Streaming} {Model}.
\newblock {\em arXiv:2002.05477 [cs]}, February 2020.
\newblock arXiv: 2002.05477.

\bibitem[HTW20]{HTW20}
Chien{-}Chung Huang, Theophile Thiery, and Justin Ward.
\newblock Improved multi-pass streaming algorithms for submodular maximization
  with matroid constraints.
\newblock In Jaroslaw Byrka and Raghu Meka, editors, {\em Approximation,
  Randomization, and Combinatorial Optimization. Algorithms and Techniques,
  {APPROX/RANDOM} 2020, August 17-19, 2020, Virtual Conference}, volume 176 of
  {\em LIPIcs}, pages 62:1--62:19. Schloss Dagstuhl - Leibniz-Zentrum f{\"{u}}r
  Informatik, 2020.

\bibitem[IKBA20]{IKBA20}
Rishabh~K. Iyer, Ninad Khargoankar, Jeff~A. Bilmes, and Himanshu Asanani.
\newblock Submodular {Combinatorial} {Information} {Measures} with
  {Applications} in machine learning.
\newblock {\em CoRR}, abs/2006.15412, 2020.

\bibitem[IV19]{IV19}
Piotr Indyk and Ali Vakilian.
\newblock {Tight Trade-offs for the Maximum k-Coverage Problem in the General
  Streaming Model}.
\newblock In Dan Suciu, Sebastian Skritek, and Christoph Koch, editors, {\em
  Proceedings of the 38th {ACM} {SIGMOD-SIGACT-SIGAI} Symposium on Principles
  of Database Systems, {PODS} 2019, Amsterdam, The Netherlands, June 30 - July
  5, 2019}, pages 200--217. {ACM}, 2019.

\bibitem[KMZ{\etalchar{+}}19]{KMZLK19}
Ehsan Kazemi, Marko Mitrovic, Morteza Zadimoghaddam, Silvio Lattanzi, and Amin
  Karbasi.
\newblock Submodular {Streaming} in {All} its {Glory:} {Tight} {Approximation},
  {Minimum} {Memory} and {Low} {Adaptive} {Complexity}.
\newblock In Kamalika Chaudhuri and Ruslan Salakhutdinov, editors, {\em
  Proceedings of the 36th International Conference on Machine Learning, {ICML}
  2019, 9-15 June 2019, Long Beach, California, {USA}}, volume~97 of {\em
  Proceedings of Machine Learning Research}, pages 3311--3320. {PMLR}, 2019.

\bibitem[LSK{\etalchar{+}}21]{LSKWP21}
Paul Liu, Akshay Soni, Eun~Yong Kang, Yajun Wang, and Mehul Parsana.
\newblock Diversity on the go! streaming determinantal point processes under a
  maximum induced cardinality objective.
\newblock In {\em The Web Conference 2021, {TheWebConf} 2021}, 2021.

\bibitem[Min78]{M78}
Michel Minoux.
\newblock Accelerated greedy algorithms for maximizing submodular set
  functions.
\newblock {\em Optimization Techniques}, pages 234--243, 1978.

\bibitem[MV19]{MV19}
Andrew McGregor and Hoa~T. Vu.
\newblock Better {Streaming} {Algorithms} for the {Maximum} {Coverage}
  {Problem}.
\newblock {\em Theory Comput. Syst.}, 63(7):1595--1619, 2019.

\bibitem[NTM{\etalchar{+}}18]{NAJS18}
Ashkan Norouzi{-}Fard, Jakub Tarnawski, Slobodan Mitrovic, Amir Zandieh,
  Aidasadat Mousavifar, and Ola Svensson.
\newblock Beyond 1/2-approximation for submodular maximization on massive data
  streams.
\newblock In Jennifer~G. Dy and Andreas Krause, editors, {\em Proceedings of
  the 35th International Conference on Machine Learning, {ICML} 2018,
  Stockholmsm{\"{a}}ssan, Stockholm, Sweden, July 10-15, 2018}, volume~80 of
  {\em Proceedings of Machine Learning Research}, pages 3826--3835. {PMLR},
  2018.

\bibitem[NW78]{NW78Bound}
George~L Nemhauser and Laurence~A Wolsey.
\newblock Best algorithms for approximating the maximum of a submodular set
  function.
\newblock {\em Mathematics of operations research}, 3(3):177--188, 1978.

\bibitem[NWF78]{NWF78}
George~L. Nemhauser, Laurence~A. Wolsey, and Marshall~L. Fisher.
\newblock An {Analysis} of {Approximations} for {Maximizing} {Submodular} {Set}
  {Functions} - {I}.
\newblock {\em Math. Program.}, 14(1):265--294, 1978.

\bibitem[oDM04]{FIMI03}
IEEE International~Conference on~Data~Mining.
\newblock {Workshops on Frequent Itemset Mining Implementations}, 2003 and
  2004.
\newblock {\tt http://fimi.uantwerpen.be/data/}.

\bibitem[Sha20]{S20b}
Mohammad Shadravan.
\newblock Improved submodular secretary problem with shortlists.
\newblock {\em CoRR}, abs/2010.01901, 2020.

\end{thebibliography}


\pagebreak
\appendix
\section*{Appendix}

\section{Missing proofs from \texorpdfstring{\Cref{sec:main-analysis}}{Section 2}}

We begin with a simple lemma showing that the values of the levels are \emph{monotone}:
\begin{lemma}
\label{lem:subset-property}
For all $i$ and $\ell$, $f(L_{\ell}^i) \leq f(L_{\ell + 1}^{i+1})$ and $f(L_{\ell}^i) \leq f(L_{\ell + 1}^{i})$.
\end{lemma}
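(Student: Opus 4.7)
The plan is to prove the two inequalities jointly by induction on the window index $i$, hanging everything on one structural observation: once the body of the \texttt{if} on line~\ref{algline:improve-avg} is entered, every individual $f(L_\ell^i)$ is non-decreasing through the remainder of the window. Concretely, (i) on line~\ref{algline:assign-range} we set $L_{\ell+1}^i \gets L_\ell^{i-1}\cup\{e^\star\}$ for $\ell \in [z_l,z_h]$, and by monotonicity of $f$ this gives $f(L_{\ell+1}^i) \geq f(L_\ell^{i-1})$; (ii) inside the smoothing loop (lines~\ref{algline:smooth-levels-begin}--\ref{algline:smooth-levels-end}), the update fires only when $f(L_\ell^i) \geq f(L_{\ell+1}^i)$, and it replaces $L_{\ell+1}^i$ by $L_\ell^i \cup \{e^*\}$, whose $f$-value is $\geq f(L_\ell^i) \geq$ the old $f(L_{\ell+1}^i)$ by monotonicity. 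Thus smoothing never decreases $f(L_{\ell'}^i)$ at any level $\ell'$.

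The base case $i=0$ is trivial since all $L_\ell^0 = \emptyset$. For the inductive step, first dispose of the case when the sum condition on line~\ref{algline:improve-avg} fails at window $i$: then $L_\ell^i = L_\ell^{i-1}$ for every $\ell$, and both desired inequalities at $i$ follow directly from the inductive hypothesis applied at $i-1$. Otherwise the body executes; I then handle the two inequalities separately. For the within-window inequality $f(L_\ell^i) \leq f(L_{\ell+1}^i)$, note that iteration $\ell$ of the smoothing loop ends with this specific pair in the correct order, and no later iteration modifies $L_\ell^i$ or $L_{\ell+1}^i$ (iteration $\ell+1$ only edits $L_{\ell+2}^i$), so the inequality persists. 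The boundary $\ell=0$ uses $f(L_0^i)=0$ and non-negativity of $f$.

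For the cross-window inequality $f(L_\ell^i) \leq f(L_{\ell+1}^{i+1})$, I split on whether $\ell$ lies in the range $[z_l,z_h]$ of window $i+1$. If it does, observation (i) applied in window $i+1$ gives $f(L_{\ell+1}^{i+1}) \geq f(L_\ell^i)$ immediately after line~\ref{algline:assign-range}; if not, after the copy on line~6 we have $L_{\ell+1}^{i+1} = L_{\ell+1}^i$, and the already-proved within-window inequality at $i$ gives $f(L_{\ell+1}^i) \geq f(L_\ell^i)$. In either subcase, observation (ii) ensures the subsequent smoothing pass at window $i+1$ can only keep $f(L_{\ell+1}^{i+1})$ the same or increase it, preserving the bound. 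The main thing to be careful about is notational: tracking the three phases within one window (copy, line-\ref{algline:assign-range} replacement, smoothing) and remembering that when the smoothing loop reads $L_\ell^i$ on line~\ref{algline:smooth-levels-end}, this is the value after earlier smoothing iterations have already modified it; observation (ii) is an iteration-level invariant, so this sequential dependence causes no trouble.
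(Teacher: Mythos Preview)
Your approach is essentially the paper's: establish the within-window inequality via the smoothing loop, then derive the cross-window inequality by case-splitting on what happens when level $\ell+1$ is updated. Your observations (i) and (ii) are exactly the right ingredients, and your treatment of the smoothing loop is more careful than the paper's one-line appeal to lines~\ref{algline:smooth-levels-begin}--\ref{algline:smooth-levels-end}.

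There is, however, a real organizational gap. Your inductive case-split is on whether the body executes at window~$i$, but the cross-window inequality $f(L_\ell^i) \leq f(L_{\ell+1}^{i+1})$ depends on what happens at window~$i{+}1$. In your ``body fails at window $i$'' case you assert that both inequalities at~$i$ follow from the inductive hypothesis at~$i{-}1$; this is false for the cross-window inequality, since the hypothesis at~$i{-}1$ says nothing about $L^{i+1}$. Similarly, in your cross-window paragraph you split on whether $\ell \in [z_l^{i+1}, z_h^{i+1}]$ and invoke line~\ref{algline:assign-range} at window~$i{+}1$, implicitly assuming the body executes there; the case where window~$i{+}1$'s condition fails is never addressed.

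The fix is simple and matches what the paper does: first establish the within-window inequality at~$i$ (by induction if the body fails at~$i$, by your smoothing argument otherwise). Then, for the cross-window inequality, case-split on whether the body executes at window~$i{+}1$. If it fails, $L_{\ell+1}^{i+1} = L_{\ell+1}^i$ and the within-window inequality at~$i$ finishes it. If it executes, your existing two-subcase argument (with observation (ii) guaranteeing smoothing only helps) goes through verbatim. Your cross-window argument in fact never uses that the body executed at window~$i$---only that within-window holds at~$i$---so this reorganization costs nothing.
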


\begin{proof}
First, we note that the second part of the lemma holds by lines~\ref{algline:smooth-levels-begin}--\ref{algline:smooth-levels-end}. Let $z_l^i$ and $z_h^i$ be the value of $z_l$ and $z_h$ in \Cref{alg:basic} on line~\ref{algline:zl-zh} on window $i$. Consider a window $i+1$. There are two cases, depending on whether an element $e^\star$ was added to the solutions or not. Suppose no element $e^\star$ was added to the solution. Then all the levels remain the same. Line~\ref{algline:smooth-levels-begin} guarantees that $f(L_{\ell}^i) \leq f(L_{\ell+1}^i)$. Since no elements were added, $f(L_{\ell+1}^i) = f(L_{\ell+1}^{i+1})$ so $f(L_{\ell}^i) \leq f(L_{\ell+1}^{i+1})$ for every level $\ell$. Now suppose an element $e^\star$ was added in window $i+1$. For levels $\ell > z_h^{i+1}$ and $\ell < z_l^{i+1}$, $L_{\ell+1}^{i+1} = L_{\ell+1}^i$, so $f(L_{\ell+1}^{i+1}) = f(L_{\ell+1}^i) \geq f(L_{\ell}^i)$. For levels $\ell \in [z_l^{i+1}, z_h^{i+1}]$, $L_{\ell+1}^{i+1} = L_\ell^i \cup \{e^\star\}$ at the end of line~\ref{algline:assign-range}, and its value only improves through lines~\ref{algline:smooth-levels-begin}--\ref{algline:smooth-levels-end}. Thus $f(L_{\ell+1}^{i+1}) \geq f(L_{\ell}^{i} \cup \{e^\star\}) \geq f(L_\ell^i)$.
\end{proof}

The rest of the proofs below correspond directly to unproven lemmas in \Cref{sec:main-analysis}.
\begingroup
\def\thetheorem{\ref{lem:uniform-partition}}
\begin{lemma}
Suppose $E$ is streamed according to a permutation chosen at random and we partition $E$ by \Cref{alg:partition} into $m$ windows. This is equivalent to assigning each $e \in E$ to one of $m$ different buckets uniformly and independently at random.
\end{lemma}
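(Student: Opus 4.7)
The plan is to show that both random processes induce the same joint distribution on the tuple $(w_1,\ldots,w_m)$ of window contents, viewed as a sequence of disjoint subsets of $E$. Fix an arbitrary ordered partition $(S_1,\ldots,S_m)$ of $E$ with $|S_i|=s_i$, and compute the probability of the event $\{w_i = S_i\text{ for all }i\}$ under each process.

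Under the independent bucket assignment, each element $e \in E$ chooses its bucket $b(e)\in\{1,\ldots,m\}$ uniformly and independently, so
\[
\Pr[w_i = S_i \text{ for all } i] \;=\; \prod_{e \in E} \frac{1}{m} \;=\; m^{-|E|}.
\]
Under Algorithm~\ref{alg:partition} applied to a uniformly random stream, I would condition on the multinomial draw of $(n_1,\ldots,n_m)$. The event $\{n_i=s_i\text{ for all }i\}$ has probability $\binom{|E|}{s_1,\ldots,s_m}m^{-|E|}$. Conditioned on this event, the window $w_i$ consists of the elements occupying the contiguous block of positions $t_i+1,\ldots,t_i+s_i$ of a uniformly random permutation of $E$; the probability that these blocks land on the specific subsets $(S_1,\ldots,S_m)$ is the number of permutations consistent with that partition, $\prod_i s_i!$, divided by $|E|!$, which equals $1/\binom{|E|}{s_1,\ldots,s_m}$. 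Multiplying the two factors gives $m^{-|E|}$, matching the independent bucket process.

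Since both processes assign the same probability to every ordered partition $(S_1,\ldots,S_m)$, they induce identical distributions on the window contents, which is the claimed equivalence.

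The argument is essentially a bookkeeping calculation, so the only ``obstacle'' is being careful that the algorithm's construction really does correspond to ``multinomial sizes plus uniformly random partition into those sizes.'' This is immediate once one notes that (i) drawing $|E|$ i.i.d.\ labels from $\{1,\ldots,m\}$ and recording their counts gives the multinomial$(|E|;1/m,\ldots,1/m)$ distribution on $(n_1,\ldots,n_m)$, and (ii) a uniformly random permutation of $E$, cut into prefix blocks of prescribed sizes, yields within each block a uniformly random subset of the prescribed size — exactly what the multinomial-with-uniform-partition interpretation requires.
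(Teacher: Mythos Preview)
Your proof is correct and follows essentially the same approach as the paper: both observe that the window sizes $(n_1,\ldots,n_m)$ are multinomial (hence matching the bucket process), and that conditioned on the sizes a uniformly random permutation makes every ordered partition of $E$ into sets of those sizes equally likely. The paper states this qualitatively in two sentences, whereas you carry out the explicit probability computation $\binom{|E|}{s_1,\ldots,s_m}m^{-|E|}\cdot\binom{|E|}{s_1,\ldots,s_m}^{-1}=m^{-|E|}$; the underlying argument is the same.
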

\addtocounter{theorem}{-1}
\endgroup

\begin{proof}
The way we define the window sizes is equivalent to placing each element independently into a random bucket, and letting $n_i$ be the number of elements in bucket $i$. Hence the distribution of window sizes is correct. Conditioned on the window sizes, the assignment of elements into windows is determined by a random permutation; any partition is equally likely. Therefore the distribution of elements into windows is equivalent to placing each each element into a random window independently. 
\end{proof}

\begin{definition}
\label{def:j-compatible}
Let $P\,:\,E \rightarrow [\alpha k]$ be the mapping of $E$ to their window indices; i.e.~if $e$ is in the $j$-th window, then $P(e)=j$. A partition $P$ is \textbf{$\cH_i$-compatible} if the algorithm produces history $\cH_{i}$ when streaming the first $i$ windows partitioned by $P$.
\end{definition}

\begingroup
\def\thetheorem{\ref{lem:at-least-prob}}
\begin{lemma}
Fix a history $\cH_{i-1}$.
For any element $e \in E\setminus \cH_{i-1}$, and any $i \leq i^\prime \leq m = \alpha k$, we have 
$\prob{e\in w_i^\prime \mid \cH_{i-1}} \geq 1/(\alpha k).$
\end{lemma}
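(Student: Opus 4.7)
The plan is to work in the equivalent ``each element picks a uniform window independently'' model from \Cref{lem:uniform-partition}. Write $P : E \to [\alpha k]$ for the window-assignment, and let $\omega$ denote the joint randomness consisting of $P(e')$ for every $e' \neq e$ together with the algorithm's internal coin flips used to build the samples $R_j$. Bayes' rule gives
\begin{equation*}
\Pr[P(e) = i' \mid \cH_{i-1}] \;=\; \frac{\Pr[\cH_{i-1} \mid P(e) = i']}{\alpha k \cdot \Pr[\cH_{i-1}]},
\end{equation*}
so the target bound $\Pr[P(e) = i' \mid \cH_{i-1}] \geq 1/(\alpha k)$ reduces to showing $\Pr[\cH_{i-1} \mid P(e) = i'] \geq \Pr[\cH_{i-1}]$.

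Next I would decompose $\Pr[\cH_{i-1}] = \tfrac{1}{\alpha k}\sum_{j=1}^{\alpha k} \Pr[\cH_{i-1} \mid P(e) = j]$. Because $\cH_{i-1}$ only records events in windows $1, \ldots, i-1$, relocating $e$ among future windows $j \geq i$ cannot affect the resulting history; hence $\Pr[\cH_{i-1} \mid P(e) = j] = \Pr[\cH_{i-1} \mid P(e) = i']$ for every $j \geq i$. The target therefore reduces to the pointwise inequality
\begin{equation*}
\Pr[\cH_{i-1} \mid P(e) = j] \;\leq\; \Pr[\cH_{i-1} \mid P(e) = i'] \qquad \text{for each } j \leq i-1.
\end{equation*}

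I would establish this via a subset-of-events argument. Fix $\omega$ and let $A_j(\omega)$ be the event that the algorithm produces history $\cH_{i-1}$ when $e$ is additionally placed in window $j$. The claim is $A_j \subseteq A_{i'}$ whenever $j \leq i-1 < i \leq i'$. Suppose $\omega \in A_j$. Then during the processing of window $j$ the set $C_j$ contains $e$, but since $e \notin \cH_{i-1}$ by hypothesis, $e$ was not selected as $e^\star$, so $e$'s sum-of-marginals is no larger than that of $e^\star$; removing $e$ from $C_j$ leaves the argmax on line~\ref{algline:argmax} unchanged. None of the downstream steps within window $j$ (the threshold test on line~\ref{algline:improve-avg}, the range assignment on line~\ref{algline:assign-range}, and the smoothing on lines~\ref{algline:smooth-levels-begin}--\ref{algline:smooth-levels-end}) inspect any non-argmax element of $C_j$. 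Moreover, the sample $R_j$ is drawn from $H_{j-1}$, which does not contain $e$ since $e \notin \cH_{i-1}$ implies $e \notin H_{j-1}$. Thus relocating $e$ to a future window $i'$ leaves every action taken in windows $1, \ldots, i-1$ identical, so $\omega \in A_{i'}$. Averaging over $\omega$ yields the pointwise inequality and completes the argument.

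The main obstacle is the bookkeeping in the ``$e$ contributes nothing to window $j$'' step, since the algorithm has several interlocking pieces --- the pool $H$, the sample $R_j$ drawn from $H$, the argmax, the threshold check, and the smoothing pass. What makes the argument go through cleanly is the confluence of three small facts that I would verify carefully: $R_j$ depends only on $H_{j-1}$ (which excludes $e$), deleting a non-maximizer from a set never shifts the argmax, and every downstream step in window $j$ reads only $e^\star$ together with the previously stored $L_\ell^{j-1}$'s. One minor technicality I would flag explicitly is that the argmax on line~\ref{algline:argmax} is broken by a fixed deterministic rule (e.g., by element index), so that removing a non-maximizer $e$ cannot accidentally alter the winner.
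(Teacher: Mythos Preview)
Your approach is essentially the paper's: both exploit the independent-window model from \Cref{lem:uniform-partition} and argue that relocating $e$ among windows cannot change the recorded history, because $e$ was never picked up by the algorithm. You frame this via Bayes' rule and an event-inclusion $A_j \subseteq A_{i'}$, whereas the paper argues by a direct bijection between $\cH_{i-1}$-compatible partitions with $P(e)=j$ and with $P(e)=j'$, concluding that $\Pr[e\in w_{i'}\mid\cH_{i-1}]=1/|J_e|\ge 1/(\alpha k)$. The two framings are equivalent; yours proves only the needed inequality while the paper's bijection gives equality over the feasible set $J_e$.

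There is one small gap in your case analysis. You infer from $e\notin\cH_{i-1}$ that ``$e$ was not selected as $e^\star$'' in window $j$, and hence that removing $e$ leaves the argmax unchanged. But $e$ can in fact be the argmax and still fail to enter $H$: this happens exactly when the threshold test on line~\ref{algline:improve-avg} fails. In that scenario, deleting $e$ \emph{does} change the argmax to some $e'$; what saves you is that $\sum_\ell f(L_\ell^{j-1}\cup\{e'\}) \le \sum_\ell f(L_\ell^{j-1}\cup\{e\}) \le \sum_\ell f(L_{\ell+1}^{j-1})$, so the threshold test still fails and no update occurs either way. The paper's proof spells out this second case explicitly (``$e$ was either never the maximum element found in line~\ref{algline:argmax}, or if it was, its marginal value was not sufficient to replace the current solution''). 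Once you add this branch, your argument is complete.
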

\addtocounter{theorem}{-1}
\endgroup

\begin{proof}
Let $e$ be any element of $E\setminus \cH_{i-1}$ and choose any $j, j^\prime \in [m]$. For each $\cH_{i-1}$-compatible partition $P$ with $P(e) = j^\prime$, we begin by showing that we can create another $\cH_{i-1}$-compatible partition $\tilde{P}$ by setting $\tilde{P}(e) = j$ and all other values of $\tilde{P}$ equal to $P$. In other words, any $\cH_{i-1}$-compatible partition where $e$ is in window $j$ can be mapped to another where $e$ is in window $j^\prime$.

Observe that because $e \notin \cH_{i-1}$, $e$ must not have been chosen in windows $j$ or $j^\prime$ by the algorithm so far. There are two possible reasons for this: first, windows $j$ or $j^\prime$ could be greater than (further in the future) or equal to window $i$, in which case window $j, j^\prime$ trivially does not affect $\cH_{i-1}$. Second, consider the case when windows $j$ or $j^\prime$ is less than $i$. If this is the case, then element $e$ already arrived in the stream but was not selected by the algorithm in any solution. Hence $e$ was either never the maximum element found in line~\ref{algline:argmax}, or if it was, its marginal value was not sufficient to replace the current solution. In either case, removing or adding $e$ to windows $j$ and $j^\prime$ will not change the history $\cH_{i-1}$: if a different element $e'$ was chosen for the update in window $j^\prime$, this will still be the case; and if no update occurred, this will also still be the case. Finally, observe that the pool of elements $H$ for re-insertion will also not be affected, since element $e$ was not part of it.

Thus, we may change $P(e)$ from $j^\prime$ to $j$ and maintain a $\cH_{i-1}$-compatible partition. Since $\tilde{P}$ is equal to $P$ everywhere except on $e$, this maps each such partition $P$ to a unique partition $\tilde{P}$ (and vice versa), establishing a bijectiion between $\cH_{i-1}$-compatible partitions with $e\in w_j$ and $e\in w_{j^\prime}$. Consequently, the number of partitions with $P(e) = j$ compatible with $\cH_{i-1}$ is equal to the number of partitions with $P(e)=j^\prime$.

Let $J_{e}$ be the set of indices $j$ where there exists $\cH_{i-1}$-compatible partitions with $e \in w_j$. The argument above applies to any windows $j \in J_e$. In particular, $J_e$ contains all windows greater than or equal to $i$, since these windows clearly do not affect $\cH_{i-1}$. For any element $e \notin \cH_{i-1}$, and $j,j' \in J_e$, we have 
\begin{equation*}
\begin{split}
\prob{e\in w_j \mid \cH_{i-1}} &= \frac{\# \text{partitions $P$ with $P(e)=j$ and $P$ is $\cH_{i-1}$-compatible} }{\# \text{partitions $P$ where $P$ is $\cH_{i-1}$-compatible}} \\ 
&= \frac{\# \text{partitions $P$ with $P(e)=j'$ and $P$ is $\cH_{i-1}$-compatible} }{\# \text{partitions $P$ where $P$ is $\cH_{i-1}$-compatible}} \\
&= \prob{e\in w_{j'} \mid \cH_{i-1}}.
\end{split}
\end{equation*}
The first and last lines follow from \Cref{lem:uniform-partition}, since any partition happens with uniform probability. 


Any element $e \notin \cH_{i-1}$ must appear in some window, and it is equally likely to be in any of the windows where it could be present without affecting the history $\cH_{i-1}$. So we have
\begin{equation*}
\begin{split}
    1 &= \sum_{s=1}^{\alpha k} \prob{e\in w_s \mid \cH_{i-1}}
      = |J_e| \cdot \prob{e\in w_i \mid \cH_{i-1}} \\
\end{split}
\end{equation*}
The lemma follows from noting that $|J_e| \leq \alpha k$.
\end{proof}

\begin{proof}
By the definition of an active set, $\prob{\id{o \in A_i} \mid \cH_{i-1}} = 1/(\alpha k)$ for any $\cH_{i-1}$ and $o\in O\setminus \cH_{i-1}$. For $o\in \cH_{i-1} \cap O$, these elements are reintroduced by the algorithm with probability $1/(\alpha k)$. Hence, $\prob{\id{o \in A_i}} = 1/(\alpha k)$ for any $o \in O$, without conditioning on $\cH_{i-1}$. Since the input permutation is uniformly random, $\id{o_a \in A_i}$ and $\id{o_b \in A_i}$ are independent for $a \neq b$ in any window $i$. Letting $\id{i} = \cup_{o \in O} \id{o_a \in A_i}$, we have $$\prob{\id{i}} = 1-\left(1-\frac{1}{\alpha k}\right)^k = 1/\alpha - 1/(2\alpha^2) + O(1/\alpha^3)$$ for large enough $\alpha$ and $k$. Thus, $$\E \sum_{i=1}^{\alpha\beta} \id{i} = \alpha \beta \cdot \E \id{1} =  (1-1/(2\alpha)+O(1/\alpha^2)) \beta.$$ 

Next, note that $\id{i}$ and $\id{j}$ are negatively dependent: conditioning on a window being active decreases the number of optimal elements available to the other windows (and conditioning on a window not being active increases the number available to other windows). Thus we have:
\begin{equation*}
\begin{split}
\Var\left(\sum \id{i}\right) &\leq \alpha \beta (\E \id{1} - \E \id{1}^2) \\
&= (1 - 3/(2\alpha) + O(1/\alpha^2))\beta.
\end{split}
\end{equation*}
Now can apply the lower-tail bound Hoeffding's inequality (\cite{D20}, Theorem 1.10.12) to get 
\begin{equation*}
\begin{split}
    \prob{\sum_{i=1}^{\alpha \beta} \id{i} \leq - c\sqrt{\beta \log \frac{1}{\delta}} + \E \sum_{i=1}^{\alpha \beta}\id{i}} &\leq \exp\left(-\frac{c^2\log \frac{1}{\delta}}{3}\right) \\
    &\leq \delta/2
\end{split}
\end{equation*}
for a large enough constant $c$.

Similarly, we may apply the upper-tail bound of Hoeffding's inequality~\cite{D20} (Corollary 1.10.13), to obtain:
$$\prob{\sum_{i=1}^{\alpha \beta} \id{i} \leq c\sqrt{\beta \log \frac{1}{\delta}} + \E \sum_{i=1}^{\alpha \beta}\id{i}} \leq \delta/2.$$
for a large enough constant $c$.

Since $\E \sum_{i=1}^{\alpha \beta} \id{i} = \beta - \Theta(\beta/\alpha)$, the number of active windows in the first $\alpha\beta$ windows is at least $\beta - O\left(\sqrt{\beta \log \frac{1}{\delta}}\right) - \Theta(\beta/\alpha)$ and at most $\beta + O\left(\sqrt{\beta \log \frac{1}{\delta}}\right) - \Theta(\beta/\alpha)$ with probability at least $1-\delta$.
\end{proof}

\begingroup
\def\thetheorem{\ref{lem:active-est}}
\begin{lemma}
Suppose we have streamed up to the $\alpha\beta$-th window of the input for some $\beta > 0$. Then expected number of active windows seen so far satisfies
$$ \bZ_{\alpha\beta} := \text{expected number of active windows} = \beta -\Theta(\beta/\alpha).$$
Furthermore, the actual number of windows concentrates around $\bZ_{\alpha\beta}$ to within $\pm O\left(\sqrt{\beta \log\frac1\delta}\right)$ with probability $1-\delta$.
\end{lemma}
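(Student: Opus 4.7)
The plan is to express the number of active windows as a sum of indicators and to exploit the carefully-designed sub-sampling in Definition~\ref{def:active-window} to make each optimal element appear in $A_i$ with probability exactly $1/(\alpha k)$ regardless of what happened previously. Concretely, for each optimal $o \in O$ and each window $i$, I would define the indicator $\id{o \in A_i}$ and first argue that $\Pr[o \in A_i] = 1/(\alpha k)$ unconditionally: if $o \notin \cH_{i-1}$ this follows from the sub-sampling rate $1/(\alpha k p_e^i)$ together with Lemma~\ref{lem:at-least-prob} (which guarantees $p_e^i \ge 1/(\alpha k)$, so the rate is a valid probability), while if $o \in \cH_{i-1}$ the element is reintroduced from $H$ at rate $1/(\alpha k)$. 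Letting $\id{i} = \mathbf{1}[|O \cap A_i| \ge 1]$, independence of the events $\{o \in A_i\}$ across distinct $o$ within a fixed window (which comes from independence of the window-assignment coin flips of Lemma~\ref{lem:uniform-partition} together with independent sub-sampling) then yields $\Pr[\id{i} = 1] = 1 - (1 - 1/(\alpha k))^k = 1/\alpha - \Theta(1/\alpha^2)$.

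Next I would compute the expectation by linearity: $\bZ_{\alpha\beta} = \alpha\beta \cdot \Pr[\id{1} = 1] = \beta - \Theta(\beta/\alpha)$, giving the claimed mean. For concentration I would observe that the indicators $\id{i}$ and $\id{j}$ for different windows are \emph{negatively} correlated: conditioning on $o \in A_i$ makes it less likely that $o$ lies in $A_j$ (since each optimal element ultimately lands in exactly one window of the stream, and the re-insertion coin flips are independent across windows). This allows me to bound $\Var(\sum_i \id{i}) \le \sum_i \Var(\id{i}) = O(\beta/\alpha) \cdot \alpha = O(\beta)$ (actually tighter, of order $\beta$ with a $1/\alpha$ saving from second moment of $\id{1}$, but the $O(\beta)$ upper bound suffices).

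Finally, I would invoke Hoeffding's inequality for sums of negatively associated bounded random variables (both tails, as in Doerr's book cited in the excerpt) to conclude that $\left|\sum_{i=1}^{\alpha\beta} \id{i} - \bZ_{\alpha\beta}\right| \le c\sqrt{\beta \log(1/\delta)}$ with probability at least $1 - \delta$, for a sufficiently large constant $c$.

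The main obstacle is subtle: one must carefully justify the unconditional probability $\Pr[o \in A_i] = 1/(\alpha k)$ for \emph{every} optimal element, since the sub-sampling rate depends on $p_e^i$, which itself depends on the random history. The key is that Definition~\ref{def:active-window} is deliberately engineered so that $\Pr[o \in A_i \mid \cH_{i-1}] = p_e^i \cdot (1/(\alpha k p_e^i)) = 1/(\alpha k)$, i.e., the sub-sampling exactly cancels the history-dependent bias. Getting this cancellation right, and then separately handling the re-insertion case for $o \in \cH_{i-1}$, is the delicate part; everything after that is routine linearity-of-expectation and a standard concentration bound exploiting the intuitive negative dependence between windows.
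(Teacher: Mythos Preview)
Your proposal is correct and follows essentially the same approach as the paper: both argue that the sub-sampling in Definition~\ref{def:active-window} (and re-insertion from $H$) makes $\Pr[o\in A_i]=1/(\alpha k)$ unconditionally, use independence across optimal elements within a window to get $\Pr[\id{i}=1]=1-(1-1/(\alpha k))^k$, compute the mean by linearity, and then invoke negative dependence across windows together with Hoeffding's inequality (indeed, the paper cites the same Doerr reference you allude to). Your variance bound is slightly looser than the paper's explicit $\Var(\sum\id{i})\le \alpha\beta(\E\id{1}-\E\id{1}^2)$, but this makes no difference for the final concentration statement.
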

\addtocounter{theorem}{-1}
\endgroup

\begingroup
\def\thetheorem{\ref{lem:uniform-gain}}
\begin{lemma}
Let $\bcL_i = \{z_l^i, z_l^i+1, \ldots, z_h^i\}$ where $z_l^i$ and $z_h^i$ are the values of $z_l$ and $z_h$ defined in \Cref{alg:basic} on window $i$. Conditioned on a history $\cH_i$ and window $i+1$ being active,  
\begin{gather} \sum_{\ell \in \bcL_{i+1}} \E[f(L^{i+1}_{\ell+1}) - f(L^{i}_{\ell}) \mid \cH_{i} , {\cA}_{i+1} ]
\geq \frac{1}{k} \sum_{\ell \in \bcL_{i+1}} (f(O) - \E [f(L^{i}_{\ell}) \mid \cH_i ]).
\end{gather}
\end{lemma}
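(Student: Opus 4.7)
The plan is to (i) reduce the lemma to a lower bound on the conditional expectation $\E[\sum_{\ell \in \bcL_{i+1}} f(e^\star | L^i_\ell) \mid \cH_i, \cA_{i+1}]$, and then (ii) compare $e^\star$ against a uniformly random optimal element by exploiting the symmetry built into the active-set construction.

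For the algorithmic reduction, I would case-split on whether the test on line~\ref{algline:improve-avg} fires in window $i+1$. If it fires, line~\ref{algline:assign-range} initially assigns $L^{i+1}_{\ell+1} \gets L^i_\ell \cup \{e^\star\}$ for every $\ell \in \bcL_{i+1}$, and the smoothing of lines~\ref{algline:smooth-levels-begin}--\ref{algline:smooth-levels-end} only raises $f$-values (by Lemma~\ref{lem:subset-property}); if it does not fire, $L^{i+1}_{\ell+1} = L^i_{\ell+1}$ and the failed test directly gives $\sum_{\ell} f(L^i_{\ell+1}) \geq \sum_{\ell} f(L^i_\ell \cup \{e^\star\})$. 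In either case,
\begin{equation*}
\sum_{\ell \in \bcL_{i+1}} \bigl(f(L^{i+1}_{\ell+1}) - f(L^i_\ell)\bigr) \;\geq\; \sum_{\ell \in \bcL_{i+1}} f(e^\star | L^i_\ell),
\end{equation*}
so (recalling that $L^i_\ell$ is $\cH_i$-measurable) it suffices to prove $\E[\sum_{\ell} f(e^\star | L^i_\ell) \mid \cH_i, \cA_{i+1}] \geq \frac{1}{k}\sum_\ell (f(O) - f(L^i_\ell))$.

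For the probabilistic step, I would introduce an auxiliary random $\hat{o}$ drawn uniformly from $O \cap A_{i+1}$ (well-defined on $\cA_{i+1}$). Since $A_{i+1} \subseteq C_{i+1}$ and $e^\star$ is the argmax of $e \mapsto \sum_\ell f(e | L^i_\ell)$ over $C_{i+1}$, deterministically $\sum_\ell f(e^\star | L^i_\ell) \geq \sum_\ell f(\hat{o} | L^i_\ell)$. If $\hat{o}$ is marginally uniform on $O$ given $\cH_i, \cA_{i+1}$, then
\begin{equation*}
\E\!\left[\sum_\ell f(\hat{o} | L^i_\ell) \,\middle|\, \cH_i, \cA_{i+1}\right] = \frac{1}{k}\sum_{o \in O}\sum_\ell f(o | L^i_\ell) \geq \frac{1}{k}\sum_\ell (f(O) - f(L^i_\ell)),
\end{equation*}
using $\sum_{o \in O} f(o | L^i_\ell) \geq f(O \cup L^i_\ell) - f(L^i_\ell) \geq f(O) - f(L^i_\ell)$ by submodularity and monotonicity.

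The main obstacle is thus showing that $\hat{o}$ is uniform on $O$. I would prove this by establishing that, conditional on $\cH_i$, the indicators $(\id{o \in A_{i+1}})_{o \in O}$ are mutually independent Bernoulli$(1/(\alpha k))$. The bijection in the proof of Lemma~\ref{lem:at-least-prob} extends to pairs and tuples: swapping optimal elements $o, o' \in O \setminus \cH_{i-1}$ across windows independently preserves $\cH_i$-compatibility, yielding $\Pr[o \in w_{i+1}, o' \in w_{i+1} \mid \cH_i] = 1/(\alpha k)^2$; this, together with the independent Bernoulli subsampling in Definition~\ref{def:active-window} and the independent Bernoulli re-insertion on line~\ref{algline:add-from-hist-end}, upgrades to mutual independence across all of $O$ (including between elements in and out of $\cH_{i-1}$, whose randomness sources are disjoint). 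Under a product Bernoulli law, $O \cap A_{i+1}$ is exchangeable in $O$, and a uniform draw $\hat{o}$ from a nonempty exchangeable random subset is marginally uniform on the ground set, completing the proof.
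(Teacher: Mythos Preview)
Your proof is correct and takes a somewhat different route from the paper's on the probabilistic step. The paper orders the optimal elements $o_1,\ldots,o_k$ by decreasing value of $\sum_{\ell} f(o_j \mid L^i_\ell)$, observes that under the i.i.d.\ Bernoulli$(p)$ law the \emph{best} element of $O\cap A_{i+1}$ equals $o_j$ with probability $p(1-p)^{j-1}/(1-(1-p)^k)$, and then applies Chebyshev's sum inequality (both sequences decreasing) to reduce the weighted sum to the uniform average $\tfrac1k\sum_j\sum_\ell f(o_j\mid L^i_\ell)$. Your exchangeability argument---drawing $\hat o$ uniformly from $O\cap A_{i+1}$ and noting that under a product Bernoulli law this is marginally uniform on $O$---arrives at the same average in one step, bypassing the ordering and Chebyshev entirely. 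Your algorithmic reduction (the case split on whether line~\ref{algline:improve-avg} fires) is also more explicit than the paper's, which simply asserts the first inequality. One small slip: the displayed equality $\Pr[o\in w_{i+1},\,o'\in w_{i+1}\mid \cH_i]=1/(\alpha k)^2$ is not quite right---the marginals for $w_{i+1}$ are $p_o^{i+1},p_{o'}^{i+1}\ge 1/(\alpha k)$, and the bijection gives the product $p_o^{i+1}p_{o'}^{i+1}$; the $1/(\alpha k)^2$ only emerges after the independent subsampling, i.e.\ for $A_{i+1}$ rather than $w_{i+1}$. This does not affect the argument, since what you actually use is the i.i.d.\ structure of $(\id{o\in A_{i+1}})_{o\in O}$.
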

\addtocounter{theorem}{-1}
\endgroup

\begin{proof}
As in the previous lemma, we first note that by the construction of the active set $A_{i+1}$, any $e \in E$ appears in $A_{i+1}$ with probability exactly $1/(\alpha k)$, so $p := \prob{o_j \in A_{i+1} | \cH_{i}} = 1/(\alpha k)$ for any $o_j \in O$.  Also, the appearances of different elements are mutually independent.
In particular, we have $\prob{\cA_{i+1} \mid \cH_{i}} = \prob{\exists o \in O \cap A_{i+1} \mid \cH_{i}} = 1 - (1 - p)^k$. 

Order the $o_j$'s so that $\sum_{\ell \in \bcL_{i+1}} f(o_j | L^i_{\ell}) \geq \sum_{\ell \in \bcL_{i+1}} f(o_{j^\prime} | L^i_{\ell})$ for $j \leq j^\prime$. Given that $o \in A_{i+1}$ for some random $o \in O$, we have 
\begin{align}
\label{eq:monotone-average-gain}
    \sum_{\ell \in \bcL_{i+1}} \E [f(L^{i+1}_{\ell+1}) - f(L^i_{\ell}) &\mid \cH_{i} , \cA_{i+1}] 
    \geq \max_{e \in A_{i+1}} \sum_{\ell \in \bcL_{i+1}} \E \left[ f(e|L^i_{\ell}) \mid \cH_{i} , \cA_{i+1}\right] \nonumber \\
    &\geq \sum_{j=1}^k \frac{p(1-p)^{j-1}}{1 - (1-p)^k} \sum_{\ell \in \bcL_{i+1}} \E \left[ f(o_j|L^i_{\ell}) \mid \cH_{i} , \cA_{i+1}\right] \nonumber\\
    &\geq \sum_{\ell \in \bcL_{i+1}} \E \left[\frac{1}{k} \sum_{j=1}^k f(o_j|L^i_{\ell})\mid \cH_{i} , \cA_{i+1}\right] \nonumber\\
    &\geq \sum_{\ell \in \bcL_{i+1}} \E \left[\frac{1}{k} (f(O\cup L^i_{\ell}) - f(L^i_{\ell})) \mid \cH_{i} , \cA_{i+1}\right] && \text{(Submodularity)} \nonumber\\
    &\geq \sum_{\ell \in \bcL_{i+1}} \E \left[\frac{1}{k} (f(O) - f(L^i_{\ell}))\mid \cH_{i} , \cA_{i+1}\right] && \text{(Monotonicity)} \nonumber\\
    &\geq \sum_{\ell \in \bcL_{i+1}} \E \left[\frac{1}{k} (f(O) - f(L^i_{\ell}))\mid \cH_{i}\right].
\end{align}
The first line follows from the fact that $A_{i+1} \subseteq C_{i+1}$, since $C_{i+1}$ contains $R_{i+1}$ and the entirety of $w_{i+1}$.
The numerator of $\frac{p(1-p)^{j-1}}{1 - (1-p)^k}$ follows from the fact that if $o_j$ is the maximum, then no elements in $O$ valued higher than it can appear in $A_{i+1}$. The denominator is the probability that window $w_{i+1}$ is active. (All events are conditioned on $\cH_{i}.)$   The third line is subtle and follows from Chebyshev's sum inequality. Let $\alpha_j = \frac{p(1-p)^{j-1}}{1 - (1-p)^k}$ and $\beta_j = \sum_{\ell \in \bcL_{i+1}} \E[f(o_j|L^i_{\ell}) \given \cH_i, \cA_{i+1}]$. Since $\alpha_j$ and $\beta_j$ are both decreasing sequences, Chebyshev's sum inequality gives:
\begin{equation*}
    \sum_{j=1}^k \alpha_j \beta_j \geq \frac{1}{k}\sum_{j=1}^k \alpha_j \sum_{j=1}^k \beta_j = \frac{1}{k} \sum_{j=1}^k \sum_{\ell\in \bcL_{i+1}} \E[f(o_j|L^i_{\ell}) \given \cH_i, \cA_{i+1}]. \qedhere
\end{equation*}
Finally, note that $\cH_i$ is independent of $\cA_{i+1}$, as $\prob{\cA_{i+1} \mid \cH_i}$ is constructed to be the same regardless of $\cH_i$. Hence the conditioning on $\cA_{i+1}$ can be removed due to the independence and the fact that $L^i_{\ell}$ only depends on $\cH_i$ and not $\cA_{i+1}$.
\end{proof}

\begingroup
\def\thetheorem{\ref{lem:greedy-analysis}}
\begin{theorem}
The expected value of the best solution found by the algorithm is at least $$\left(1 - \frac{1}{e} - O\left(\frac{1}{\alpha}  +  \alpha \sqrt{\frac{\log k}{k}} \right)\right) OPT.$$
Setting $\alpha = \Theta(1/\eps)$, we have a $\left(1-1/e-\eps-o(1)\right)$-approximation using $O(k / \eps)$ memory.
\end{theorem}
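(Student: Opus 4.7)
The plan is to combine Lemma~\ref{lem:uniform-gain} (gain per active window) with Lemma~\ref{lem:active-est} (concentration of active windows) via a greedy-style multiplicative recurrence on the sum of candidate levels in $\bcL_i$. Define $\Phi_i := \sum_{\ell \in \bcL_i}\E[f(L_\ell^i)]$. For an active window $i+1$, Lemma~\ref{lem:uniform-gain} gives
\[
\sum_{\ell\in\bcL_{i+1}}\E[f(L_{\ell+1}^{i+1}) \mid \cH_i] \;\geq\; (1-1/k)\sum_{\ell\in\bcL_{i+1}}\E[f(L_\ell^i) \mid \cH_i] \;+\; \tfrac{|\bcL_{i+1}|}{k}\, OPT.
\]
The left-hand side sums over the shifted range $\bcL_{i+1}+1$; by the monotonicity Lemma~\ref{lem:subset-property}, this equals $\Phi_{i+1}$ up to a single boundary term $\E[f(L^{i+1}_{z_h^{i+1}+1})]-\E[f(L^{i+1}_{z_l^{i+1}})]$ bounded by $OPT$. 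The right-hand side differs from $\Phi_i$ only by a range shift, which is nontrivial only when $\alpha \mid i$. For inactive windows, Lemma~\ref{lem:subset-property} alone gives $\Phi_{i+1} \geq \Phi_i$ up to the same kind of shift boundary.

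Next, apply Lemma~\ref{lem:active-est} with $\beta=k$ and $\delta=1/\poly(k)$: with high probability the total number of active windows in the stream satisfies $T^\star \geq k - \Theta(k/\alpha) - O(\sqrt{k\log k})$. Union-bounding over the $\alpha k$ intermediate prefixes further ensures that the count $Z_i$ stays within $O(\sqrt{k\log k})$ of $\E[Z_i]$ at every intermediate $i$, and since the range $\bcL_i$ has width $\Theta(\alpha\sqrt{k\log k})$, the relevant ``greedy'' level always lies inside $\bcL_i$. Chaining the recurrence across all $\alpha k$ windows and handling the boundary corrections yields
\[
\bar{\Phi}_{\alpha k} := \frac{\Phi_{\alpha k}}{|\bcL_{\alpha k}|} \;\geq\; \bigl(1 - (1-1/k)^{T^\star}\bigr)\, OPT \;-\; O\!\left(\alpha\sqrt{\log k/k}\right) OPT.
\]
Using $(1-1/k)^{T^\star} \leq e^{-1}\cdot e^{O(1/\alpha)+O(\sqrt{\log k/k})} \leq e^{-1}+O(1/\alpha+\sqrt{\log k/k})$ then gives $\bar{\Phi}_{\alpha k} \geq (1-1/e-O(1/\alpha+\alpha\sqrt{\log k/k}))\, OPT$. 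Since the algorithm outputs $\argmax_\ell f(L_\ell^{\alpha k}) \geq \bar{\Phi}_{\alpha k}$, this lower-bounds its expected value; setting $\alpha=\Theta(1/\eps)$ and invoking the $O(\alpha k)=O(k/\eps)$ memory bound from the observation preceding Lemma~\ref{lem:uniform-partition} completes the $(1-1/e-\eps-o(1))$-approximation in $O(k/\eps)$ memory.

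The technical crux is the accounting for boundary terms in the per-window recurrence: Lemma~\ref{lem:uniform-gain} is stated on the shifted range $\bcL_{i+1}+1$ while $\Phi_{i+1}$ is defined over $\bcL_{i+1}$, and $\bcL_i$ itself shifts at each multiple of $\alpha$. The boundary errors that appear when converting between ranges must be carefully telescoped across consecutive windows so that they aggregate to $O(\alpha\sqrt{\log k/k})\cdot OPT$, which exploits the interplay between the width $\Theta(\alpha\sqrt{k\log k})$ of $\bcL_i$ and the $\Theta(k)$ total shift events. A secondary subtlety is ensuring the concentration of Lemma~\ref{lem:active-est} holds \emph{uniformly} across all $\alpha k$ prefixes (handled by a union bound with $\delta = 1/\poly(\alpha k)$), so that the ``progress level'' remains inside $\bcL_i$ throughout.
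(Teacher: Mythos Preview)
Your overall strategy is right, but the potential you track has a boundary issue that does not close. You set $\Phi_i = \sum_{\ell \in \bcL_i}\E[f(L_\ell^i)]$ over the \emph{deterministic} interval $\bcL_i$. Lemma~\ref{lem:uniform-gain} lower-bounds $\sum_{\ell\in\bcL_{i+1}}\E[f(L^{i+1}_{\ell+1})]$, i.e.\ a sum over the shifted range $\bcL_{i+1}+1$; to get back to $\Phi_{i+1}$ you must subtract the boundary term $\E[f(L^{i+1}_{z_h^{i+1}+1})]-\E[f(L^{i+1}_{z_l^{i+1}})]$, which by Lemma~\ref{lem:subset-property} is \emph{nonnegative} and can be as large as $OPT$. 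These terms do not telescope: the pair $(z_l^{i_j},\,z_h^{i_j}+1)$ at one active window $i_j$ bears no algebraic relation to the pair at the next active window, since the level indices, the time superscripts, and the underlying sets $L^{\cdot}_{\cdot}$ all differ. With $\Theta(k)$ active windows the accumulated loss is $\Theta(k)\cdot OPT$, and after dividing by $|\bcL_i|=\Theta(\alpha\sqrt{k\log k})$ the average error is of order $\sqrt{k}/(\alpha\sqrt{\log k})\cdot OPT$, which diverges rather than being $O(\alpha\sqrt{\log k/k})\cdot OPT$ as you assert.

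The paper's proof sidesteps this by tracking the average over the \emph{random} interval $\cL_i := [Z_i \pm (\alpha\sigma+\sigma)]$ centered at the realized active-window count $Z_i$ (with $\sigma=20\sqrt{k\log k}$), while using $\bcL_{i+1}$ only as the range on which Lemma~\ref{lem:uniform-gain} is invoked. The point is that conditioned on $\cA_{i+1}$ one has $Z_{i+1}=Z_i+1$, hence $\cL_{i+1}=\cL_i+1$ exactly, so the $\ell\mapsto\ell+1$ shift produced by Lemma~\ref{lem:uniform-gain} is absorbed by the interval shift with \emph{no} boundary loss from that source. The only remaining slack is the gap between $\bcL_{i+1}\subseteq\cL_i$ and $\cL_i$ itself: since $|\cL_i|-|\bcL_{i+1}|=2\sigma$ and each of those extra summands contributes at most $OPT/k$ to the gain term, the cost is $O(\sigma/(k|\cL_i|))\,OPT=O(1/(\alpha k))\,OPT$ per window, summing to the stated $O(1/\alpha)\,OPT$ over all $\alpha k$ windows. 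Your proposal already contains every other ingredient (concentration uniformly over prefixes, the greedy recurrence, the final averaging argument); the missing piece is precisely this choice of a $Z_i$-centered tracking interval.
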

\addtocounter{theorem}{-1}
\endgroup

\begin{proof}
Let $Z_i = \sum_{j=1}^{i} \id{\cA_j}$ be a random variable which indicates the number of active windows up to window $i$.
Recall from Lemma~\ref{lem:active-est} that $\bZ_i:=\E[Z_i] = i/\alpha-\Theta(i/\alpha^2)$, and furthermore $Z_i$ concentrates around its expectation:
For $\sigma:= 20\sqrt{k\log k}$, we have that w.h.p. ($1-\poly(1/k)$),
\begin{align}\label{eq:Zi-good}
\forall i \;\;\; Z_i \in [\bZ_i \pm \sigma].
\end{align} 
By the definition of $O$, $f(L_{\ell}^i) \leq f(O)$ for any $\ell$ and $i$ since $|L_{\ell}^i| \leq k$.
Since~\eqref{eq:Zi-good} fails only with probability $\poly(1/k)$, we can condition on it while only having negligible effect of $\pm \poly(1/k)f(O)$ on $\E[f(L^{i}_{\ell})]$ for any $i,\ell$. We will henceforth simply assume that it always holds; the (in)equalities for the rest of the proof of this lemma hold up to this $\pm \poly(1/k)f(O)$ term.

Our goal is to use Lemma~\ref{lem:uniform-gain} to argue about the progress of $\E[f(L^i_{\ell})]$ for $\ell = Z_i$. However, we don't know how to prove an analogue of \Cref{eq:uniform-gain} that also conditions on $\ell = Z_i$. We circumvent this issue by tracking the average of $\E[f(L^i_{\ell})]$ over $\ell$ in an interval $\cL_i$ around $Z_i$. 

Let $\bcL_i := [\bZ_i \pm \alpha \sigma]$
and $\cL_i := [Z_i \pm (\alpha \sigma +\sigma)]$. To keep both nonnegative, we denote $i' := i - \alpha(\alpha \sigma + \sigma) - \Theta(1)$ and consider the contribution from $i$ s.t.~$i' \geq 0$.
By \Cref{eq:Zi-good}, the interval $\cL_i$ has a large overlap with an interval $\bcL_i$ around $\bZ_i$. Since $\bcL_i$ does not depend on $Z_i$ (or the history in general), we can safely apply Lemma~\ref{lem:uniform-gain} for all $\ell \in \bcL_i$.

We will argue that on average over $\ell \in \cL_{i}$ and the randomness of the stream, $\E[f(L^{i}_{\ell})]$ increases like
\begin{gather}\label{eq:avg-greedy}
\frac{1}{|\cL_i|} \sum_{\ell \in \cL_i}  \E[f(L^i_{\ell})] \geq \left(1 - \left(1- \frac{1}{k} (1-e^{-1/\alpha})\right)^{i'} - O\left(\frac{i}{\alpha^2 k} \right)\right) f(O).
\end{gather}
In particular, for $\tau  := \alpha (k-\alpha \sigma -2\sigma)$,
we have that 
$$\frac{1}{|\cL_\tau|} \sum_{\ell \in \cL_{\tau}}  \E[f(L^{\tau}_{\ell})] \geq \left(1-1/e-O\left(\frac{1}{\alpha}+ \frac{\alpha \sigma}{k} \right)\right)f(O).$$

Assuming \Cref{eq:Zi-good}, the maximum level in $\cL_\tau$ is at most $Z_{\tau} + \alpha \sigma + \sigma \leq k$ w.h.p. The values of the levels are monotonically increasing due to line~\cref{algline:smooth-levels-begin} and therefore level $k$ satisfies
$$\E\left[f(L^{\tau}_{k})\right] \geq \left(1-1/e-O\left(\frac{1}{\alpha}+\frac{\alpha \sigma}{k} \right)\right)f(O).$$

We now prove \Cref{eq:avg-greedy}. For $i$ s.t.~$i'\geq 0$, we have:
\begin{align*}
\frac{1}{|\cL_{i+1}|} \sum_{\ell \in \cL_{i+1}}& \E[f(L^{i+1}_{\ell}) \mid \cH_i, {\cA}_{i+1} ]  
= \frac{1}{|\cL_i|} \sum_{\ell \in \cL_i} \E[f(L^{i+1}_{\ell+1}) \mid \cH_i, {\cA}_{i+1} ] && \text{(Def.~of $\cL_i$)}\\
  = & \frac{1}{|\cL_i|} \sum_{\ell \in \cL_i \setminus \bcL_{i+1}} \E[f(L^{i+1}_{\ell+1}) \mid \cH_i, {\cA}_{i+1} ]
+ \frac{1}{|\cL_i|} \sum_{\ell' \in \bcL_{i+1}}\E[f(L^{i+1}_{\ell'+1}) \mid \cH_i, {\cA}_{i+1} ] && \text{(\Cref{eq:Zi-good})} \\
\geq &  \frac{1}{|\cL_i|} \sum_{\ell \in \cL_i \setminus \bcL_{i+1}} \E[f(L^{i}_{\ell}) \mid \cH_i, {\cA}_{i+1} ] 
+ \frac{1}{|\cL_i|} \sum_{\ell' \in \bcL_{i+1}}\E[f(L^{i+1}_{\ell'+1}) \mid \cH_i, {\cA}_{i+1} ] &&\text{(\Cref{lem:subset-property})}\\
\geq &  \frac{1}{|\cL_i|} \sum_{\ell \in \cL_i} \E[f(L^{i}_{\ell}) \mid \cH_i, {\cA}_{i+1} ] 
+  \frac{1}{|\cL_i|} \sum_{\ell' \in \bcL_{i+1}} \frac{1}{k} \E[f(O)-f(L^{i}_{\ell'}) \mid \cH_i ] &&\text{(Lemma~\ref{lem:uniform-gain})}\\
= &\frac{1}{|\cL_i|} \sum_{\ell \in \cL_i} \E[f(L^{i}_{\ell}) \mid \cH_i] 
+  \frac{1}{|\cL_i|} \sum_{\ell' \in \bcL_{i+1}} \frac{1}{k} \E[f(O)-f(L^{i}_{\ell'}) \mid \cH_i] && \text{(${\cA}_{i+1}$ indep. $L^{i}_{\ell} \;\; \forall \ell$)} \\
\geq & \frac{1}{|\cL_i|} \sum_{\ell \in \cL_i} \left(\frac{1}{k}f(O)+\frac{k-1}{k}\E[f(L^{i}_{\ell}) \mid \cH_i]\right) - O\left(\frac{\sigma}{k|\cL_i|}\right)f(O) &&\text{($|\cL_i|-|\bcL_{i+1}| = 2\sigma$)}\\
= & \frac{1}{k}f(O)+\frac{k-1}{k} \left(\frac{1}{|\cL_i|} \sum_{\ell \in \cL_i} \E[f(L^{i}_{\ell}) \mid \cH_i]\right) - O\left(\frac{\sigma}{k|\cL_i|}\right)f(O).
\end{align*} 
The most subtle line of the argument is the one invoking the independence of $L_\ell^i$ and $\cA_{i+1}$. This independence is true when conditioned on $\cH_i$: $\cA_{i+1}$ and $\cH_i$ are independent\footnote{Recall that active sets are defined so that $\prob{A_{i+1} | \cH_i} = 1 - \left(1 - \frac{1}{\alpha k}\right)^k$ for any $\cH_i$.} and $L_\ell^i$ is determined by $\cH_i$.
 
In order to avoid cluttering the notation below, every expectation below is conditioned on $\cH_i$. To address the conditioning on $\cA_{i+1}$, consider
\begin{align*}
 \frac{1}{|\cL_{i+1}|} & \sum_{\ell \in \cL_{i+1}}\E[f(L^{i+1}_{\ell})]   =  \frac{1-\Pr[{\cA}_{i+1}]}{|\cL_{i+1}|}\sum_{\ell \in \cL_{i+1}}\E[f(L^{i+1}_{\ell}) \mid \neg{\cA}_{i+1}] + \frac{\Pr[{\cA}_{i+1}]}{|\cL_{i+1}|}\sum_{\ell \in \cL_{i+1}}\E[f(L^{i+1}_{\ell}) \mid {\cA}_{i+1} ] \\
  \geq &  \left(1-\frac{\Pr[{\cA}_{i+1}]}{k}\right)\frac{1}{|\cL_{i}|}\sum_{\ell \in \cL_i} \E[f(L^{i}_{\ell})] + \frac{1}{k} \Pr[{\cA}_{i+1}]  f(O) - O\left(\frac{\sigma}{k|\cL_i|} \right)  \Pr[{\cA}_{i+1}]  f(O) \\
  \geq & \left(1-\frac{1}{\alpha k}\right)\frac{1}{|\cL_{i}|}\sum_{\ell \in \cL_i} \E[f(L^{i}_{\ell})] + \frac{1}{\alpha k} f(O) - O\left(\frac{\sigma}{\alpha k|\cL_i|} \right) f(O).
\end{align*}
where the last line comes from the inequality $1- e^{-1/\alpha} \leq \Pr[\cA_{i+1}] = 1 - (1-\frac{1}{\alpha k})^k \leq \frac{1}{\alpha}$ and $\sum_{\ell \in \cL_{i+1}}\E[f(L^{i+1}_{\ell}) \mid \neg{\cA}_{i+1}] \geq \sum_{\ell \in \cL_{i+1}}\E[f(L^{i}_{\ell}) \mid \neg{\cA}_{i+1}]$ due to line~\ref{algline:improve-avg} and the fact that $\bcL_{i+1} \subseteq \cL_{i}$ with high probability. Finally, given that $|\cL_i| = 2 (\alpha+1) \sigma$, the last error term is $O\left(\frac{1}{\alpha^2 k}\right) f(O)$. Now we may remove the implicit conditioning on $\cH_i$ (as we've conditioned everything on it so far). \Cref{eq:avg-greedy} then follows by induction on $i$.
\end{proof}

\begin{remark}
For any setting of $\eps$, the approximation factor is at best $1 - \frac{1}{e} - \left(\frac{\log k }{k}\right)^{1/4}$, so we might as well choose $\eps \geq \left(\frac{\log k }{k}\right)^{1/4}$. Agrawal et al.~\cite{SMC19} have a similar issue; 
the formal guarantee shown by Agrawal et al. is an approximation factor of $1-1/e-\eps-\log(1/\eps)/(\eps^4 k)$. This implies that in Agrawal et al., the approximation is never better than $1 - 1/e - \tilde{O}(1/k^{1/5})$.
\end{remark}

\section{Missing proofs from \texorpdfstring{\Cref{sec:hardness}}{Section 4}}
\begingroup
\def\theproposition{\ref{prop:lb}}
\begin{proposition}
Fix subsets $G,B$ of elements (denoting ``good'' and ``bad'') such that $|G|=k$ and $|B| = n-k$; let $r \in [0,k]$ be some parameter.  Let $m$ denote the size of the memory buffer, and let $p$ denote the probability that a random subset of size $m$ contains at least $r-1$ good elements. 
Let $f: G \cup B \rightarrow \mathbb{R}$ be a function that satisfies the following symmetries:
\begin{itemize}
\item $f$ is symmetric over good (resp. bad) elements, namely there exists $\hat{f}$ such that $$f(S) = \hat{f}(|S \cap G|,|S \cap B|).$$
\item For any set $S$ with $\le r-1$ good elements, $f$ does not distinguish between good and bad elements, namely for $g \le r-1$, $$\hat{f}(g,b) = \hat{f}(0,b+g).$$
\end{itemize}

Then any algorithm has expected value at most
\begin{gather} ALG \le (1-pk)\hat{f}(0,k) + pk\cdot OPT.
\end{gather}
\end{proposition}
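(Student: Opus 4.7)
The plan is to couple any algorithm $\cA$ with a \emph{blind} twin $\tilde{\cA}$ obtained by replacing every query $f(T)$ with $\hat f(0,|T|)$. By the second symmetry of $f$, on any execution in which $\cA$ only queries subsets with at most $r-1$ good elements, the two algorithms produce identical trajectories (same buffer contents, same decisions, same output). Crucially, $\tilde{\cA}$'s query answers are independent of which elements are good, so conditioned on its internal randomness the set of \emph{stream positions} it retains at each step is deterministic; averaged over the uniformly random permutation, this makes $\tilde{\cA}$'s buffer at each step a uniformly random subset of $G\cup B$ of size at most $m$.

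Let $E$ denote the event that at some step of the execution the buffer contains at least $r$ good elements. If $\neg E$ holds, then every query $\cA$ ever issues (being a subset of the current buffer) contains at most $r-1$ good elements, and so under the coupling $\cA$ exactly mimics $\tilde{\cA}$. Moreover, since the output $S$ is drawn from the final buffer, $|S\cap G|\le r-1$ on this event, and the first symmetry plus the indistinguishability clause give $f(S)=\hat f(0,|S|)\le \hat f(0,k)$ (using monotonicity of $\hat f(0,\cdot)$, which holds in the constructions we invoke the proposition for). On the complementary event $E$ we use the trivial bound $f(S)\le OPT$.

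The core quantitative step is to show $\Pr[E]\le pk$. For each $o\in G$, let $E_o$ be the event that when $o$ first arrives to be considered by the algorithm, the current buffer already contains at least $r-1$ good elements, so that $o$'s addition would trigger $E$. Then $E\subseteq\bigcup_{o\in G}E_o$, and it suffices to show $\Pr[E_o]\le p$ for each $o$. Under the coupling and conditioned on blindness having held up to $o$'s arrival, the buffer at that moment is a uniformly random subset of $(G\cup B)\setminus\{o\}$ of size at most $m$; monotonicity in subset size and the definition of $p$ bound the probability that this subset contains at least $r-1$ good elements by $p$. A union bound over $|G|=k$ then yields $\Pr[E]\le pk$, and combining with the case analysis above gives
\begin{equation*}
\E[f(S)]\le(1-\Pr[E])\,\hat f(0,k)+\Pr[E]\cdot OPT\le(1-pk)\,\hat f(0,k)+pk\cdot OPT.
\end{equation*}

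The main obstacle will be formalizing the ``uniformly random blind buffer'' claim, i.e.\ showing that conditioning on the history of a blind algorithm up to $o$'s arrival does not bias the identities of the elements it has retained beyond fixing the buffer's size. This is essentially the same bijection/exchangeability argument used for \Cref{lem:uniform-partition,lem:at-least-prob}: because $\tilde{\cA}$'s decisions are a function of positions rather than identities, permuting the non-retained elements (and $o$ with any of them) preserves blindness, so the conditional distribution of the buffer's good-count is dominated by that of a uniformly random size-$m$ subset.
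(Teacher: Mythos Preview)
The paper does not actually include a proof of this proposition: the main text promises that ``the proofs of the following propositions and lemmas may be found in the appendix,'' but the appendix only restates Proposition~\ref{prop:lb} and then proceeds directly to Lemma~\ref{lem:lb-exp}. So there is no paper proof to compare against, and your sketch is in fact more detailed than anything the paper provides.

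Your approach is the natural one and is essentially correct. A couple of presentational points are worth tightening. First, the phrase ``conditioned on blindness having held up to $o$'s arrival'' is unnecessary and invites worry about whether the conditioning biases the buffer distribution. It is cleaner to work entirely with the blind twin $\tilde{\cA}$: its retained \emph{positions} at each step are deterministic given its internal coins, so under the random permutation its buffer just before position $t$ is unconditionally a uniform subset of the prefix. One then checks by induction that the first step at which $\cA$'s buffer $\cup\{\text{current}\}$ contains $\ge r$ good elements coincides with the same step for $\tilde{\cA}$, i.e.\ $E=\tilde E$ as events; the union bound $\Pr[\tilde E]\le\sum_{o\in G}\Pr[\tilde E_o]\le pk$ then needs no conditioning at all. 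Second, when you bound $\Pr[\tilde E_o]\le p$, you are implicitly using that a size-$s\le m$ uniform subset of $(G\cup B)\setminus\{o\}$ (which has only $k-1$ good elements) is stochastically dominated in good-count by a size-$m$ uniform subset of $G\cup B$; both monotonicities (in subset size and in restoring $o$ to the pool) are elementary but should be stated.

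You are right to flag that $\hat f(0,|S|)\le \hat f(0,k)$ needs monotonicity of $\hat f(0,\cdot)$, which the proposition does not literally assume; since all three applications (Lemmata~\ref{lem:lb-submodular}--\ref{lem:lb-exp}) are for monotone $f$, this is harmless in context.
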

\addtocounter{proposition}{-1}
\endgroup

\begingroup
\def\thetheorem{\ref{lem:lb-exp}}
\begin{lemma}[exponential-universe coverage function ({\bf new construction})]
There exists a (monotone submodular) coverage function $f$ over an exponential universe $U$ that satisfies the desiderata of Proposition~\ref{prop:lb} for $r = 3$, and such that:
\begin{itemize}
\item $\hat{f}(0,k) = (1-1/e + o(1))|U|$.
\item $OPT = f(G) = |U|$.
\end{itemize}
\end{lemma}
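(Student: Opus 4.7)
The plan is to exhibit a coverage function whose cover patterns are sampled from a carefully chosen distribution, designed so that the first and second factorial moments of the ``good'' side match those of independent Bernoulli variables on the ``bad'' side, while still guaranteeing that every universe element is covered by at least one good element. Concretely, set $q := (k-2)/(k-1)$. For each universe element $u$, associate a random pair $(G_u, B_u) \subseteq G \times B$ drawn as follows: $|G_u|$ is supported on $\{1, k\}$ with $\Pr[|G_u|=1] = k(k-2)/(k-1)^2$ and $\Pr[|G_u|=k] = 1/(k-1)^2$, and conditional on its size $G_u$ is uniform; independently, each bad element lies in $B_u$ with probability $1-q = 1/(k-1)$. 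Realize this as an explicit coverage function over a universe $U$ of size $|U| = (k-1)^{n-k+2}$ by taking, for each specific pair $(G',B')$, a multiplicity of $|U| \cdot \Pr[(G_u,B_u)=(G',B')]$ universe elements (all integers by our choice of $|U|$), and assigning $T_e = \{u : e \in G_u \cup B_u\}$ to each $e \in E$.

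For any set $S$ with $g^* = |S \cap G|$ and $b^* = |S \cap B|$, writing $f(S) = |U| - \#\{u : S \cap (G_u \cup B_u) = \emptyset\}$ and invoking the independence of $G_u$ and $B_u$ under our distribution yields
\[
\hat f(g^*, b^*) \;=\; |U|\bigl[1 - \alpha(g^*)\,\beta(b^*)\bigr],
\]
where $\alpha(g) := \Pr[G_u \cap T = \emptyset]$ and $\beta(b) := \Pr[B_u \cap T = \emptyset]$ for any $T \subseteq G$ (resp.\ $T \subseteq B$) of the indicated size (well-defined by the symmetry of the distribution). Since each bad is independently excluded from $B_u$ with probability $q$, we immediately get $\beta(b) = q^b$ for all $b$. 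A direct evaluation gives $\alpha(g) = \Pr[|G_u|=1]\cdot(k-g)/k + \Pr[|G_u|=k]\cdot\id{g=0}$, from which one reads off $\alpha(0) = 1$, $\alpha(1) = q$, $\alpha(2) = q^2$, and $\alpha(k) = 0$.

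Putting these together, for any $g^* \le 2$ we have $\hat f(g^*, b^*) = |U|(1 - q^{g^* + b^*}) = \hat f(0, g^* + b^*)$, which is precisely the $r = 3$ symmetry required by Proposition~\ref{prop:lb}. Setting $(g^*, b^*) = (k, 0)$ gives $\hat f(k, 0) = |U|(1 - \alpha(k)\beta(0)) = |U|$, so $f(G) = OPT = |U|$; and $(g^*, b^*) = (0, k)$ gives $\hat f(0, k) = |U|(1 - q^k) = |U|(1 - (1 - 1/(k-1))^k) = (1 - 1/e + o(1))|U|$. The main difficulty, and where this construction strictly extends the $r = 2$ construction of McGregor and Vu, is enforcing $\alpha(1) = q$ and $\alpha(2) = q^2$ \emph{simultaneously} with $\alpha(k) = 0$: the two-point support $\{1, k\}$ of $|G_u|$ provides exactly two free parameters, and $q = (k-2)/(k-1)$ emerges as the unique value making all three conditions compatible.
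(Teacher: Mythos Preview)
Your construction is correct and achieves the lemma's statement exactly, but it differs from the paper's. The paper gives an explicit hypercube construction: $U = [k]^n$, with bad sets $B_i := \{\mathbf{x} : x_i = 1\}$ for $i > k$ and good sets $G_i := \{\mathbf{x} : x_1 = i,\, x_2 \neq k\} \cup \{\mathbf{x} : x_1 = k,\, x_2 = k\}$ for $i \in [k]$. Translated into your framework, a random universe element $\mathbf{x}$ has $G_{\mathbf{x}}$ supported on $\{\emptyset,\ \text{singleton},\ [k]\}$ (with respective probabilities $(k-1)/k^2$, $(k-1)/k$, $1/k^2$), and each bad element lies in $B_{\mathbf{x}}$ independently with probability $1/k$; this also yields $\alpha(0)=1$, $\alpha(1)=1-1/k$, $\alpha(2)=(1-1/k)^2$, hence the same $r=3$ indistinguishability, but with $q' = (k-1)/k$ instead of your $q = (k-2)/(k-1)$.

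Both routes rest on the same idea---choose the distribution of $G_u$ so that its first two ``avoidance moments'' match those of i.i.d.\ Bernoulli bad elements---so the approaches are close cousins rather than unrelated. The differences: your support $\{1,k\}$ forces every universe element to be covered by some good set, so you get $f(G)=|U|$ on the nose; the paper's support $\{0,1,k\}$ leaves a $(k-1)/k^2$ fraction uncovered, giving $f(G)=(1-1/k+1/k^2)|U|$, which only matches the lemma statement up to $o(1)$. Conversely, the paper's construction is a one-line geometric description with no multiplicity bookkeeping. Your moment-matching presentation has the pedagogical benefit of explaining \emph{why} $r=3$ is the natural limit of this two-parameter family, which the paper's proof leaves implicit.
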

\addtocounter{theorem}{-1}
\endgroup
\begin{proof}
The universe of elements to be covered $U$ is the $n$-dimensional $k$-side-length hypercube $[k]^n$.
Let $B_{i}$ denote the set $\{\mathbf{x}\in[k]^n \mid x_i=1\}$, namely, the set of all the vectors of which the $i$-th coordinate is $1$. Let $G_{i}$ denote the set $\{\mathbf{x}\in[k]^n \mid x_1=i,\,x_2\neq k\}\cup \{\mathbf{x}\in[k]^n \mid x_1=k,\,x_2=k\}$.
Our set system consists of $k$ good sets $G_1,\dots,G_k$ and $n-k$ bad sets $B_{k+1},\dots,B_{n}$. We make the following three observations:
\begin{itemize}
    \item For any $b\in[n-k]$ and $g\in\{0,1,2\}$, any distinct $i_1,\dots,i_b\in \{k+1,\dots,n\}$ and $j_1,j_2\in[k]$, we have that $$|(\cup_{t=1}^{b} B_{i_t})\cup (\cup_{t=1}^{g} G_{j_t})|=\left(1-\left(1-\frac{1}{k}\right)^{b+g}\right)k^n.$$
    \item Moreover, it holds that $$|\cup_{j=1}^{k} G_{j}|=\left(1-\frac{1}{k}+\frac{1}{k^2}\right)k^n.$$
    \item Finally, the output of the coverage function is fully determined by the number of good sets and the number of bad sets in the input. Hence, there is a succinct encoding of all the possible values of this coverage function, which uses $O(\log n)$ bits. \qedhere
\end{itemize}
\end{proof}

\begingroup
\def\thetheorem{\ref{lem:lb-all}}
\begin{theorem}
Any $(1-1/e+\eps)$-approximation algorithm in the random order strong oracle model must use the following memory:
\begin{itemize}
\item $\Omega(n)$ for a general monotone submodular function.
\item $\Omega(n/k^2)$ for a coverage function over a polynomial universe.
\item $\Omega(n/k^{3/2})$ for a coverage function over an exponential universe.
\end{itemize}
\end{theorem}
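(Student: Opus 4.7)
The plan is to derive all three bounds uniformly by substituting the three constructions of Lemmas~\ref{lem:lb-submodular}, \ref{lem:lb-poly}, and \ref{lem:lb-exp} into Proposition~\ref{prop:lb} and solving for the memory parameter $m$. In each of the three cases one has $\hat{f}(0,k)=(1-1/e+o(1))\cdot OPT$, so the bound $ALG\le(1-pk)\hat{f}(0,k)+pk\cdot OPT$ from Proposition~\ref{prop:lb} collapses to
\[
ALG \le \left(1-\frac{1}{e}+\frac{pk}{e}+o(1)\right)OPT.
\]
For a $(1-1/e+\varepsilon)$-approximation this forces $pk=\Omega(\varepsilon)$, i.e.\ $p=\Omega(\varepsilon/k)$. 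Hence the theorem reduces to asking: how large must $m$ be in order that a uniformly random subset of size $m$ from a ground set of size $n$ containing $k$ good elements has at least $r-1$ good elements with probability $\Omega(\varepsilon/k)$?

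For each of the three constructions I will estimate this hypergeometric tail in the relevant regime, writing $\mu=mk/n$ for the expected number of good elements. For Lemma~\ref{lem:lb-submodular}, $r-1=\Theta(\varepsilon k)$; a standard Chernoff--Hoeffding tail bound for the hypergeometric distribution shows that if $\mu\ll \varepsilon k$ then the probability of seeing $\Omega(\varepsilon k)$ good elements is $\exp(-\Omega(\varepsilon k))$, which is much smaller than $\varepsilon/k$, so $\mu=\Omega(\varepsilon k)$ is forced, giving $m=\Omega(\varepsilon n)=\Omega(n)$ for constant $\varepsilon$. For Lemma~\ref{lem:lb-poly}, $r-1=1$, and the probability of at least one good element in a random $m$-subset equals $1-\binom{n-k}{m}/\binom{n}{m}=\Theta(mk/n)$ in the regime $mk/n=O(1)$; setting this $\ge\Omega(\varepsilon/k)$ yields $m=\Omega(\varepsilon n/k^2)=\Omega(n/k^2)$. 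For Lemma~\ref{lem:lb-exp}, $r-1=2$, and a direct inclusion--exclusion computation gives that the probability of at least two good elements in a random $m$-subset is $\Theta((mk/n)^2)$ when $mk/n=O(1)$; setting this $\ge\Omega(\varepsilon/k)$ yields $m=\Omega(n\sqrt{\varepsilon}/k^{3/2})=\Omega(n/k^{3/2})$.

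The main subtlety to address is not the tail estimates themselves but the justification that Proposition~\ref{prop:lb}'s ``random $m$-subset'' picture really captures the memory behaviour of an arbitrary streaming algorithm. This is the crux that is already established inside Proposition~\ref{prop:lb}: the symmetries of $f$ guarantee that, conditioned on the algorithm never having queried a set with $\ge r$ good elements, it cannot distinguish good elements from bad ones and therefore its memory state is, information-theoretically, a uniformly random subset of the elements seen so far. With that proposition in hand, the proof of the present theorem consists solely of the plug-in calculation and the three hypergeometric tail estimates sketched above; no further probabilistic machinery is needed.
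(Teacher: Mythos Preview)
Your proposal is correct and follows essentially the same approach as the paper: both plug the three constructions of Lemmas~\ref{lem:lb-submodular}--\ref{lem:lb-exp} into Proposition~\ref{prop:lb}, reduce to the requirement $p=\Omega(\varepsilon/k)$, and then estimate the relevant hypergeometric tails case by case. The only cosmetic differences are that the paper phrases things contrapositively (fix $m$, show $p<\varepsilon/k$) and uses Markov/union-bound in place of your direct inclusion--exclusion estimates, but these are equivalent.
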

\addtocounter{proposition}{-1}
\endgroup

\begin{proof}
Each case follows by combining Proposition~\ref{prop:lb} with Lemmata~\ref{lem:lb-submodular}-\ref{lem:lb-exp} respectively.
For each case, we need to compute a bound on $m$ such that the probability $p$ of observing $r-1$ good elements in a random sample of $m$ is $p \leq \eps/k$.

\paragraph{Case $r = 2\eps k$:}
For $m = \eps n$, the expected number of good elements is $km/n = \eps k$. By Chernoff bound, probability of deviating by $\approx \eps k$ is exponentially small.

\paragraph{Case $r = 2$:}
For $m = \eps n/k^2$, the expected number of good elements is $km/n = \eps/k$. 
By Markov's inequality, the probability of having at least $r-1=1$ good element in memory is $p < \eps/k$.

\paragraph{Case $r = 3$:}
For $m = \sqrt{\eps} n/k^{3/2}$, each good element appears in memory with probability $\sqrt{\eps} k^{-3/2}$. The probability that any fixed pair of good elements appear in memory is $\le \eps k^{-3}$. Taking a union bound over $\binom{k}{2}$ pairs, we have that $p< \eps/k$.
\end{proof}

\section{A \texorpdfstring{$(1/e-\eps)$}{(1/e-eps)}-approximation for non-monotone submodular maximization}

In this section, we show that the basic algorithm described in \Cref{alg:basic} can be altered to give a $1/e$-approximation to the cardinality constrained non-monotone case (\Cref{alg:basic-nonmono}).

\begin{algorithm}[ht]
\caption{{\sc NonMonotoneStream}$(f, E, k, \alpha)$ \label{alg:basic-nonmono}}
\begin{algorithmic}[1]
\State{Partition $E$ into windows $w_i$ for $i=1,\ldots,\alpha k$ with \Cref{alg:partition}.}
\State{$L_\ell^0 \gets \emptyset$ for $i=0,\ldots, k$}
\State{$H \gets \emptyset$}
\State{$x_{e}^i \gets \mathrm{Unif}(0, 1)$ for $e \in E, i \in [\alpha k]$} \label{algline:x_e}
\For{$i=1,\ldots, \alpha k$}
    \State{$C_i \gets \emptyset$}
    \State{$z_l^i, z_h^i \gets \max\{0, \floor{i/\alpha} - 20 \alpha \sqrt{k \log k}\}, \min\{k, \ceil{i/\alpha} + 20 \alpha \sqrt{k \log k}\}$}
    
    \color{amber}
    \For{$e \in \text{window }w_i$} \label{algline:begin-subsample}
        \For{$j=1,\ldots,i$}
            \State{Reconstruct $L_{\ell}^{j-1}$ for all $\ell$ from $\cH_{i-1}$}
            \State{$A_e^j = \{r \,|\, 1 \leq r < j,\,\sum_{\ell = z_l^r}^{z_h^r}f(e|L_{\ell}^{r-1}) < f_r\text{ or }x_e^r > q_e^r\}$ \text{(see line 16 for $f_r$)}} \label{algline:A-set}
            \State{$q_e^j=\frac{\alpha k-j+|A_e^j|+1}{\alpha k}$}
        \EndFor
        
        \State{\algorithmicif\ $x_e^i \leq q_e^i$ \algorithmicthen\ $C_i\gets C_i \cup \{e\}$} \label{algline:sample-xe}
    \EndFor \label{algline:end-subsample}
    \color{black}
    
    \State{Sample each $e \in H$ with probability $\frac{1}{\alpha k}$ and add to $C_i$}
    \State{$e^{\star} \gets \mathrm{argmax}_{e \in C_i} \sum_{\ell = z_l^i}^{z_h^i} f(e | L_{\ell}^{i-1})$ \label{algline:nonmono-argmax}}
    \State{$f_i \gets \sum_{\ell = z_l^i}^{z_h^i} f(e^\star | L_{\ell}^{i-1})$}
        \If{$\sum_{\ell = z_l}^{z_h} f(L_{\ell}^{i-1} \cup \{e^{\star}\}) > \sum_{\ell = z_l}^{z_h} f(L_{\ell+1}^{i-1})$ \label{algline:improve-avg-nonmono}}
        \State{$H = H \cup \{e^{\star}\}$}
        \State{$L_{\ell+1}^i \gets L_{\ell}^{i-1} \cup \{e^{\star}\}$ for all $\ell \in [z_l, z_h]$}
        \For{$\ell=1,2\ldots,k$}
            \If{$f(L_{\ell}^i) \geq f(L_{\ell+1}^i)$ \label{algline:only-increase}}
                \State{$L_{\ell+1}^i \gets L_{\ell}^i$ \label{algline:store-increase}}
            \EndIf
        \EndFor
    \EndIf
\EndFor
\State \Return $\argmax_\ell f(L_\ell^{\alpha k})$
\end{algorithmic}
\end{algorithm}

\Cref{alg:basic-nonmono} uses the same kind of multi-level scheme as \Cref{alg:basic} but differs in two ways. 

First, \Cref{alg:basic-nonmono} further sub-samples the elements of the input so that the probability of including any element is \emph{exactly} $1/(\alpha k)$ lines~\ref{algline:begin-subsample}--\ref{algline:end-subsample} (coloured in {\color{amber} orange}). The sub-sampling allows us to bound the maximum probability that an element of the input is included in the solution. In particular, the sub-sampling is done by having the algorithm compute (on the fly) the conditional probability that an element $e$ could have been selected had it appeared in the past. This gives us the ability to compute an appropriate sub-sampling probability to ensure that $e$ does not appear in $H$ with too high a probability. In terms of the proof, the sub-sampling allows us to perform a similar analysis to the {\sc RandomGreedy} algorithm of Buchbinder et al.~\cite{BFNS14}.\footnote{A difference here is that instead of analysing a random element of the top-$k$ marginals, we analyse the optimal set directly.} 

Second, the addition of elements to levels in $[z_l, z_h]$ may cause a decrease in the function value, meaning that we no longer maintain the nesting property in \Cref{lem:subset-property} (for similar reasons, line~\ref{algline:store-increase} also differs from the monotone case by simply copying level $\ell$ into 
$\ell+1$). Fortunately, we only require the nesting property to hold on levels outside of $[z_l, z_h]$. We show that this remains true for \Cref{alg:basic-nonmono}.


\begin{lemma}
\label{lem:nest-reduced}
Let $z_l^i$ and $z_h^i$ be the value of $z_l$ and $z_h$ in \Cref{alg:basic-nonmono} on window $i$. For all $i$ and $\ell \notin [z_l^{i+1}, z_h^{i+1}]$, $f(L_{\ell}^i) \leq f(L_{\ell + 1}^{i+1})$.
\end{lemma}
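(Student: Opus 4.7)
The plan is to first establish the intra-window monotonicity $f(L_\ell^i) \le f(L_{\ell+1}^i)$ for every $\ell$ and $i$ (an analogue of \Cref{lem:subset-property}, restricted to a single window), and then obtain the cross-window statement by tracking what happens to level $\ell+1$ during the processing of window $i+1$.

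First I would prove the intra-window invariant by induction on $i$. The base case $i=0$ is trivial since all $L_\ell^0 = \emptyset$. For the inductive step, if the if-condition on line~\ref{algline:improve-avg-nonmono} fails in window $i$, then $L_\ell^i = L_\ell^{i-1}$ for every $\ell$ by the implicit initialization, and the invariant is inherited. Otherwise the insertion rewrites the levels indexed by $[z_l^i+1, z_h^i+1]$, and then the smoothing loop on lines~\ref{algline:only-increase}--\ref{algline:store-increase} scans $\ell = 1, \ldots, k$ in order. After iteration $\ell$, either the test $f(L_\ell^i) \ge f(L_{\ell+1}^i)$ failed (so strictly $f(L_\ell^i) < f(L_{\ell+1}^i)$) or $L_{\ell+1}^i$ was overwritten by $L_\ell^i$ and the two $f$-values coincide; either way, $f(L_\ell^i) \le f(L_{\ell+1}^i)$ holds immediately after iteration $\ell$. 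Since subsequent iterations only rewrite indices strictly larger than $\ell+1$, this inequality persists to the end of the window.

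Now fix $\ell \notin [z_l^{i+1}, z_h^{i+1}]$, so that $\ell+1 \notin [z_l^{i+1}+1, z_h^{i+1}+1]$. The insertion in window $i+1$ rewrites only the levels $[z_l^{i+1}+1, z_h^{i+1}+1]$, hence $L_{\ell+1}^{i+1}$ is not among the rewritten levels; just before the smoothing loop of window $i+1$, $L_{\ell+1}^{i+1}$ still equals the carried-over value $L_{\ell+1}^i$ (and this is trivially so if the if-condition of line~\ref{algline:improve-avg-nonmono} did not trigger at all). The smoothing loop can modify $L_{\ell+1}^{i+1}$ only via the assignment $L_{\ell+1}^{i+1} \gets L_\ell^{i+1}$, which is executed only when $f(L_\ell^{i+1}) \ge f(L_{\ell+1}^{i+1})$; hence $f(L_{\ell+1}^{i+1})$ can only weakly increase through the smoothing pass. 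Chaining yields $f(L_\ell^i) \le f(L_{\ell+1}^i) \le f(L_{\ell+1}^{i+1})$, where the first inequality is the intra-window invariant at index $i$ and the second comes from the ``smoothing does not decrease $f$'' observation.

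The only mildly delicate point is the smoothing-loop bookkeeping: one must verify that for a given index $\ell+1$ the loop writes to it at most once, and only in a manner that (weakly) raises its $f$-value. No submodularity or probabilistic argument is needed -- the statement is a purely algorithmic invariant of the update rule.
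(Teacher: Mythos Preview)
Your proof is correct and follows essentially the same approach as the paper's: establish the intra-window monotonicity $f(L_\ell^i)\le f(L_{\ell+1}^i)$ from the smoothing step, observe that for $\ell\notin[z_l^{i+1},z_h^{i+1}]$ the insertion does not touch level $\ell+1$, and chain the two inequalities. Your write-up is in fact more careful than the paper's brief argument, since you explicitly verify that the smoothing pass can only weakly increase $f(L_{\ell+1}^{i+1})$ rather than asserting $L_{\ell+1}^{i+1}=L_{\ell+1}^i$ outright.
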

\begin{proof}
Consider a window $i+1$. Regardless of whether an element $e^\star$ was added to the solution or not, levels $\ell > z_h^{i+1}$ and $\ell < z_l^{i+1}$ are not changed. Thus $L_{\ell+1}^{i+1} = L_{\ell+1}^i$, so $f(L_{\ell+1}^{i+1}) = f(L_{\ell+1}^i) \geq f(L_{\ell}^i)$ by line~\ref{algline:only-increase}. Thus $f(L_{\ell+1}^{i+1}) \geq f(L_\ell^i)$ for $\ell \notin [z_l^{i+1}, z_h^{i+1}]$.
\end{proof}

\paragraph{Implementation of \Cref{alg:basic-nonmono}}
For clarity of exposition, we compute $x_e^i$ up front in line \ref{algline:x_e}. However, we can compute them on the fly in practice since each element only uses its value of $x_e^i$ once (lines \ref{algline:A-set} and \ref{algline:sample-xe}). This avoids an $O(n\alpha k)$ memory cost associated with storing each $x_e^i$.
Finally, we assume that there are no ties when computing the best candidiate element in each window (line~\ref{algline:nonmono-argmax}). Ties can be handled by any arbitrary but consistent tie-breaking procedure. Any additional information used to break the ties (for example an ordering on the elements $e$) must be stored alongside $f_i$ for the computation of $A_e^j$ (line~\ref{algline:A-set}).   

Next, we show that the probability an element $e$ is in a candidate set $C_i$ is exactly $1/(\alpha k)$ for any $e\in E$. The proof is conceptually very similar to \Cref{lem:almost-random}, in which we showed that $p_e^i := \prob{e\in w_i \mid \cH_{i-1}} \geq 1/(\alpha k)$. However, we make the additional observation that the proof of \Cref{lem:almost-random} also offers a way for the algorithm to compute $\prob{e\in w_i \mid \cH_{i-1}}$ exactly. By computing this probability and sub-sampling $e$ with probability $1/(\alpha k p_e^i)$, we ensure that $e$ is included in $C_i$ (line~\ref{algline:sample-xe}) with probability exactly $1/(\alpha k)$.

\begin{lemma}
\label{lem:almost-random}
Fix a history $\cH_{i-1}$. For any element $e\in E$, we have $\prob{e\in C_i \mid \mathcal{H}_{i-1}} = 1/(\alpha k)$.
\end{lemma}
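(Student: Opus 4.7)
The plan is to split on whether $e \in H_{i-1}$ and, in the nontrivial case, adapt the bijection argument of Lemma~\ref{lem:at-least-prob} to account for the sub-sampling that \Cref{alg:basic-nonmono} performs. If $e \in H_{i-1}$, then $e$ has already appeared in some earlier window of the random partition and so cannot lie in $w_i$; the sub-sampling branch contributes nothing, and the only path for $e$ into $C_i$ is the re-insertion step, which adds each element of $H_{i-1}$ independently with probability $1/(\alpha k)$, giving exactly the claimed value.

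Suppose now that $e \notin H_{i-1}$. I condition on $\cH_{i-1}$ together with all variables $x_{e'}^r$ other than $x_e^i$. Under this conditioning $A_e^i$ and $q_e^i$ are deterministic, while $x_e^i \sim \mathrm{Unif}(0,1)$ remains independent of both the conditioned randomness and the random partition, so
\begin{equation*}
\prob{e \in C_i \mid \cH_{i-1},\, \{x_{e'}^r\}_{(e',r)\neq(e,i)}} = q_e^i \cdot \prob{e \in w_i \mid \cH_{i-1},\, \{x_{e'}^r\}_{(e',r)\neq(e,i)}}.
\end{equation*}
To evaluate the last factor, let $J_e \subseteq [\alpha k]$ be the set of indices $j$ such that moving $e$ to window $j$ (while keeping every other element's window and every $x$-value fixed) still produces the history $\cH_{i-1}$. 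Swapping $e$'s window between any two elements of $J_e$ gives a bijection on $\cH_{i-1}$-compatible partitions, and by Lemma~\ref{lem:uniform-partition} each partition is equally likely, so $\prob{e \in w_i \mid \cdots} = 1/|J_e|$ (with $i \in J_e$ automatically, since window $i$ does not contribute to $\cH_{i-1}$).

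The crux---and what I expect to be the main obstacle---is characterizing $J_e$ precisely. Any $r \geq i$ lies in $J_e$, contributing $\alpha k - i + 1$ indices. For $r < i$, an induction on $r$ shows that swapping preserves the algorithm's state up to window $r-1$, so $L_\ell^{r-1}$, $f_r$, and $q_e^r$ are unchanged between the two partitions; then $r \in J_e$ iff placing $e$ in $w_r$ does not alter window $r$'s selection, which happens iff $e$ is filtered out by the sub-sampling ($x_e^r > q_e^r$) or $e$ passes the filter but is not the argmax ($\sum_\ell f(e|L_\ell^{r-1}) < f_r$). This is exactly the defining condition of $A_e^i$, so $|J_e| = |A_e^i| + (\alpha k - i + 1) = \alpha k \cdot q_e^i$. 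Thus the displayed conditional probability evaluates to $(1/|J_e|) \cdot q_e^i = 1/(\alpha k)$, and averaging over the remaining conditioned randomness preserves this value, concluding the proof.
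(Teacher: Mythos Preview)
Your proposal is correct and follows essentially the same approach as the paper. Both arguments split on whether $e\in H_{i-1}$, and in the nontrivial case both set up a bijection on $\cH_{i-1}$-compatible partitions to show that $e$'s window is uniform over the set $\mathcal{T}=A_e^i\cup\{i,\dots,\alpha k\}$ (your $J_e$), then combine $|\mathcal{T}|=\alpha k\cdot q_e^i$ with the sub-sampling step to get exactly $1/(\alpha k)$. Your treatment is in fact slightly more careful than the paper's: you make the conditioning on the auxiliary randomness $\{x_{e'}^r\}_{(e',r)\neq(e,i)}$ explicit (so that $A_e^i$ and $q_e^i$ are genuinely deterministic at the point you use them), and you argue the ``iff'' characterization of $J_e$ in both directions, whereas the paper only explicitly proves one direction and tacitly uses the other when collapsing $\sum_s$ to $\sum_{s\in\mathcal{T}}$.
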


\begin{proof}
When $e\in \mathcal{H}_{i-1}$, $e$ is added to $C_i$ with probability $1/(\alpha k)$ exactly. Thus we assume $e \in E\setminus \mathcal{H}^{i-1}$.

Let $\mathcal{T}=A_e^i \cup \{i,i+1,\ldots,\alpha k\}$ (where $A_e^i$ is defined on line~\ref{algline:A-set}). Fix any $\cH_{i-1}$-compatible partition $P$ and $j \in \mathcal{T}$.  As in \Cref{lem:almost-random}, we begin by showing that we can create another $\cH_{i-1}$-compatible partition $\tilde{P}$ by setting $\tilde{P}(e) = j$ and all other values of $\tilde{P}$ equal to $P$. Since $\tilde{P}$ is equal to $P$ everywhere except on $e$, this maps each such partition $P$ to a unique partition $\tilde{P}$. Consequently, the map from $P$ to $\tilde{P}$ is a bijection, and so the number of $\cH_{i-1}$-compatible partitions with $P(e)=j$ is equal for any $j \in \mathcal{T}$.

Observe that because $e \notin \cH_{i-1}$, $e$ can be removed from window $P(e)$ without changing $\cH_{i-1}$. We now separate the argument into two cases, for $j\geq i$ and $j\in A_e^i$. If $j\geq i$, the mapping of $P$ to $\tilde{P}$ does not change $\mathcal{H}_{i-1}$, since window $j$ is not included in the computations determining $\mathcal{H}_{i-1}$. Thus $\tilde{P}$ is trivially $\cH_{i-1}$-compatible. If $j\in A_e^i$, this means that either $\sum_{\ell = z_l^j}^{z_h^j} f(e | L_{\ell}^{j-1})$ was too small, or it was probabilistically ignored because $x_e^j$ was too big. Either way, this means that $e\notin C_j$ and hence $e \notin L_\ell^{j}$ for any $\ell$. Consequently, adding $e$ to window $j$ does not change $\mathcal{H}_{i-1}$. As a result, $\tilde{P}$ is $\cH_{i-1}$-compatible for any $j \in \mathcal{T}$.

For any windows $j,j^\prime \in \mathcal{T}$, we then have 
\begin{equation*}
\begin{split}
\prob{e\in w_j | \mathcal{H}_{i-1}, e\notin \mathcal{H}_{i-1}} &= \frac{\# \text{partitions $P$ with $P(e)=j$ and $P$ is $\cH_{i-1}$-compatible} }{\# \text{partitions $P$ is $\cH_{i-1}$-compatible}} \\ 
&= \frac{\# \text{partitions $P$ with $P(e)=j^\prime$ and $P$ is $\cH_{i-1}$-compatible} }{\# \text{partitions $P$ is $\cH_{i-1}$-compatible}} \\
&= \prob{e\in w_{j^\prime} | \mathcal{H}_{i-1}, e\notin \mathcal{H}_{i-1}}.
\end{split}
\end{equation*}

We now have the ingredients to compute $\prob{e\in C_j | \mathcal{H}_{i-1}} = 1/(\alpha k)$. Any element $e\notin \mathcal{H}_{i-1}$ must appear in some window, so we have 
\begin{equation*}
\begin{split}
    1 &= \sum_{s=1}^{\alpha k} \prob{e\in w_s | \mathcal{H}_{i-1}, e\notin \mathcal{H}_{i-1}} \\
      &= \sum_{s\in \mathcal{T}} \prob{e\in w_s |  \mathcal{H}_{i-1}, e\notin \mathcal{H}_{i-1}} \\
      &= |\mathcal{T}| \prob{e\in w_i | \mathcal{H}_{i-1}, e\notin \mathcal{H}_{i-1}}
\end{split}
\end{equation*}
Since $\prob{e\in C_i | \mathcal{H}_{i-1}} = \prob{e\in w_i | \mathcal{H}_{i-1}, e \notin \mathcal{H}_{i-1}} q_e^i$, we have $\prob{e\in C_i | \mathcal{H}_{i-1}} = \frac{1}{|\mathcal{T}|} \cdot |\mathcal{T}| / (\alpha k) = 1/(\alpha k)$.
\end{proof}

We are now ready to show the approximation guarantees of \Cref{alg:basic-nonmono}. To do this, we borrow the following lemma from Buchbinder \etal~\cite{BFNS14}:
\begin{lemma}[Lemma 2.2~\cite{BFNS14}]
\label{lem:subsample}
Let $f\,:\,2^E \rightarrow \mathbb{R}_+$ be a submodular function. Further, let $R$ be a random subset of $T\subseteq E$ in which every element occurs with probability at most $p$ (not necessarily independently). Then, $\E f(R) \geq (1 - p)f(\emptyset)$.
\end{lemma}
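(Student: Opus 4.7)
Lemma \ref{lem:subsample} is the standard Lemma 2.2 of \cite{BFNS14} (originally due to Feige, Mirrokni, and Vondr\'ak). It is invoked as a black box in our application to Theorem \ref{thm:nonmono-alg}, but to sketch a proof: the plan is to first handle the special case where $R$ is the random subset $T(p)$ that independently includes each element of $T$ with probability exactly $p$, and then reduce the general dependent case to this.

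For the independent case, I would invoke the multilinear extension $F : [0,1]^T \to \mathbb{R}$, defined by $F(\mathbf{x}) := \E_{S \sim \mathbf{x}}[f(S)]$, where $S \sim \mathbf{x}$ denotes the independent Bernoulli product with marginals $\mathbf{x}$. The key structural fact is that for submodular $f$, $F$ is concave along every non-negative direction, since $\partial^2 F / \partial x_i \partial x_j \leq 0$ for all $i \neq j$ and $\partial^2 F / \partial x_i^2 = 0$ by multilinearity. In particular, $\lambda \mapsto F(\lambda \mathbf{1}_T)$ is concave on $[0,1]$, so
$$\E[f(T(p))] = F(p \mathbf{1}_T) \geq (1-p) F(\mathbf{0}) + p F(\mathbf{1}_T) = (1-p) f(\emptyset) + p f(T) \geq (1-p) f(\emptyset),$$
using non-negativity $f(T) \geq 0$ in the last step.

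For the general case of a possibly dependent $R$ with per-element marginals $q_e \leq p$, I would reduce to the independent case via a Feige-Mirrokni-Vondr\'ak-style telescoping argument: couple $R$ and $T(p)$ using shared uniform thresholds $U_e \in [0,1]$ (setting $\{e \in R\} = \{U_e < q_e\}$ and $\{e \in T(p)\} = \{U_e < p\}$) so that $R \subseteq T(p)$ pointwise, then expand $f(T(p)) - f(R) = \sum_{e \in T(p) \setminus R} f(e \mid R \cup \text{(earlier additions)})$ along a uniformly random ordering of $T(p) \setminus R$. Submodularity bounds each expected marginal against a fixed baseline, and combined with non-negativity this yields $\E[f(R)] \geq \E[f(T(p))] \geq (1-p) f(\emptyset)$.

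\textbf{Main obstacle.} The hard part is the extension to arbitrary dependent $R$: without monotonicity, the coupling $R \subseteq T(p)$ does not directly imply $f(R) \geq f(T(p))$, so pointwise the inequality can go in either direction. Recovering the inequality in expectation requires a careful symmetrization over a random permutation of the coupled excess $T(p) \setminus R$, and this is the one place where both submodularity and non-negativity of $f$ must be used in concert.
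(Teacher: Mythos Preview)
The paper does not prove this lemma; it is quoted verbatim from \cite{BFNS14} and used as a black box, so there is no in-paper proof to compare against.

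Your treatment of the independent case via concavity of the multilinear extension along nonnegative directions is correct and standard. The gap is in your reduction of the dependent case to the independent one, which breaks in two separate places.

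First, the coupling you propose does not exist for correlated $R$. Writing $\{e\in R\}=\{U_e<q_e\}$ with \emph{independent} uniforms $U_e$ forces $R$ itself to be a product distribution, contradicting the hypothesis ``not necessarily independently''. Concretely, take $T=\{a,b\}$, $p=\tfrac12$, and let $R$ equal $\{a,b\}$ or $\emptyset$ each with probability $\tfrac12$. Both marginals are $\tfrac12$, yet no coupling can give $R\subseteq T(p)$ almost surely, since $\Pr[R=\{a,b\}]=\tfrac12>\tfrac14=\Pr[T(p)=\{a,b\}]$.

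Second, even if a monotone coupling $R\subseteq T(p)$ were available, the intermediate inequality $\E[f(R)]\ge\E[f(T(p))]$ that your telescoping is meant to deliver is false in general: for any monotone $f$ it goes the other way pointwise. Random ordering and submodularity do not reverse this; there is no ``fixed baseline'' against which the marginals $f(e\mid R\cup\text{earlier})$ are nonpositive in expectation. So the route through $T(p)$ cannot work as stated.

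The proof in \cite{BFNS14} handles the dependent case directly rather than reducing to independent sampling; you should consult that argument for the correct treatment.
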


First, we note that the analysis of \Cref{lem:uniform-gain} up to the application of submodularity in \Cref{eq:monotone-average-gain} still applies, leading to the following observation:
\begin{observation}
\label{lem:uniform-gain-nonmono}
Let $\bcL_i = \{z_l, z_l+1, \ldots, z_h\}$ where $z_l$ and $z_h$ are defined in \Cref{alg:basic}. Conditioned on a history $\cH_i$ and window $i+1$ being active (the event $\cA_{i+1})$,
$$ \sum_{\ell \in \bcL_{i+1}} \E[f(L^{i+1}_{\ell+1}) - f(L^{i}_{\ell}) \mid \cH_{i} , {\cA}_{i+1} ]
\geq \frac{1}{k} \sum_{\ell \in \bcL_{i+1}}  \E[f(O \cup L^{i}_{\ell}) - f(L^{i}_{\ell})\mid \cH_{i}]. $$
\end{observation}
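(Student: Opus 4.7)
The plan is to reuse the proof chain from \Cref{lem:uniform-gain} up to and including the application of submodularity, and to stop one step earlier than in the monotone case: the statement differs from \Cref{lem:uniform-gain} only in that the right-hand side retains $f(O \cup L^{i}_{\ell})$ instead of $f(O)$, which is exactly what the absence of monotonicity forces.

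First, I would verify that the opening inequality
\[
\sum_{\ell \in \bcL_{i+1}} \E\!\left[f(L^{i+1}_{\ell+1}) - f(L^{i}_{\ell}) \,\middle|\, \cH_i, \cA_{i+1}\right] \;\geq\; \max_{e \in A_{i+1}} \sum_{\ell \in \bcL_{i+1}} \E\!\left[f(e \mid L^{i}_{\ell}) \,\middle|\, \cH_i, \cA_{i+1}\right]
\]
still holds for \Cref{alg:basic-nonmono}. If the test on line~\ref{algline:improve-avg-nonmono} passes, the algorithm sets $L^{i+1}_{\ell+1} = L^{i}_{\ell} \cup \{e^\star\}$ for every $\ell \in \bcL_{i+1}$, so the left-hand side equals $\sum_\ell f(e^\star \mid L^{i}_\ell)$ and $e^\star$ is the argmax over $C_{i+1} \supseteq A_{i+1}$ (line~\ref{algline:nonmono-argmax}). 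If the test fails, then $L^{i+1}_{\ell+1} = L^{i}_{\ell+1}$, and the failed inequality on line~\ref{algline:improve-avg-nonmono} directly gives $\sum_\ell [f(L^i_{\ell+1}) - f(L^i_\ell)] \geq \sum_\ell f(e^\star \mid L^i_\ell)$. A straightforward induction using lines~\ref{algline:only-increase}--\ref{algline:store-increase} also maintains the invariant $f(L^i_{\ell+1}) \geq f(L^i_\ell)$ that is implicitly used in both branches.

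Next I would transplant the active-set calculation from the proof of \Cref{lem:uniform-gain}. The sub-sampling in lines~\ref{algline:begin-subsample}--\ref{algline:end-subsample}, together with \Cref{lem:almost-random}, ensures that every element appears in $C_{i+1}$ with probability exactly $p := 1/(\alpha k)$ given $\cH_i$, so taking $A_{i+1} \subseteq C_{i+1}$ with each optimal element sampled independently with probability $p$, we have $\Pr[\cA_{i+1} \mid \cH_i] = 1 - (1-p)^k$ independently of $\cH_i$. Ordering $o_1,\dots,o_k$ by decreasing $\sum_\ell f(o_j \mid L^i_\ell)$, the argmax bound gives a weighted sum with weights $p(1-p)^{j-1}/(1-(1-p)^k)$; Chebyshev's sum inequality (exactly as in \Cref{eq:monotone-average-gain}) replaces these weights by the uniform $1/k$, yielding $\frac{1}{k}\sum_{\ell,j} \E[f(o_j \mid L^i_\ell) \mid \cH_i, \cA_{i+1}]$. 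Submodularity then gives $\sum_j f(o_j \mid L^i_\ell) \geq f(O \cup L^i_\ell) - f(L^i_\ell)$, producing the desired right-hand side under conditioning on $\cA_{i+1}$. Finally, because $L^i_\ell$ is $\cH_i$-measurable and $\cA_{i+1}$ is independent of $\cH_i$ by construction, the conditioning on $\cA_{i+1}$ can be dropped on the right-hand side, matching the statement verbatim.

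The main delicate point will be the first step: verifying that the argmax--improvement bound survives the non-monotone update rule. Unlike the monotone case, adding $e^\star$ may \emph{decrease} $f$ on some levels, so one must carefully juggle the two cases of line~\ref{algline:improve-avg-nonmono} and rely on the per-window nesting invariant established by lines~\ref{algline:only-increase}--\ref{algline:store-increase}. Once that step is in place, the remaining manipulations (Chebyshev, submodularity, decoupling from $\cA_{i+1}$) are essentially a line-by-line transcription of the monotone proof, which is exactly what the authors' remark ``the analysis of \Cref{lem:uniform-gain} up to the application of submodularity\ldots\ still applies'' is exploiting.
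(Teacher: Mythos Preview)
Your proposal is correct and follows exactly the route the paper intends: the authors' ``proof'' of this observation is literally the one-line remark that the chain of inequalities in~\eqref{eq:monotone-average-gain} (argmax bound, weighted-sum lower bound, Chebyshev, submodularity, then dropping the $\cA_{i+1}$-conditioning by independence) carries over verbatim, stopping just before the monotonicity step. You have simply unpacked that remark with more care than the paper itself, including the case split on line~\ref{algline:improve-avg-nonmono} and the per-window nesting invariant from lines~\ref{algline:only-increase}--\ref{algline:store-increase}.
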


Now we relate the value of $f(O \cup L^i_{\ell})$ to $f(O)$. As in Buchbinder et al.~\cite{BFNS14}, this will involve showing that no element of the ground set is included into any of the levels $L^i_\ell$ with too high of a probability.

\begin{lemma}
\label{lem:non-monotone-degradation}
For every $i$ and every $\ell$, the $\E [ f(O \cup L_\ell^{i}) ] \geq (1-1/(\alpha k))^i f(O)$ for all $\ell \leq k$.
\end{lemma}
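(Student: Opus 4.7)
The plan is to derive this bound by applying the subsampling lemma of Buchbinder et al.~(\Cref{lem:subsample}) to the submodular function $g(S) := f(S \cup O)$, for which $g(\emptyset) = f(O)$ and submodularity is inherited from $f$. If we can show that for each fixed $e \in E$,
\begin{equation*}
\prob{e \in L_\ell^i} \leq 1 - (1 - 1/(\alpha k))^i,
\end{equation*}
then treating $L_\ell^i$ as a random subset of $E$ whose marginal inclusion probabilities are each at most $p := 1 - (1-1/(\alpha k))^i$, \Cref{lem:subsample} immediately yields $\E[f(O \cup L_\ell^i)] = \E[g(L_\ell^i)] \geq (1 - p)\,g(\emptyset) = (1-1/(\alpha k))^i f(O)$, as desired.

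The main work, then, is bounding the per-element inclusion probability. An element $e$ can enter $L_\ell^i$ only by having been selected as $e^\star$ in some window $j \leq i$: once selected it is inserted into levels in $[z_l^j, z_h^j]$ and may subsequently be copied between neighboring levels by the smoothing step, but it cannot appear in any $L_\ell^i$ without first having been chosen as $e^\star$. Let $E_j$ denote the event that $e$ is the selected $e^\star$ in window $j$. The event $E_j$ requires $e \in C_j$, and by \Cref{lem:almost-random} we have $\prob{e \in C_j \mid \cH_{j-1}} = 1/(\alpha k)$ uniformly over every history $\cH_{j-1}$. Since the events $\bar E_1, \ldots, \bar E_{j-1}$ are determined by $\cH_{j-1}$, we can chain the conditional probabilities:
\begin{equation*}
\prob{\bigcap_{j=1}^i \bar E_j} \;=\; \prod_{j=1}^i \prob{\bar E_j \mid \bar E_1, \ldots, \bar E_{j-1}} \;\geq\; (1 - 1/(\alpha k))^i,
\end{equation*}
which gives $\prob{e \in L_\ell^i} \leq \prob{\bigcup_{j=1}^i E_j} \leq 1 - (1-1/(\alpha k))^i$.

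The main obstacle is ensuring the chaining produces the multiplicative $(1-1/(\alpha k))^i$ factor rather than the weaker additive union bound $i/(\alpha k)$, which would not be strong enough for the final $1/e$-approximation guarantee. This is where the \emph{uniformity} of \Cref{lem:almost-random} across arbitrary histories is essential: it allows us to marginalize over $\cH_{j-1}$ conditional on $\bar E_1, \ldots, \bar E_{j-1}$ (which are $\cH_{j-1}$-measurable) without weakening the bound. Once the per-element bound is established, the final inequality $\E[f(O \cup L_\ell^i)] \geq (1-1/(\alpha k))^i f(O)$ is an immediate consequence of \Cref{lem:subsample} applied to $g$.
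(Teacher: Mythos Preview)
Your proposal is correct and follows essentially the same approach as the paper: bound each element's inclusion probability by $1-(1-1/(\alpha k))^i$ via the uniform conditional bound of \Cref{lem:almost-random} chained over windows, then apply \Cref{lem:subsample} to $g(S)=f(S\cup O)$. The only cosmetic difference is that the paper phrases the chaining as an induction on $\prob{e\notin\cH_i}$ whereas you unfold it as a product of conditional probabilities $\prob{\bar E_j\mid \bar E_1,\dots,\bar E_{j-1}}$; these are the same argument.
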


\begin{proof}
To be inserted into a partial solution $L^{i}_\ell$, an element $e$ must appear as part of the candidate set $C_i$ in \Cref{alg:basic-nonmono} or have appeared in $\cH_{i-1}$. As shown in \Cref{lem:almost-random}, when conditioned on $e$ not appearing in $\cH_{i-1}$, $e$ appears in $C_i$ with probability exactly $1/(\alpha k)$. Thus $\prob{e\in \cH_i \given e \notin \cH_{i-1}} \leq 1/(\alpha k)$. By induction, we have 
\begin{align*}
\prob{e\notin \cH_i} &= \prob{e \notin \cH_i  \given e \notin \cH_{i-1}} \prob{e \notin \cH_{i-1}} \\
& \geq \left(1 - \frac{1}{\alpha k}\right) \prob{e \notin \cH_{i-1}} \\
& \geq \left(1 - \frac{1}{\alpha k}\right)^i
\end{align*}
where we assume by induction that $ \prob{e \notin \cH_{i-1}} \geq \left(1 - \frac{1}{\alpha k}\right)^{i-1}$.
After $i$ windows, any particular element of the input is in $\cH_i$ (and $L^{i}_\ell$ for any $\ell$) with probability at most $1-(1-1/(\alpha k))^{i}$.

Define the submodular function $g(S) = f(S \cup O)$. By \Cref{lem:subsample} and the reasoning above, 
\begin{equation*}
\E [g(L_\ell^{i})] \geq (1-1/(\alpha k))^i f(O). \qedhere
\end{equation*}
\end{proof}

\begingroup
\def\thetheorem{\ref{thm:nonmono-alg}}
\begin{theorem}
\Cref{alg:basic-nonmono} obtains a $(1/e-\eps)$-approximation for maximizing a non-monotone function $f$ with respect to a cardinality constraint.
\end{theorem}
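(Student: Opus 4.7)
The plan is to mirror the analysis of \Cref{lem:greedy-analysis} (the monotone case), but substitute the two non-monotone ingredients captured by \Cref{lem:uniform-gain-nonmono} and \Cref{lem:non-monotone-degradation}. These play the same role as in the offline \textsc{RandomGreedy} analysis of Buchbinder \etal~\cite{BFNS14}: the greedy improvement is bounded in terms of $f(O \cup L^i_\ell)$ rather than $f(O)$, and this quantity degrades over time because no element enters $L^i_\ell$ with probability more than $1 - (1 - 1/(\alpha k))^i$.

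First I would reuse the interval-averaging framework from the monotone proof: track $\bar F_i := \frac{1}{|\cL_i|} \sum_{\ell \in \cL_i} \E[f(L^i_\ell)]$ over the same interval $\cL_i := [Z_i \pm (\alpha + 1)\sigma]$ with $\sigma = 20\sqrt{k \log k}$. \Cref{lem:active-est} tells us $Z_i$ concentrates around $\bar Z_i = i/\alpha - \Theta(i/\alpha^2)$ to within $\pm \sigma$ w.h.p., so we can condition on this event at negligible cost. The crucial adaptation is how we justify the shift from $\cL_i$ to $\cL_{i+1}$: in the monotone case the argument used the universal nesting of \Cref{lem:subset-property}, but here values inside $[z_l^{i+1}, z_h^{i+1}]$ may decrease. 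We instead invoke \Cref{lem:nest-reduced}, which still guarantees $f(L^i_\ell) \le f(L^{i+1}_{\ell + 1})$ for the $O(\sigma)$ boundary levels $\ell \notin [z_l^{i+1}, z_h^{i+1}]$ that are actually affected by the shift.

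Next I would combine \Cref{lem:uniform-gain-nonmono}, \Cref{lem:non-monotone-degradation}, and the active-window probability $\Pr[\cA_{i+1}] \in [1 - e^{-1/\alpha}, 1/\alpha]$ to derive the recursion
\[
\bar F_{i+1} \geq \left(1 - \tfrac{1}{\alpha k}\right) \bar F_i + \tfrac{1 - e^{-1/\alpha}}{k} \left(1 - \tfrac{1}{\alpha k}\right)^{i} f(O) - O\!\left(\tfrac{1}{\alpha^2 k}\right) f(O),
\]
exactly as in the monotone derivation (using $\Pr[\cA_{i+1}] \leq 1/\alpha$ in the $\bar F_i$ coefficient and $\Pr[\cA_{i+1}] \geq 1 - e^{-1/\alpha}$ in the $f(O)$ coefficient). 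Unrolling this telescopes to
\[
\bar F_{\alpha k} \geq \alpha(1 - e^{-1/\alpha}) \left(1 - \tfrac{1}{\alpha k}\right)^{\alpha k - 1} f(O) - O\!\left(\tfrac{1}{\alpha}\right) f(O) \geq \left(\tfrac{1}{e} - O\!\left(\tfrac{1}{\alpha}\right)\right) f(O),
\]
using $\alpha(1 - e^{-1/\alpha}) = 1 - O(1/\alpha)$ and $(1 - 1/(\alpha k))^{\alpha k - 1} = 1/e + O(1/(\alpha k))$. Since the algorithm returns $\argmax_\ell f(L^{\alpha k}_\ell)$, its expected value is at least $\bar F_{\alpha k}$. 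Setting $\alpha = \Theta(1/\eps)$ yields the $(1/e - \eps)$ guarantee; the $O(k/\eps)$ memory bound transfers from \Cref{alg:basic} since $H$ still grows by at most one element per window and the $L^i_\ell$ are reconstructable from $H$.

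The main obstacle will be justifying the recursion step in the presence of two simultaneous subtleties: (a) correctly handling the conditioning on $\cA_{i+1}$ together with the level shift using the weakened nesting of \Cref{lem:nest-reduced}, and (b) verifying that the on-the-fly sub-sampling in lines \ref{algline:begin-subsample}--\ref{algline:end-subsample} does not spoil the probabilistic setup inherited from \Cref{sec:main-analysis}. Point (b) is precisely what \Cref{lem:almost-random} delivers: each element appears in $C_i$ with probability exactly $1/(\alpha k)$ conditioned on any history, so the active-set definition from \Cref{def:active-window} and the identity $\Pr[\cA_{i+1}] = 1 - (1 - 1/(\alpha k))^k$ carry over verbatim to the non-monotone algorithm. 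The degradation bound of \Cref{lem:non-monotone-degradation}, which is the only other genuinely new ingredient, follows from \Cref{lem:subsample} applied to the submodular function $g(S) = f(S \cup O)$ together with the $1/(\alpha k)$ inclusion probability just verified.
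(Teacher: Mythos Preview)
Your proposal is correct and follows essentially the same route as the paper: the interval-averaging framework, the use of \Cref{lem:nest-reduced} for boundary levels, the combination of \Cref{lem:uniform-gain-nonmono} with \Cref{lem:non-monotone-degradation}, and the resulting one-step recursion all match. The only cosmetic differences are that the paper (i) stops the recursion at $\tau = \alpha(k-\alpha\sigma-2\sigma)$ rather than $\alpha k$ so that $\cL_\tau \subseteq [0,k]$ is guaranteed without further argument, and (ii) phrases the unrolling as an explicit induction on the closed form $\bar f_i \ge \tfrac{i}{\alpha k}(1-\tfrac{1}{\alpha k})^{i-1}f(O) - O(\tfrac{i}{\alpha^2 k})f(O)$; your telescoping computation is equivalent. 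One small omission: your final error term should read $O(1/\alpha + \alpha\sigma/k)$ rather than $O(1/\alpha)$, which is why the paper (like the monotone case) needs $\eps = \omega((\log k/k)^{1/4})$ for the choice $\alpha = \Theta(1/\eps)$ to give a clean $(1/e-\eps)$ bound.
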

\addtocounter{theorem}{-1}
\endgroup

\begin{proof}
Let $Z_i = \sum_{j=1}^i \id{\cA_i}$ be a random variable measuring the number of active windows up to window $i$. 
We follow an analysis similar to \Cref{lem:greedy-analysis}. For $\sigma:= 20\sqrt{k\log k}$, \Cref{lem:active-est} shows that w.h.p. ($1-\poly(1/k)$),
\begin{align}\label{eq:Zi-good-nonmono}
\forall i \;\;\; Z_i \in [\bZ_i \pm \sigma].
\end{align} 
By the definition of $O$, $f(L_{\ell}^i) \leq f(O)$ for any $\ell$ and $i$ since $|L_{\ell}^i| \leq k$. Again, since~\eqref{eq:Zi-good-nonmono} fails only with probability $\poly(1/k)$, we can condition on it while only having negligible effect of $\pm \poly(1/k)f(O)$ on $\E[f(L^{i}_{\ell})]$ (for any $i,\ell$). For the rest of the proof, the inequalities will hold up to a $\pm \poly(1/k)f(O)$ term (which has no effect on the final asymptotic error guarantee).

Let $\bcL_i := [\bZ_i \pm \alpha \sigma]$
and $\cL_i := [Z_i \pm (\alpha \sigma +\sigma)]$. To keep both non-negative, we denote $i' := i - \alpha(\alpha \sigma + \sigma) - \Theta(1)$ and consider the contribution from $i$ s.t.~$i' \geq 0$. 

In the non-monotone case, our goal will be to argue that:
\begin{equation}
\label{eq:nonmono-bound}
\frac{1}{|\cL_{i}|} \sum_{\ell \in \cL_{i}}\E[ f(L^{i}_{\ell}) \mid {\cA}_{i+1} ]  \geq  \frac{i+1}{\alpha k}\left(1-\frac{1}{\alpha k}\right)^{i}f(O) - O\left(\frac{i+1}{\alpha^2 k} \right) f(O).
\end{equation}
In particular, for $\tau  := \alpha (k-\alpha \sigma -2\sigma)$,
we have that 
$$\frac{1}{|\cL_i|} \sum_{\ell \in \cL_{\tau}}  \E[f(L^{\tau}_{\ell})] \geq \left(1/e-O\left(\frac{1}{\alpha}+ \frac{\alpha \sigma}{k} \right)\right)f(O).$$
Since the solutions $L_\ell^i$ increase in value as $\ell$ increases, we also have $L_k^\tau \geq \left(1/e-O\left(\frac{1}{\alpha}+ \frac{\alpha \sigma}{k} \right)\right)f(O).$

We now prove \Cref{eq:nonmono-bound}. To avoid cluttering the notation, every expectation in the equation below is conditioned on $\cH_{i-1}$. The reasoning below follows along the same lines as the monotone case. Roughly speaking, $\cL_i$ and $\bcL_i$ largely overlap, so their averages are close up to $1/poly(k)$ factors. However, since $\bcL_i$ is deterministically defined, this allows us to apply \Cref{lem:uniform-gain-nonmono}. For $i$ s.t.~$i'\geq 0$, we have:
\begin{align*}
\frac{1}{|\cL_{i+1}|}& \sum_{\ell \in \cL_{i+1}}\E[ f(L^{i+1}_{\ell}) \mid {\cA}_{i+1} ]  
= \frac{1}{|\cL_i|} \sum_{\ell \in \cL_i} \E[f(L^{i+1}_{\ell+1}) \mid {\cA}_{i+1} ]  && \text{(Def.~of $\cL_i$)}\\
  = & \frac{1}{|\cL_i|} \left(\sum_{\ell \in \cL_i \setminus \bcL_{i+1}} \E[f(L^{i+1}_{\ell+1}) \mid {\cA}_{i+1} ]
+ \sum_{\ell' \in \bcL_{i+1}}\E[f(L^{i+1}_{\ell'+1}) \mid {\cA}_{i+1} ]\right) &&\text{(Eq.~\eqref{eq:Zi-good-nonmono})}\\
= & \frac{1}{|\cL_i|} \left(\sum_{\ell \in \cL_i \setminus \bcL_{i+1}} \E[f(L^{i}_{\ell}) \mid {\cA}_{i+1} ]
+ \sum_{\ell' \in \bcL_{i+1}}\E[f(L^{i+1}_{\ell'+1}) \mid {\cA}_{i+1} ]\right) &&\text{(\Cref{lem:nest-reduced})}\\
\geq &  \frac{1}{|\cL_i|} \left(\sum_{\ell \in \cL_i} \E[f(L^{i}_{\ell}) \mid {\cA}_{i+1} ] 
+ \sum_{\ell' \in \bcL_{i+1}} \frac{1}{k} \E[(f(O \cup L_{\ell'}^i)-f(L^{i}_{\ell'})) \mid {\cA}_{i+1} ] \right) &&\text{(\Cref{lem:uniform-gain-nonmono})}\\
= &  \frac{1}{|\cL_i|} \left(\sum_{\ell \in \cL_i} \E[f(L^{i}_{\ell}) ] 
+ \sum_{\ell' \in \bcL_{i+1}} \frac{1}{k} \E[(f(O \cup L_{\ell'}^i)-f(L^{i}_{\ell'})) ] \right) &&\text{(${\cA}_{i+1}$ indep. of $L^{i}_{\ell} \;\; \forall \ell$)}\\
\geq &  \frac{1}{|\cL_i|} \left(\sum_{\ell \in \cL_i} \E[f(L^{i}_{\ell}) ] 
+  \sum_{\ell' \in \bcL_{i+1}} \frac{1}{k} \E[(1-(\alpha k)^{-1})^i f(O)-f(L^{i}_{\ell'})) ] \right) &&\text{(\Cref{lem:non-monotone-degradation})}\\
\geq & \frac{1}{|\cL_i|} \sum_{\ell \in \cL_i} \left(\frac{(1-(\alpha k)^{-1})^i}{k}f(O)+\frac{k-1}{k}\E[f(L^{i}_{\ell})]\right) - O\left(\frac{\sigma}{k|\cL_i|}\right)f(O) &&\text{($|\cL_i|-|\bcL_{i+1}| = 2\sigma$)}\\
= & \frac{(1-(\alpha k)^{-1})^i}{k}f(O)+\left(1-\frac{1}{k}\right)\frac{1}{|\cL_i|} \sum_{\ell \in \cL_{i}} \E[ f(L^{i}_{\ell})] - O\left(\frac{\sigma}{k|\cL_i|}\right)f(O).
\end{align*} 
The main difference between the non-monotone and monotone case is the application of \Cref{lem:non-monotone-degradation} (to bound the maximum amount an element may hurt the solution). 
 
To address the conditioning on $\cA_{i+1}$, consider
\begin{align*}
 \frac{1}{|\cL_{i+1}|} & \sum_{\ell \in \cL_{i+1}}\E[f(L^{i+1}_{\ell})]   =  \frac{1-\Pr[{\cA}_{i+1}]}{|\cL_{i+1}|}\sum_{\ell \in \cL_{i+1}}\E[f(L^{i+1}_{\ell}) \mid \neg{\cA}_{i+1}] + \frac{\Pr[{\cA}_{i+1}]}{|\cL_{i+1}|}\sum_{\ell \in \cL_{i+1}}\E[f(L^{i+1}_{\ell}) \mid {\cA}_{i+1} ] \\
  \geq &  \left(1-\frac{\Pr[{\cA}_{i+1}]}{k}\right)\frac{1}{|\cL_{i}|}\sum_{\ell \in \cL_i} \E[f(L^{i}_{\ell})] + \frac{(1-(\alpha k)^{-1})^i}{k} \Pr[{\cA}_{i+1}]  f(O) - O\left(\frac{\sigma}{k|\cL_i|} \right)  \Pr[{\cA}_{i+1}]  f(O) \\
  \geq &  \left(1-\frac{1}{\alpha k}\right)\frac{1}{|\cL_{i}|}\sum_{\ell \in \cL_i} \E[f(L^{i}_{\ell})] + \frac{(1-(\alpha k)^{-1})^i}{\alpha k}  f(O) - O\left(\frac{1}{\alpha^2 k} \right)f(O).
\end{align*}
The second line requires $\E[f(L^{i+1}_{\ell}) \mid \neg{\cA}_{i+1}] \geq \E[f(L^{i}_{\ell}) \mid \neg{\cA}_{i+1}]$. Fortunately, this is true as \Cref{alg:basic-nonmono} only increases the average of values in $\cL_i$: levels $\bcL_i \subseteq \cL_i$ are only updated if an increase is detected on line~\ref{algline:improve-avg-nonmono}.
The last line follows from plugging in $|\cL_i| = 2 (\alpha+1) \sigma$ and $1- e^{-1/\alpha} \leq \Pr[\cA_{i+1}] = 1 - (1-\frac{1}{\alpha k})^k \leq \frac{1}{\alpha} $. Now we may remove the (implicit) conditioning on $\cH_{i-1}$, as everything has been conditioned on it so far.

Let $\bar{f}_i = \frac{1}{|\cL_{i}|} \sum_{\ell \in \cL_{i}} \E[f(L^{i}_{\ell})]$. Now we show by induction that $\bar{f}_{i} \geq \frac{i}{\alpha k} \left(1-\frac{1}{\alpha k}\right)^{i-1}f(O) - O\left(\frac{i}{\alpha^2 k} \right) f(O)$. The base case is clearly true, as the first window has probability $1/(\alpha k)$ of catching any optimal element. By our analysis from above:
\begin{align*}
\bar{f}_{i+1}  &\geq  \left(1-\frac{1}{\alpha k}\right)\bar{f}_i + \frac{(1-(\alpha k)^{-1})^i}{\alpha k}  f(O) - O\left(\frac{1}{\alpha^2 k} \right) f(O) \\
&\geq  \left(1-\frac{1}{\alpha k}\right)\left(\frac{i}{\alpha k}\left(1-\frac{1}{\alpha k}\right)^{i-1}- O\left(\frac{i}{\alpha^2 k} \right) \right)f(O) + \frac{\left(1-\frac{1}{\alpha k}\right)^i}{\alpha k}  f(O) - O\left(\frac{1}{\alpha^2 k} \right) f(O) \\
&\geq  \frac{i+1}{\alpha k}\left(1-\frac{1}{\alpha k}\right)^{i}f(O) - O\left(\frac{i+1}{\alpha^2 k} \right) f(O).
\end{align*}

In particular, for $\tau  := \alpha (k-\alpha \sigma -2\sigma)$,
we have that 
$$\frac{1}{|\cL_i|} \sum_{\ell \in \cL_{\tau}}  \E[f(L^{\tau}_{\ell})] \geq \left(1/e-O\left(\frac{1}{\alpha}+ \frac{\alpha \sigma}{k} \right)\right)f(O).$$
Setting $\alpha = \Theta(1/\eps)$ where $\eps = \omega\left(\frac{\log k}{k}\right)^{1/4}$ gives the desired result.
\end{proof}

We remark that \Cref{alg:basic-nonmono} also achieves a guarantee of $1-1/e-\eps$ for the monotone case, as \Cref{lem:uniform-gain} and \Cref{lem:greedy-analysis} both still apply to \Cref{alg:basic-nonmono} when $f$ is monotone. The main difference between the two is the sub-sampling procedure (lines~\ref{algline:begin-subsample}--\ref{algline:end-subsample}), which increases the running time of the algorithm.

\section{Experiments}
All code can be found at \url{https://github.com/where-is-paul/submodular-streaming} and all datasets can be found at \url{https://tinyurl.com/neurips-21}.

\subsection*{Experiments for non-monotone submodular streaming}
For non-monotone submodular functions, we compare against the offline random greedy algorithm of Buchbinder et al.~\cite{BFNS14}. 

\paragraph{Datasets}
Our datasets are drawn from diversity maximization tasks described in \cite{LSKWP21}. Here, given an $n\times n$ matrix $L$, the task is to find a subset $S$ of $k$ indices such that $\log \det(L_S)$ is maximized.\footnote{$L_S$ is the $k\times k$ submatrix formed by taking entries from $L$ with rows and columns in $S$.} Since our non-monotone algorithm is significantly more expensive to run than our monotone algorithm, we created substreams from these datasets by sampling a consecutive run of 1024 stream elements at random and then permuting them. As in the monotone case, for each data set we run the standard offline algorithm (random greedy) and compare against our streaming algorithm with $k$ varying from $1$ to $10$. \Cref{tab:data-nonmono} describes the data sources. \Cref{fig:experiments-nonmono} shows the performance of the three algorithms on each data set.

\begin{table}[ht]
\centering
\begin{tabular}{l|l|l}
{\bf dataset}     & {\bf source}                                 \\ \hline
gowalla   & gowalla geolocation data     \\
yahoo!     & yahoo front page visit data \\
bing     & anonymized search data from the bing search engine \\
\end{tabular}
\caption{Description of data sources for non-monotone experiments.}
\label{tab:data-nonmono}
\end{table}

\begin{figure}[ht]
    \centering
    \includegraphics[width=0.32\textwidth]{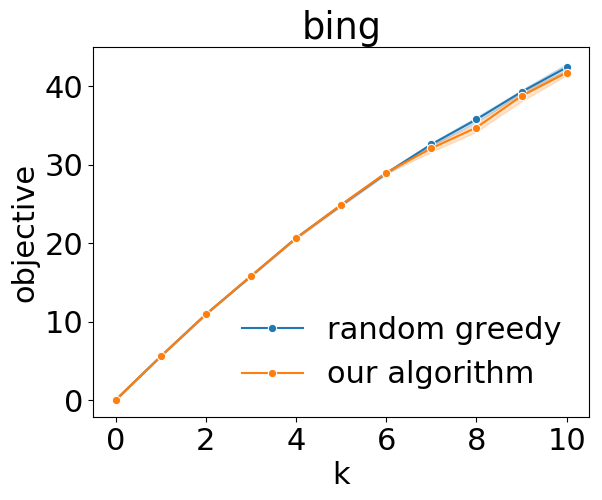}
    \includegraphics[width=0.32\textwidth]{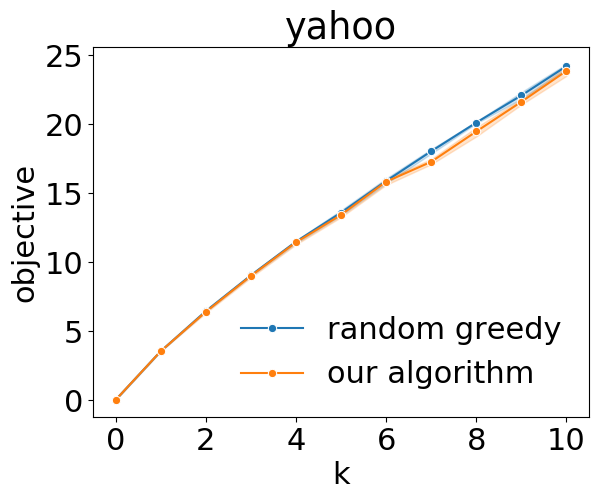}
    \includegraphics[width=0.32\textwidth]{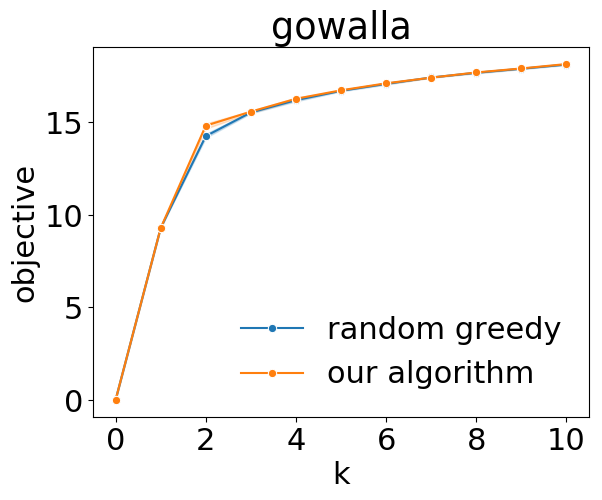}
    \caption{Performance of {\color{blue} standard greedy}, {\color{orange} random greedy}, and {\color{teal} our algorithm} on each data set (averaged across 10 runs, shaded regions represent variance across different random orderings). 
    }
    \label{fig:experiments-nonmono}
\end{figure}

\end{document}